\documentclass[final]{tlp}

\usepackage{aopmath}
\usepackage{multirow}
\usepackage{rotating}
\usepackage{fancyvrb}
\usepackage{url}
\usepackage{psfrag}
\usepackage{alltt}
\usepackage{amssymb}

\submitted{29 January 2008}
\revised{1 March 2009}
\accepted{2 April 2009}

\newcommand{\ogt}{\symbol{123}}
\newcommand{\cgt}{\symbol{125}}

\newtheorem{definition}{Definition} 

\newcommand{\hc}{\mbox{hasChild}}
\newcommand{\clever}{\mbox{Clever}}
\newcommand{\s}{\mbox{Successful}}
\newcommand{\h}{\mbox{Happy}}

\newdimen\boxfigwidth 

\def\magicbox{\begingroup
  \boxfigwidth=\hsize
  \advance\boxfigwidth by -2\fboxrule
  \advance\boxfigwidth by -2\fboxsep
  \setbox4=\vbox\bgroup\hsize\boxfigwidth
  \hrule height0pt width\boxfigwidth\smallskip
  \linewidth=\boxfigwidth
}
\def\endmagicbox{\smallskip\egroup\fbox{\box4}\endgroup}

\begin{document}
\bibliographystyle{acmtrans}
\title[Description Logic Reasoning in Prolog]{Efficient Description Logic Reasoning in Prolog:\\The DLog system}

\author[Gergely Luk\'{a}csy and P\'{e}ter Szeredi]
{GERGELY LUK\'{A}CSY and P\'{E}TER SZEREDI \\
Department of Computer Science and Information Theory\\
Budapest University of Technology and Economics\\
Budapest, Magyar tud\'{o}sok k\"{o}r\'{u}tja 2. H-1117, Hungary \\
E-mail: \{lukacsy,szeredi\}@cs.bme.hu
}

\pubyear{2009}
\maketitle

\begin{abstract}
Traditional algorithms for description logic (DL) instance retrieval
are inefficient for large amo\-unts of underlying data. As description
logic is becoming more and more popular in areas such as the Semantic Web and
information integration, it is very important to have systems which can
reason efficiently over large data sets.

In this paper we present an approach to transform description logic
axioms, formalised in the $\mathcal{SHIQ}$ DL language, into a Prolog
program under the Unique Name Assumption. This transformation is
performed with no knowledge about particular individuals: they are
accessed dynamically during the normal Prolog execution of the
generated program. This technique, together with the top-down Prolog
execution, implies that only those pieces of data are accessed which
are indeed important for answering the query. This makes it possible to store
the individuals in a database instead of memory, which results in
better scalability and helps using description logic ontologies
directly on top of existing information sources.

The transformation process consists of two steps: (1) the DL axioms are
converted to first-order clauses of a restricted form, (2) a Prolog program
is generated from these clauses. Step (2), which is the focus of the
present paper, actually works on more general clauses than those obtainable
by applying step (1) to a $\mathcal{SHIQ}$ knowledge base.

We first present a base transformation, the output of which can either be
executed using a simple interpreter, or further extended to executable
Prolog code. We then discuss several optimisation techniques, applicable to
the output of the base transformation. Some of these techniques are
specific to our approach, while others are general enough to be interesting
for description logic reasoner implementors not using Prolog.

We give an overview of \emph{DLog}, a DL reasoner in Prolog,
which is an implementation of the techniques outlined above.  We evaluate
the performance of DLog and compare it to some widely used description
logic reasoners, such as RacerPro, Pellet, and KAON2.
\end{abstract}
\begin{keywords}
description logic, logic programming, resolution, large data sets, open world
\end{keywords}

\section{Introduction}

Description Logics (DLs) are becoming widespread thanks to the recent
trend of using semantics in various systems and applications. As an
example, in the Semantic Web idea, semantics is captured in the form
of expressive ontologies, described in the OWL Web Ontology Language 
\cite{owlspec} which is intended to be the standard knowledge
representation format of the Web. The OWL DL fragment of this language
is mostly based on the $\mathcal{SHIQ}$ DL language.
Other application fields of description logics include natural
language processing \cite{885762}, medical systems \cite{SGHB02b},
information integration \cite{calvanese98description} and complex
engineering and computer technology systems \cite{Configuration}.

Similarly to \cite{motik06PhD}, the motivation for our work comes from
the realisation that description logics are, or soon will be used over
large amounts of data. In an information integration system, for
example, huge amounts of data are stored in external
databases. On the Web, as another example, we already have tremendous
amounts of meta-information which will significantly increase as the
Semantic Web vision becomes more and more tangible. Obviously, these
information sources cannot be stored directly in memory.

Thus, we are interested in querying description logic concepts where the
actual data set~-- the so called ABox~-- is bigger than the available
computer memory. We found that most existing description logic reasoners
are not suitable for this task, as these are not capable of handling ABoxes
stored externally, e.g.\ in databases. This is not a technical problem:
most existing algorithms for querying description logic concepts need to
examine the whole ABox to answer a query which results in scalability
problems and undermines the point of using databases. Because of this, we
started to investigate techniques which allow the separation of the
inference algorithm from the data storage.

We have developed a solution, where the inference algorithm is divided
into two phases. First we create a \emph{query-plan}, in the form of a
Prolog program, from the actual
DL knowledge base, without any knowledge of the content of the
underlying data set.  Subsequently, this query-plan can be run on real
data, to obtain the required results. 

Naturally, the quality of the query-plan greatly affects the
performance of the execution. We have applied several optimisations to
make the generated Prolog program more efficient. These ideas are
incorporated in the reference implementation system called DLog,
available at \texttt{http://dlog-reasoner.sourceforge.net}. 

From the Description Logic point of view, DLog is an ABox reasoning engine
for the full
$\mathcal{SHIQ}$ language. It deals with number restrictions as well
as with all other modelling constructs present in
$\mathcal{SHIQ}$. DLog maintains the Unique Name Assumption and
assumes that the ABox is consistent (see
Section~\ref{general} for more details).

The paper is structured as follows. Section~\ref{preliminaries} discusses
the background of the paper,
introducing the field of Description Logic and giving a summary of theorem proving approaches
for DLs. In Section~\ref{shiqreasoning} we start with two motivating examples
to demonstrate the non-trivial nature of the translation of
description logic axioms to Prolog. We then present a complete, but
inefficient solution for generating Prolog programs from
$\mathcal{SHIQ}$ knowledge bases. Section~\ref{generation} discusses
several optimisation schemes which significantly increase the
efficiency of execution. Section~\ref{dlog} presents the architecture
and the implementation details of the DLog system. In
Section~\ref{evaluation} we analyse the performance of DLog,
comparing it with other reasoning systems. Finally, in
Section~\ref{future} and \ref{conclusion}, we conclude with the discussion of future
work and the summary of our results.

\section{Background and related work}
\label{preliminaries}

In this section we first provide a brief introduction to Description
Logics, then we give an overview of traditional, tableau-based DL
reasoning approaches. Next, we discuss how resolution can be used for
DL inference, and summarise related work on using Logic Programming
for Description Logic reasoning, including our earlier
contributions. Finally, we present the Prolog Technology Theorem
Proving approach, the techniques of which are used extensively
throughout the paper.

\subsection{Description Logics}
\label{subsection:dl}

Description Logics (DLs) \cite{dlhb} is a family of simple logic
languages used for knowledge representation.  DLs are used for
describing various kinds of knowledge of a specific field as well as
of general nature. The description logic approach uses \emph{concepts}
to represent sets of \emph{objects}, and \emph{roles} to describe
binary relations between concepts.  Objects are the instances
occurring in the modelled application field, and thus are also called
\emph{instances} or \emph{individuals}.

A description logic knowledge base $\textit{KB}$ is a set of DL axioms
consisting of two disjoint parts: the \emph{TBox} and the
\emph{ABox}. These are sometimes referred to as
$\textit{KB}_\mathcal{T}$ and $\textit{KB}_\mathcal{A}$. The TBox
(terminology box), in its simplest form, contains terminology axioms
of the form $C \sqsubseteq D$ (concept $C$ is subsumed by concept $D$). The
ABox (assertion box) stores knowledge about the 
individuals in the world: a concept assertion of the form $C(i)$ denotes
that the \emph{individual name} $i$ is an instance of the concept $C$, while a
role assertion $R(i,j)$ means that individual names $i$ and $j$ are
related through role $R$. Usually one assumes that two different
individual names denote two different individuals (this is the so
called unique name assumption, or simply UNA).

Note the difference between ``individual names'' and
``individuals''. The former are syntactic elements of the DL language,
while the latter are the elements of the modelled domain. To make the
paper easier to read we will sometimes use the phrase ``individual''
instead of ``individual name'', assuming that the context makes it
clear that a syntactic element is being referred to.

Concepts and roles may either be \emph{atomic} (referred to by a
concept name or a role name) or \emph{composite}.  A composite concept
is built from atomic concepts using \emph{constructors}.  The
expressiveness of a DL language depends on the constructors allowed
for building composite concepts or roles. Obviously there is a
trade-off between expressiveness and the complexity of inference.

We use the DL language $\mathcal{SHIQ}$ in this paper. Here, concepts
(denoted by $C$ and $D$) are built from roles (denoted by $R$ and
$S$), atomic concepts, the top and bottom concepts ($\top$ and $\bot$)
using the following constructors: intersection ($C \sqcap D$), union
($C \sqcup D$), negation ($\neg{C}$), value restriction
($\forall{R\ldotp C}$), existential restriction ($\exists{R\ldotp C}$)
and qualified number restrictions ($\geqslant n\,R\ldotp C$ and
$\leqslant n\,R\ldotp C$). The only role constructor in
$\mathcal{SHIQ}$ is the inverse operator, thus roles can take the form
$R_A$ or $R_{A}^{-}$, where $R_A$ is an atomic role. 

The $\mathcal{SHIQ}$ language also allows the use of role subsumption
($R \sqsubseteq S$), role equivalence ($R \equiv S$), and transitivity
axioms ($\mathsf{Trans}(R)$). Note that a role equivalence $R \equiv
S$ can be eliminated by replacing it with the two axioms $R
\sqsubseteq S$ and $S \sqsubseteq R$.  The set of role subsumption
axioms is often called a \emph{role hierarchy}. Each $\mathcal{SHIQ}$
axiom has a straightforward translation in first-order logic (FOL).

An important sub-language of $\mathcal{SHIQ}$ is $\mathcal{ALC}$, where
number restrictions, role axioms and inverse roles are not allowed.

The basic inference tasks concerning the TBox can be reduced to
determining if a given concept $C$ is satisfiable with respect to a
given TBox. 

ABox inference tasks require both a TBox and an ABox. In this paper,
we will deal with two ABox reasoning problems: instance check and
instance retrieval.  In an \emph{instance check} problem, a
\emph{query-concept} $C$ and an individual $i$ is given.  The question
is whether $C(i)$ is entailed by the TBox and the ABox. In an
\emph{instance retrieval} problem the task is to retrieve all
individual names $i$, for which assertion $C(i)$ is entailed by the TBox
and an ABox, for a given query concept $C$.

For more details on Description Logics  we refer the reader to the first two
chapters of \cite{dlhb}. 

\subsection{Reasoning on DLs}

Several techniques have been developed for
ABox reasoning. Traditional ABox reasoning is based on the
\emph{tableau inference} algorithm, which tries to build a model
showing that a given concept assertion is satisfiable. To infer that an
individual $i$ is an instance of a concept $C$, an indirect assumption
$\neg{C}(i)$ is added to the ABox, and the tableau-algorithm is
applied. If this reports inconsistency, $i$ is proved to be an
instance of $C$. The main drawback of this approach is that it cannot
be directly used for high volume instance retrieval, because it would
require checking all instances in the ABox, one by one. 

To make tableau-based reasoning more efficient on large data sets,
several techniques have been developed in recent years, see e.g.\ 
\cite{aboxoptim}. These are used by the state-of-the-art description logic reasoners,
such as RacerPro \cite{HMSW04} or Pellet \cite{pellet}, the two
tableau reasoners used in our performance evaluation in
Section~\ref{evaluation}.

Some DL reasoners pose serious restrictions on the knowledge base
to ensure efficient execution with large amounts of instances.  For
example, \cite{instancestore} suggests a solution called the
\emph{instance store}, where the ABox is stored externally, and is
accessed in a very efficient way. The drawback is that the ABox may
contain only axioms of the form $C(a)$, i.e.\ we cannot make role
assertions.

\subsection{Resolution theorem proving for DLs}
\label{motik}

\cite{voronkov} discuss how a first-order theorem prover, such
as Vampire, can be modified and optimised for reasoning over
description logic knowledge bases. This work, however, mostly focuses
on TBox reasoning.

The paper \cite{shiqresolution} describes a resolution-based inference
algorithm which is not as sensitive to the increase of the ABox size 
as the tableau-based methods. The system KAON2 \cite{motik06PhD} is an
implementation of this approach, providing reasoning services over the
description logic language $\mathcal{SHIQ}$. In
Section~\ref{evaluation} we use KAON2 as one of the systems with which
we compare the performance of DLog.

The basic idea of KAON2 is to first transform a $\mathcal{SHIQ}$
knowledge base into a skolemized first-order clausal form. However,
instead of using direct clausification, first a structural
transformation \cite{plaisted} is applied on the
$\textit{KB}_\mathcal{T}$ axioms. This transformation 
eliminates the nested concept descriptions by introducing new
concepts; the resulting set of first-order clauses is denoted by
$\Xi(\textit{KB})$.  In the next step, basic superposition
\cite{nieuwenhuis95theorem}, a refinement of first-order resolution, is applied to saturate
$\Xi(\textit{KB}_\mathcal{T})$. The resulting set of clauses is
denoted by $\Gamma(\textit{KB}_\mathcal{T})$. Clauses
$\Gamma(\textit{KB}_\mathcal{T}) \union \Xi(\textit{KB}_\mathcal{A})$
are then transformed into a disjunctive datalog program
\cite{eiter97disjunctive} entailing the same set of ground facts as
the initial DL knowledge base. This program is executed using a
disjunctive datalog engine written specifically for KAON2. In this
approach, the saturated clauses may still contain (non-nested)
function symbols which are eliminated by introducing a new constant
$f_{i}$, standing for $f(i)$, for each individual $i$ in the ABox. This
effectively means that KAON2 has to read the \emph{whole content} of
the ABox before attempting to answer any queries.

Although the motivation and goals of KAON2 are similar to ours, unlike
KAON2 (1) we use a pure two-phase reasoning approach (i.e.\ the ABox is not
involved in the first phase) and (2) we translate into Prolog which has
well-established, efficient and robust implementations. More details are
provided in the upcoming sections.

\subsection{Description Logics and Logic Programming}
\label{dlp}

\cite{dlp} introduces the term Description Logic Programming
(DLP), advocating a direct transformation of $\mathcal{ALC}$
description logic concepts into Horn-clauses.  It poses some
restrictions on the form of the knowledge base, to disallow axioms
requiring disjunctive reasoning. As an extension, \cite{hornshiq}
introduces a fragment of the $\mathcal{SHIQ}$ language which can be
transformed into Horn-clauses. This work, however, still poses
restrictions on the use of disjunctions. In \cite{saor} and
\cite{sindice} authors present a semantic search engine which works
on web-scale and builds on the extension of the DLP idea. Further
important work on Description Logic Programming includes
\cite{samuel} and \cite{faithful}.

Another approach of utilising Logic Programming in DL reasoning was
proposed by the research group of the authors of the present
paper. Earlier results of this work have been published in several
conference papers. The first step of our research resulted in
a resolution-based transformation of ABox reasoning problems to Prolog
for the DL language $\mathcal{ALC}$ and an \emph{empty TBox}
\cite{padl06}. As the second step, we examined how ABox reasoning
services can be provided with respect to a \emph{non-empty} TBox: we
 extended our approach to allow ABox inference involving
$\mathcal{ALC}$ 
TBox axioms of a restricted form \cite{dl06}. In
\cite{semantics2006} we presented a system doing almost full
$\mathcal{ALC}$ reasoning, which uses an interpreter based on PTTP
techniques (see Section~\ref{pttp} below).

Zsolt Zombori has extended the saturation
technique of \cite{motik06PhD} so that there are no function symbols in
the resulting first-order clauses \cite{zombori}. The basic idea here
is to use a slightly modified version of the basic superposition, where the
order of certain resolution steps is changed. \cite{zombori} showed that
these modifications do not affect satisfiability and they require a finite
number of additional inference steps, compared to the ``standard'' basic
superposition.

\subsection{Prolog Technology Theorem Proving}
\label{pttp}

The Prolog Technology Theorem Prover approach (PTTP) was suggested by
Mark E.\ Stickel in the late 1980's \cite{stickel92prolog}. PTTP is a
sound and complete approach which builds a first-order theorem prover
on top of Prolog.  This means that an arbitrary set of general clauses
is transformed into a set of Horn-clauses and Prolog execution is used to
perform first-order logic reasoning. Note that PTTP does not support
first-order equality reasoning, but there are extensions of PTTP, such
as the PTTR system (Prolog Technology Term Rewriting), suitable for
this task  \cite{pttr}.

In PTTP, each first-order clause gives rise to a number of Horn-clauses, the
so-called \emph{contrapositives}. A FOL clause takes the form
$\mathop{\bigvee}_{1 \leq i \leq n}L_i$, where $L_i$ are literals (negated
or non-negated atomic predicates). This clause has $n$ contrapositives of
the form $L_k \leftarrow \neg L_1, \ldots, \neg L_{k-1}, \neg L_{k+1},
\ldots, \neg L_n$, for each $1 \leq k \leq n$. Having removed double
negations, the remaining negations are eliminated by introducing new
predicate names for negated literals. For each predicate name $P$ a new
predicate name $\textit{not}\_P$ is introduced, and all occurrences of $\neg{P}(X)$ are
replaced by $\textit{not}\_P(X)$, both in the head and in the body. The link between
the separate predicates $P$ and $\textit{not}\_P$ is created by \emph{ancestor
resolution}, see below.

Note that the use of contrapositives has the effect that each literal of a
FOL clause appears in the head of a Horn clause. This ensures that each
literal can participate in a resolution step, in spite of the restricted
selection rule of Prolog.

The PTTP approach uses ancestor resolution \cite{ancres} to support the
\emph{factoring} inference rule (the replacement of two unifiable literals
by a single most general unifier of the two literals). Ancestor resolution
is implemented in Prolog by building an \emph{ancestor list} which contains
\emph{open} predicate calls (i.e.\ calls which were entered or re-entered,
but have not been exited yet, according to the Procedure-Box model of
Prolog execution \cite{lp}).  Alternatively, an ``ancestor-of'' relation
between goals can be defined as the transitive closure of the ``parent-of''
relationship, where goal $PG$ is the \emph{parent} of the goal $G$, if $PG$
invokes a clause whose body contains $G$. The ancestor list contains all
ancestors of a given goal, usually in the newest-first order.

Ancestor resolution is an inference step checking if the ancestor list
contains a goal which can be unified with the negation of the current
goal. If this is the case, then the current goal succeeds and the
unification with the ancestor element is performed.  Note that in order to
retain completeness, as an alternative to ancestor resolution, one has to
try to prove the current goal using normal resolution, too.

There are two further features in the PTTP approach. First, to avoid
infinite loops, iterative deepening is used instead of the
standard depth-first Prolog search strategy. Second, in contrast with
most Prolog systems, PTTP uses occurs check during unification.

To sum up, PTTP uses five techniques to build a first-order theorem
prover on the top of Prolog: contrapositives, renaming of negated literals,
ancestor resolution, iterative deepening, and occurs check.

\section{DL reasoning in Prolog}
\label{shiqreasoning}

We present a pure two-phase approach to $\mathcal{SHIQ}$ ABox inference. In
the first phase, the $\mathcal{SHIQ}$ axioms are transformed to a Prolog
program. The second phase is the execution of this program. Importantly,
the ABox axioms are not modified by this transformation, and so the ABox
can be stored externally, e.g.\ in a database.

The first phase, the transformation, is itself divided into two stages.
First, the $\mathcal{SHIQ}$ axioms are converted into a set of first-order
clauses of a specific form. The second stage deals with the transformation
of FOL clauses to a Prolog program.

We first summarise some general assumptions and present two motivating
examples. Next, we give an outline of the first stage of the transformation.
Before proceeding to the second stage, we introduce the notion of \emph{DL
clause}, which is a first-order clause satisfying certain requirements.
Each clause produced by the first stage of the transformation
satisfies these requirements, but there are interesting DL clauses
which cannot be derived from a $\mathcal{SHIQ}$ KB.

The second stage takes an arbitrary set of DL clauses and transforms these
to a Prolog program. We first show how the PTTP approach can be specialised
for DL clauses, resulting in a so-called DL program. We then present a
simple interpreter for DL programs. Next, we describe how to extend DL
programs so that they can be directly executed by Prolog, thus making it
possible to compile a $\mathcal{SHIQ}$ KB to an executable Prolog
program. Finally, we show some examples of this complete transformation process.

\subsection{General considerations}
\label{general}

Throughout this paper we assume that (1) different individual names denote
different individuals (Unique Name Assumption) and (2) the ABox is consistent.

Note that in the absence of UNA one may have to perform complex
deductions to determine whether two individuals are distinct. Namely,
the individuals $i_1$ and $i_2$ can be inferred to be different if one
can find an \emph{arbitrary} concept $C$, such that both $C(i_1)$ and
$\neg C(i_2)$ hold. Thus deciding a simple inequality question
potentially requires reading the whole ABox, which makes it impossible to
perform ABox reasoning in a focused way.

Similarly, detecting the inconsistency of an ABox requires checking the
whole content of the ABox. 

As the main advantage of our approach -- the focused nature of
reasoning -- is lost in both cases, we advocate using other approaches
(e.g. tableau algorithms) for checking ABox consistency and answering
ABox queries in the absence of UNA.

We also assume that the ABox is extensionally reduced, i.e.\ beside
roles, it contains only atomic concepts and their negations. An
arbitrary knowledge base can be easily transformed to satisfy this
constraint. First, one has to replace all composite concepts in the
ABox (except for the negated atomic concepts) by new atomic
concepts. Next, one has to extend the TBox with appropriate concept
axioms, which define the newly introduced concept names to be
equivalent to the composite concept they stand for.

In Sections~\ref{shiqreasoning} and \ref{generation} we assume that no
predicate name contains the character \texttt{\_} (underline). This makes
it possible to use prefixes containing an underline (such as
\texttt{not\_}) as names of various auxiliary predicates. This 
restriction does not apply in the DLog system, discussed in Section~\ref{dlog}.

\subsection{Translating by hand: two motivating examples}\label{motivex}

Databases and the \emph{negation as failure} feature of Prolog use the 
\emph{closed world assumption} where
any object which is not known to be an instance of concept $C$ is
treated as an instance of $\neg{}C$. In contrast with this, the
\emph{open world assumption} (OWA) is used in classical logic
reasoning, and thus in DL reasoning as well.  When reasoning under
OWA, one is interested in obtaining statements which hold in all
models of the knowledge base, i.e. those entailed by the knowledge
base.

Figure~\ref{figure:iocaste_kb} shows a famous DL example about the family
of Oedipus and Iocaste, which is often used to demonstrate the difference
between open and closed world reasoning, see e.g.\ \cite{dlhb}.

\begin{figure}[htbp]
\begin{Verbatim}[numbers=left,numbersep=2pt,frame=single,commandchars=\\\{\}]
\(\exists\texttt{hasChild}\ldotp(\texttt{Patricide} \sqcap \exists\texttt{hasChild}\ldotp\neg\texttt{Patricide}) \sqsubseteq \texttt{Ans}\)

hasChild(Iocaste,Oedipus).     hasChild(Iocaste,Polyneikes).
hasChild(Oedipus,Polyneikes).  hasChild(Polyneikes,Thersandros).
Patricide(Oedipus).            \(\neg\)Patricide(Thersandros).
\end{Verbatim}
\caption{The Iocaste knowledge base.}
\label{figure:iocaste_kb}
\end{figure}

The only TBox axiom is shown in line 1, while the content of
the ABox is given in lines 3--5. The TBox axiom expresses that
somebody is considered to be an answer if she has a patricide child,
who, in turn, has a non-patricide child. The ABox axioms describe the
\texttt{hasChild} binary relation between certain individuals and also
express the facts that \texttt{Oedipus} is known to be patricide,
while \texttt{Thersandros} is known to be non-patricide (note that
both patricide and non-patricide are unary relations). Our task is to
solve the instance-check problem \texttt{Ans(Iocaste)}, i.e.\ to
decide if the given knowledge base entails the fact that \texttt{Iocaste}
belongs to the answer concept  \texttt{Ans}. 

Note that \texttt{Iocaste} can be shown to be an answer, in spite of
the fact that one cannot \emph{name} the child of \texttt{Iocaste} who
has the desired property. That is, solving this specific instance
check problem requires \emph{case analysis}: the child in question is
either \texttt{Polyneikes} or \texttt{Oedipus}, depending on
\texttt{Polyneikes} being a patricide or not.

Also note that the trivial Prolog translation of the DL knowledge base
in Figure~\ref{figure:iocaste_kb}, shown below, is not appropriate, as
the goal \texttt{:- Ans(i)} fails.

{\small
\begin{Verbatim}[numbers=left,numbersep=2pt,frame=single]
Ans(A) :-  hasChild(A, B), Patricide(B), hasChild(B, C), not_Patricide(C).

Patricide(o).   not_Patricide(t). 
hasChild(i, o). hasChild(i, p). hasChild(o, p). hasChild(p, t).
\end{Verbatim}
}

Here, to follow the standard DL notation, predicate names
corresponding to concepts start with capitals, while role names are
written in lower case. For the sake of conciseness we omit the
apostrophes around Prolog predicate names starting with capitals and
we also use the abbreviations \texttt{i}, \texttt{o}, \texttt{p}, and \texttt{t} for instance
names.

Note that using negation as failure (the \verb"\+" operator) would not
solve the problem: when the goal \texttt{not\_Patricide(C)} in line 1 is
replaced by \verb"\+ Patricide(C)", every instance not known to be
patricide is viewed as non-patricide, which is not correct. For example,
consider the ABox containing the axioms \verb+hasChild(i1, i2)+, \verb+hasChild(i2, i3)+, and
\verb+Patricide(i2)+. This ABox does not entail \texttt{Ans(i1)},
but the Prolog program using negation as failure does return success for
this query.

There is an infinite number of ABox \emph{patterns}
which allow an individual to be proven to belong to concept
\texttt{Ans} \cite{padl06}.  These patterns are shown in
Figure~\ref{figure:patterns}. Here the nodes of the pattern graph
stand for individuals, while the edges represent the \texttt{hasChild}
role instances. Furthermore, \texttt{P} and $\neg$\texttt{P} stand for
\texttt{Patricide} and \texttt{not\_Patricide}, respectively. Note
that case $n = 2$ corresponds to the ABox given in
Figure~\ref{figure:iocaste_kb}.

\begin{figure}[htbp]
  \centering
  \psfrag{n0}{$n = 1$}
  \psfrag{n1}{$n = 2$}
  \psfrag{n2}{$n = k$}
  \psfrag{a}{\texttt{i}}
  \psfrag{b}{\texttt{o}}
  \psfrag{c}{\texttt{p}}
  \psfrag{c0}{e$_1$}
  \psfrag{c1}{e$_2$}
  \psfrag{c2}{e$_k$}
  \psfrag{d}{\texttt{t}}
  \psfrag{P}{\texttt{P}}
  \psfrag{NP}{$\neg$\texttt{P}}

  \psfrag{dots}{\dots}
  \includegraphics[scale=0.57]{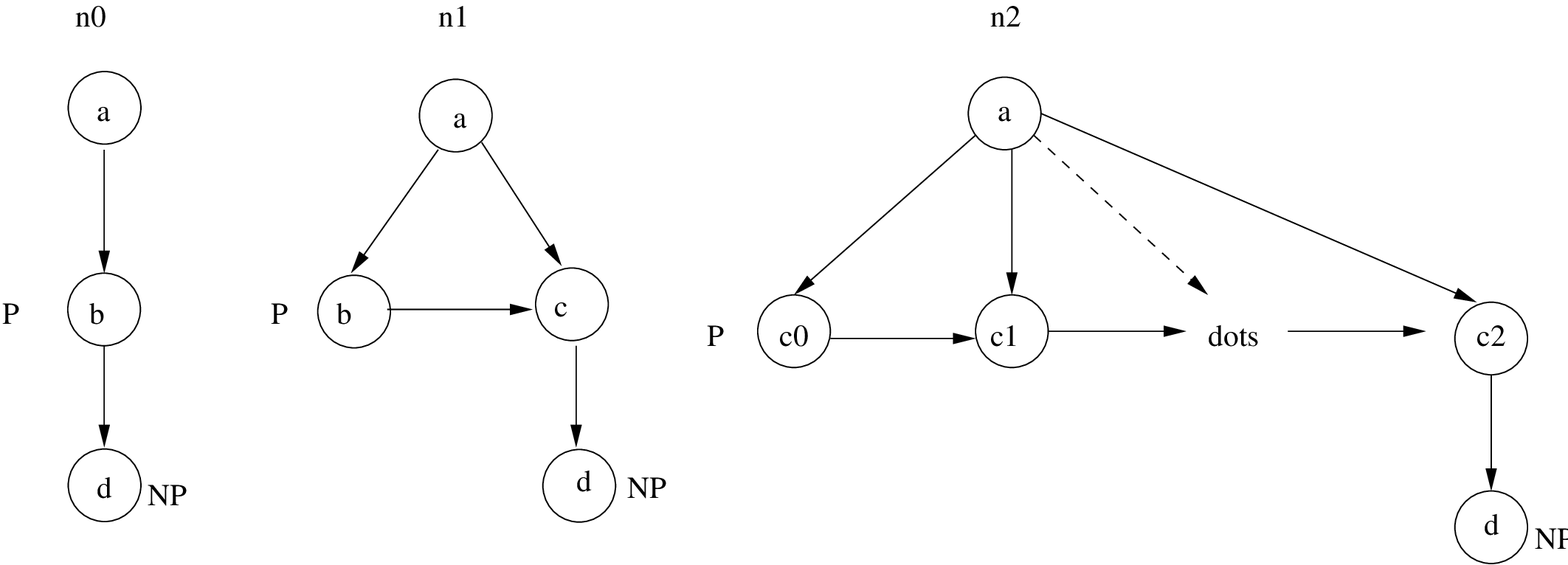}
\caption{Iocaste ABox patterns.}
\label{figure:patterns}
\end{figure}

Consider the ABox corresponding to the general case (the
rightmost pattern). We show that the individual \texttt{i} does belong to the concept
\texttt{Ans}. Assume that there is a model of this ABox in which $\neg
\texttt{Ans}(\texttt{i})$ holds. We show by induction that, for each $j = 1,
\ldots, k$, $\texttt{Patricide}(\texttt{e}_j)$ holds in this model. This is true for $j =
1$. Assume that this is true for $j = m$. Because $\texttt{e}_m$ is a patricide
child of \texttt{i}, where the latter does not belong to \texttt{Ans},
all children of $\texttt{e}_m$ have to be patricide. Thus
$\texttt{e}_{m+1}$ is a patricide, which completes the inductive
proof. Hence $\texttt{e}_k$ is a patricide child of \texttt{i}, who has a
non-patricide child \texttt{t}, and thus \texttt{i} belongs to
\texttt{Ans}. This contradicts our initial, indirect assumption, thus
proving that \texttt{i} belongs to the concept \texttt{Ans}. See paper
\cite{padl06} for the proof that the patterns of
Figure~\ref{figure:patterns} give an exact characterisation of ABoxes
entailing $\texttt{Ans}(\texttt{i})$, w.r.t.\ the TBox shown in line 1 of
Figure~\ref{figure:iocaste_kb}.

\begin{figure}[htbp]
\begin{Verbatim}[numbers=left,numbersep=2pt,frame=single,commandchars=\\\{\}]
Ans(X) :- hasChild(X,Y), hasChild(Y,Z), not_Patricide(Z), dPatricide(Y,X).

dPatricide(Z, _)    :- Patricide(Z).
dPatricide(Z, X)    :- hasChild(X, Y), hasChild(Y, Z), dPatricide(Y, X).

Patricide(o).   not_Patricide(t). 
hasChild(i, o). hasChild(i, p). hasChild(o, p). hasChild(p, t).
\end{Verbatim}
\caption{A Prolog translation of the Iocaste knowledge base.}
\label{figure:iocaste_handmade}
\end{figure}

A Prolog program, written by hand, solving the Iocaste problem is
presented in Figure~\ref{figure:iocaste_handmade}. We have shown in
\cite{padl06} that this program is a sound and complete translation of
the Iocaste problem and it captures exactly the patterns shown in
Figure~\ref{figure:patterns}. To see this, notice that
\texttt{dPatricide(Z,X)} describes patterns of the form shown in
Figure~\ref{figure:dpatricide}. The first clause of
\texttt{dPatricide(Z,X)} (line 3) corresponds to the degenerate
pattern for the case $n = 1$. The second clause (line 4) states that
a new pattern corresponding to \texttt{dPatricide(Z,X)} can be obtained by
extending a pattern corresponding to \texttt{dPatricide(Y,X)} by two new
\texttt{hasChild} edges between \texttt{(X, Y)} and \texttt{(Y, Z)}.

\begin{figure}[htbp]
  \centering
  \psfrag{n0}{$n = 1$}
  \psfrag{n1}{$n = 2$}
  \psfrag{n2}{$n = k$}
  \psfrag{a}{\texttt{X}}
  \psfrag{b}{\texttt{Z}}
  \psfrag{b1}{\texttt{Y}}
  \psfrag{c}{\texttt{Z}}
  \psfrag{c0}{\texttt{Y}$_1$}
  \psfrag{c1}{\texttt{Y}$_2$}
  \psfrag{c2}{\texttt{Z}}
  \psfrag{d}{\texttt{t}}
  \psfrag{P}{\texttt{P}}
  \psfrag{NP}{$\neg$\texttt{P}}

  \psfrag{dots}{\dots}
  \includegraphics[scale=0.57]{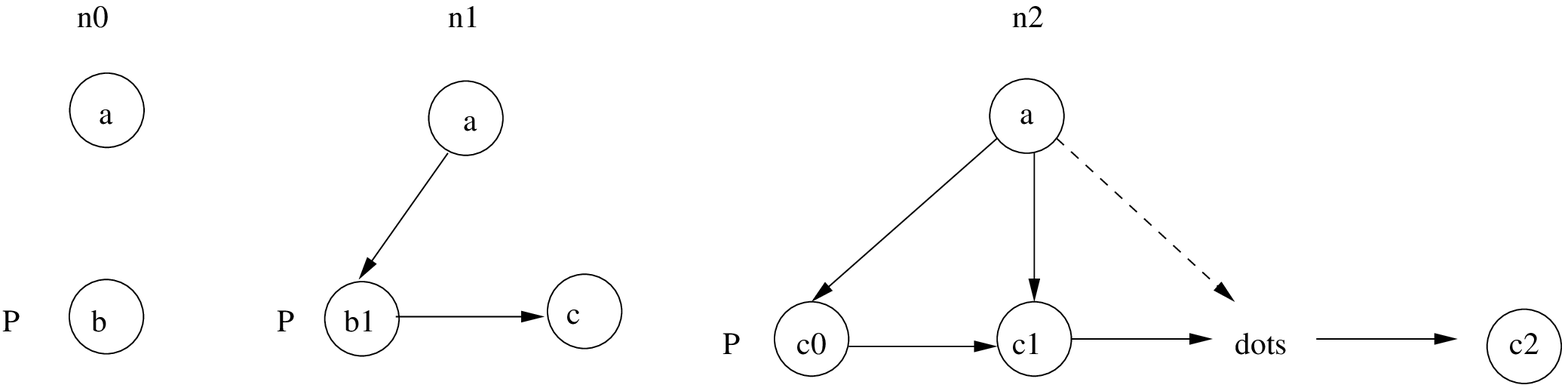}
\caption{The pattern captured by \texttt{dPatricide/2}.}
\label{figure:dpatricide}
\end{figure}

Note that the program in Figure~\ref{figure:iocaste_handmade} may not
terminate if the \texttt{hasChild} relations form a directed cycle in
the ABox. If this cannot be excluded, then termination can be ensured, for
example,  by  tabling \cite{xsb}, or \emph{loop elimination} (see
Section~\ref{specpttp}).

Unlike in the Iocaste problem, we do not always need to use case analysis
and therefore we can generate simpler programs. For example, let us
consider the DL knowledge base presented in
Figure~\ref{figure:happy_kb}. Here we consider someone happy if she
has a child who, in turn, has both a clever child and a pretty child
(line 1).

\begin{figure}[htbp]
\begin{Verbatim}[numbers=left,numbersep=2pt,frame=single,commandchars=\\\{\}]
\(\exists\texttt{hasChild}\ldotp(\exists\texttt{hasChild}\ldotp\texttt{Clever} \sqcap \exists\texttt{hasChild}\ldotp\texttt{Pretty}) \sqsubseteq \texttt{Happy}\)

Clever(lisa). Pretty(lisa). hasChild(kate,bob). hasChild(bob,lisa).
\end{Verbatim}
\caption{The Happy knowledge base.}
\label{figure:happy_kb}
\end{figure}

The ABox given in line 3, together with the TBox axiom in line 1,  implies that \texttt{kate}
is happy. In this case, there is a straightforward Prolog translation for
the TBox,
as shown in Figure~\ref{figure:happystraightforwardtranslation}.

\begin{figure}[htbp]
\begin{Verbatim}[numbers=left,numbersep=2pt,frame=single,commandchars=\\\{\}]
Happy(A) :- hasChild(A, B), hasChild(B, C), hasChild(B, D),
            Clever(C), Pretty(D).

Clever(lisa). Pretty(lisa). hasChild(kate, bob). hasChild(bob, lisa).
\end{Verbatim}
\caption{The straightforward Prolog translation of the Happy knowledge base.}
\label{figure:happystraightforwardtranslation}
\end{figure}

One of the aims in the DLog project is to create a framework where
problems not requiring case analysis result in straightforward Prolog
programs. As we show later in Section~\ref{generation}, we can actually
generate programs for the Iocaste and Happy problems which are the same as,
or very close to, the handmade programs presented here.

\subsection{Building first-order clauses from a $\mathcal{SHIQ}$ knowledge base}
\label{shiqprinciples}
\setcounter{equation}{0}

In this section we deal with the first stage of the $\mathcal{SHIQ}$ to
Prolog transformation: converting a $\mathcal{SHIQ}$ $\textit{KB}$ to a set
of first-order clauses of a specific form. The details of this
transformation are presented in \cite{zombori}, here we only give an
outline and an illustrative example.

The basic idea of this conversion is to bring forward the
inference steps that are independent of the ABox. In doing so, our aim
is not to compute all consequences of the TBox -- that would require
too much time and is not needed anyway --, but to perform those steps
that complicate the ABox reasoning. Most notably, the translation of a
DL TBox to first-order clauses involves introducing skolem functions
which require special treatment. However, the fact that the ABox
is function-free suggests that all inference steps involving
function symbols can be performed before accessing the ABox. Hence,
instead of complicating the ABox reasoning, we break the reasoning
into two parts: an ABox independent TBox transformation is performed
as the first phase, and this is followed by the actual data reasoning as
the second phase. 

In \cite{zombori} a new calculus is introduced, which extends the  work
described in \cite{motik06PhD}. This calculus, similar to basic
superposition, is shown to
be sound, complete, and terminating for any input derived from a $\mathcal{SHIQ}$ knowledge
base. For any proof within the calculus, we can order the inference steps
in such a way that all steps involving function symbols precede all steps
involving clauses derived from the ABox. In the first stage of the
reasoning we perform the steps that do not require the ABox. The clauses
containing function symbols cannot play any role afterwards, thus we can
simply remove them. The second stage -- which is the focus
of the present paper -- makes use of the function-free nature of the
clauses, when transforming these to a Prolog program.

Note that, as opposed to \cite{motik06PhD}, all clauses containing function
symbols are eliminated in the DL to Prolog transformation. This forms the
basis of a pure two-phase reasoning framework, which allows us to store the
content of the ABox in an external database.

For an arbitrary $\mathcal{SHIQ}$ knowledge base $\textit{KB}$, let us
denote by $\textit{DL}(\textit{KB})$ the set of first-order clauses
resulting from the first stage of the transformation.  In the rest of the
paper we only make use of the fact that $\textit{DL}(\textit{KB})$ contains
clauses of a specific form, as listed in Figure~\ref{figure:dlclauses}.

\begin{figure}[htbp]
  \begin{magicbox}
\begin{enumerate}
\item[(1)]
$\neg R(x,y) \vee S(y,x)$\vspace*{0.1cm}
\item[(2)]
$\neg R(x,y) \vee S(x,y)$\vspace*{0.1cm}
\item[(3)]
$\mathbf{C}(x)$\vspace*{0.1cm}
\item[(4)]
$\mathop{\bigvee}_{i,j,k} \neg R\sb{k}(x\sb{i},x\sb{j}) \vee \mathop{\bigvee}_{i} \textbf{C}(x\sb{i}) \vee \mathop{\bigvee}_{i,j} (x\sb{i}=x\sb{j})$\vspace*{0.1cm}
\item[(5)]
$R(a,b)$\vspace*{0.1cm}
\item[(6)]
$C(a)$
\end{enumerate}
\end{magicbox}
\caption{The structure of $\textit{DL}(\textit{KB})$.}
\label{figure:dlclauses}
\end{figure}

Here clauses (1)--(4) are derived from $\textit{KB}_\mathcal{T}$, the
TBox part of the knowledge base, while clauses (5)--(6) are derived from
the ABox. As for first-order clauses, all variable symbols
appearing in (1)--(4) are universally quantified.  $R$ and $S$ denote
binary predicate names, which correspond to roles.  $C$ is a possibly
negated unary predicate name, corresponding to a concept.  Symbols $a$
and $b$ are constants. $\mathbf{C}(x)$ denotes a nonempty disjunction
of (positive or negative) unary literals, all having the variable $x$ as
their argument: $\mathbf{C}(x) = (\neg)C_1(x) \vee \ldots
\vee (\neg)C_n(x)$, $n \geq 1$. 

Clause (4) requires further
explanation, as it is known to satisfy certain constraints. First, it
contains at least one binary literal, at least one unary literal, and
a possibly empty set of variable equalities. Second, its binary
literals contain all the variables of the clause. Third, if we build a
graph from the binary literals by converting $\neg R(x,y)$ to an edge
$x \rightarrow y$, then the graph obtained this way will always be a tree.

We illustrate the transformation of the TBox with a small
example. Although the axioms are first translated to first-order
clauses and the reasoning is performed on this form, we will give the
DL equivalents of the transformed clauses, to make the example more
compact. Let us consider the following TBox: 
\begin{eqnarray}
 \label{1} \top & \sqsubseteq & (\leqslant 1\, \hc\ldotp\s) \\
 \label{2} \top & \sqsubseteq & (\geqslant 1\, \hc\ldotp\clever) \\
 \label{3} \clever & \sqsubseteq & \s \\
 \label{4} (\geqslant 2\, \hc\ldotp\top) & \sqsubseteq & \h
\end{eqnarray}

The transformation of \cite{zombori} will effectuate the following
three changes in the TBox:
\begin{itemize}
\item We know that everybody has a clever child (\ref{2}), who is also
  successful (\ref{3}). But, since there can only be at most one successful
  child (\ref{1}), it is
  impossible for a child to be successful and not clever. Accordingly,
  we will deduce the following axiom (more precisely, we deduce the
  first-order clause corresponding to this axiom):
\begin{equation}
\label{5} \top \sqsubseteq (\forall \hc\ldotp(\clever
\sqcup \lnot \s))
\end{equation}
\item How can a person turn out to be happy? If she has two
  children. But we already know that everyone has at least one clever
  child. So if she happens to have a non clever child, then this child
  cannot be identical to the clever one, so they are really two
  distinct children, hence the person is happy. Thus the following axiom is
  deduced:
\begin{equation}
\label{6} (\exists \hc\ldotp \lnot \clever) \sqsubseteq
\h
\end{equation}
\item If we translate these 6 axioms to first-order clauses, (\ref{2})
  is the only one that will give rise to skolem functions (skolem
  functions are derived from $\geq$-concepts on the right side of
  $\sqsubseteq$ and from $\leq$-concepts on the left side of
  $\sqsubseteq$). But we only need (\ref{2}) to deduce (\ref{5}) and
  (\ref{6}). Once this is done, we can dispose of (\ref{2}).
\end{itemize}
The following 5 axioms are thus produced as the output of the first stage:
\begin{eqnarray*}
\top &\sqsubseteq& (\leqslant 1\, \hc\ldotp\s) \\
\clever &\sqsubseteq& \s \\
(\geqslant 2\, \hc\ldotp\top) &\sqsubseteq& \h \\
\top &\sqsubseteq& (\forall \hc\ldotp(\clever
\sqcup \lnot \s)) \\
(\exists \hc\ldotp \lnot \clever) &\sqsubseteq&
\h
\end{eqnarray*}
The corresponding first-order clauses (where the variables are all
universally quantified)  are  the following:

\begin{displaymath}
 \hspace*{-0.77cm}
 \lnot \hc(x,y_1) \lor \lnot \hc(x,y_2) \lor \lnot \s(y_1) \lor \lnot \s(y_2) \lor y_1=y_2 
\end{displaymath}
\begin{displaymath}
 \hspace*{-0.77cm}
 \lnot \clever(x) \lor \s(x)
\end{displaymath}
\begin{displaymath}
 \hspace*{-0.77cm}
 \lnot \hc(x,y_1) \lor \lnot \hc(x,y_2) \lor \h(x) \lor y_1=y_2
\end{displaymath}
\begin{displaymath}
 \hspace*{-0.77cm}
 \lnot \hc(x,y) \lor \clever(y) \lor \lnot \s(y)
\end{displaymath}
\begin{displaymath}
 \hspace*{-0.77cm}
 \lnot \hc(x,y) \lor \clever(y) \lor \h(x)
\end{displaymath}

\noindent Note that these clauses are indeed of the form listed in
Figure~\ref{figure:dlclauses}. As the calculus used here is
shown to be complete and sound in  \cite{zombori}, we know that no further TBox clauses
need to be inferred and that the omission of clause (\ref{2}) does not
invalidate any ABox inferences.

An important feature of the first stage is that it eliminates
transitivity axioms by introducing auxiliary unary predicates,
following the technique described in \cite{motik06PhD}. 

Finally, a minor technical remark: the clauses produced from a
$\mathcal{SHIQ}$ knowledge base may contain binary literals
corresponding to inverse roles. We avoid the need for constructing
inverse role names by the following transformation: the predicate
${R_A}^{-}(X, Y)$ is replaced by $R_A(Y, X)$, where $R_A$ is an atomic
role.

\subsection{DL clauses}
\label{dlclauses}

In the remaining part of this paper we focus on how to transform
clauses of the form shown in Figure~\ref{figure:dlclauses} into efficient
Prolog code. However, we note that for the general transformation,
discussed in the present section, we
use only certain properties of the clauses. These properties are
satisfied by a subset of first-order clauses which is, in fact, larger
than the set of clauses that can be generated from a $\mathcal{SHIQ}$
KB. These properties are summarised in the following definition.

\begin{definition}[DL-clauses]
\label{def:dlike}
A first-order clause $C$  is
said to be a \emph{DL clause} if it satisfies the following
properties.

\begin{enumerate}
\item[(p1)] $C$ consists of unary, binary and equality
  literals only. Moreover, $C$ is function-free, i.e.\ there is no
  literal in $C$ which contains function symbols.
\item[(p2)] $C$ either contains a binary literal, or it is ground, or it
contains no constants, no (in)equalities, and exactly one variable.
\item[(p3)] If there is a binary literal in $C$ then each variable in $C$ occurs in at least one binary literal.
\item[(p4)] If $C$ contains a positive binary literal $B$, then all
  the remaining literals, i.e.\ those in $C' = C \setminus \{B\}$, are negative binary literals, and the
  set of variables of $C'$ and $B$ is the same.
\end{enumerate}
\end{definition}

\noindent Note that the subcondition of (p2) ``contains \ldots no
(in)equalities'' is practically unnecessary. More precisely, if we
remove this subcondition, we can show that any (in)equalities
occurring in $C$ can be trivially deleted. Assume that there is a DL clause
$C$ which contains no binary literal and is not ground. Because of the
weaker form of (p2), we still know that $C$ contains no constants and
exactly one variable. Thus, any equality literals contained in $C$
have to be of the form $x = x$ or $x \neq x$. In the first case the
literal is always true, making $C$ useless, while the $x \neq x$
literal is always false and so it can be removed from $C$.

Now we formulate the following proposition (proved by simply checking
each of the clauses in Figure~\ref{figure:dlclauses}).

\begin{proposition}
For a given $\mathcal{SHIQ}$ knowledge base $\textit{KB}$, every
clause $C \in \textit{DL}(\textit{KB})$ is a DL clause. 
\end{proposition}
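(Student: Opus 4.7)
The plan is a straightforward case analysis: iterate through the six clause schemas (1)--(6) in Figure~\ref{figure:dlclauses} and verify that each of the four DL-clause properties (p1)--(p4) of Definition~\ref{def:dlike} is satisfied. Since $\textit{DL}(\textit{KB})$ is defined exactly as the set of clauses of these shapes, showing the claim for each schema is sufficient. Property (p1) is immediate in every case, because by inspection the schemas mention only unary predicates (for concepts), binary predicates (for roles), and equality literals, and the first stage described in Section~\ref{shiqprinciples} has already eliminated all function symbols; this observation handles (p1) uniformly and I would dispose of it first.

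I would then group the remaining checks by schema. Clauses (1) and (2) consist of two binary literals and two variables, each of which appears in each literal, so (p2)--(p4) are all trivial; note in particular that for (p4) the non-$B$ literal is $\neg R(x,y)$, which is binary and negative, and shares variables $\{x,y\}$ with $B$. Clause (3) is purely unary with a single variable $x$, no constants and no (in)equalities, so it falls under the third disjunct of (p2); (p3) and (p4) are vacuous because the clause contains no binary literal. Clauses (5) and (6) are ground, satisfying the second disjunct of (p2); (p3) is vacuous (no variables), and (p4) holds trivially in (5) since $C'$ is empty and vacuously in (6) since there is no binary literal at all.

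The only slightly substantive case is clause (4). Here (p2) is satisfied by the first disjunct, since the clause contains at least one binary literal (as noted in the accompanying text in Section~\ref{shiqprinciples}). For (p3) I would invoke the second structural constraint stated just after Figure~\ref{figure:dlclauses}, namely that the binary literals of (4) contain all variables of the clause. Finally, for (p4), the critical observation is that every binary literal of (4) is \emph{negative} ($\neg R_k(x_i,x_j)$), so (4) contains no positive binary literal $B$, and (p4) is vacuously true.

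No step is really an obstacle here; the whole argument is inspection, and the only risk is overlooking a subtle clause in the definition. I would be especially careful about the ``no (in)equalities'' subcondition of (p2) for clause (3) (there are no equalities, so it is fine), and about confirming that the binary literals in (4) are all negative when invoking the vacuous case of (p4). With these two points flagged, the verification reduces to six small direct checks and yields the proposition.
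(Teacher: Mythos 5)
Your proposal is correct and matches the paper's approach exactly: the paper dispenses with this proposition by noting it is ``proved by simply checking each of the clauses in Figure~4,'' and your case-by-case verification of (p1)--(p4) against the six schemas (including the vacuity of (p4) for schema (4) because its binary literals are all negative, and the empty-$C'$ case for schema (5)) is precisely that check, carried out in full.
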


\noindent Note that the properties in Definition~\ref{def:dlike} are
necessary but not sufficient conditions for being a clause of the
form shown in Figure~\ref{figure:dlclauses}, i.e.\ these properties
may also hold for a clause which cannot be derived from a
$\mathcal{SHIQ}$ knowledge base. An example for such a clause is the
following:

\begin{equation}
\label{eq:nonshiq}
P(x) \vee \neg R(x, x)\textrm{.}
\end{equation}

\noindent In the rest of this section we discuss how to transform an
\emph{arbitrary} set of DL clauses, i.e.\ clauses satisfying Definition~\ref{def:dlike}, to a Prolog
program. However, in Section~\ref{generation}, which presents several 
optimisations of this transformation process, we will restrict the discussion
to inputs produced from $\mathcal{SHIQ}$ KBs, i.e.\ sets of clauses of the form shown in
Figure~\ref{figure:dlclauses}.

Let us now consider a certain type of unary predicates,
namely those corresponding to the $\top$ (top) concept. 
\begin{definition}[Top predicate]
Let $S$ be a set of DL clauses and let $p$ be a unary predicate
name which appears somewhere in $S$. Predicate name $p$ is said to
be a \emph{top predicate} if $S$ entails $\forall x\ldotp p(x)$.
\end{definition}

\noindent One can view top predicates as degenerate, as their negations
correspond to unsatisfiable concepts. Recall that it is
normally considered a modelling error if a DL knowledge base
contains an unsatisfiable concept, i.e.\ a concept equivalent to
$\bot$. 

Technically, we have to deal with top predicates because of a subtle
difference between the requirements of FOL and DL reasoning. When PTTP
is asked to list all $x$ instances satisfying $p(x)$, where $p$ is
a top predicate, it will normally return without instantiating $x$, which
indicates that all domain elements satisfy $p$. In contrast with this, a DL
reasoner is expected to enumerate all named individuals in the ABox, as
the answers to an instance retrieval query concerning a concept
corresponding to a top predicate. 

It can be shown that for DL clauses the top-predicate property does not depend on the
ground clauses, i.e.\ on the ABox part of the knowledge
base. Specifically, for $\mathcal{SHIQ}$ knowledge bases, one can
determine if $p$ is a top predicate by checking the satisfiability of
the concept $\neg p$, using a suitable TBox reasoning engine.

In order to be able to formulate our results in a simpler form, we 
define a transformation  removing all top predicates from a knowledge base.

\begin{definition}[Reduced form of a set of DL clauses]
\label{def:reduced}
Let $S$ be a set of DL clauses. We modify $S$ in the following
way. (1) We remove all literals in $S$ which refer to a negated
top predicate.  (2) We remove every clause $C$ from $S$ where $C$ contains
a positive literal with a top predicate. The remaining set of clauses
is called the \emph{reduced form} of $S$.
\end{definition}

\noindent The following proposition shows that this reduction step preserves all
information except for the top predicates.

\begin{proposition}
Let $S$ be a set of DL clauses. Let us extend the \emph{reduced form}
of $S$ with clauses of the form $p(x)$, for each top predicate $p$ in
$S$. This extended set of clauses is equivalent to $S$.
\end{proposition}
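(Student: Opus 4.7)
The plan is to prove the two entailments $S \models S''$ and $S'' \models S$ separately, where $S''$ denotes the extended reduced form. Both rely on the same elementary semantic facts: a disjunctive literal that is false in every model can be dropped without changing the clause's truth value, and a clause that is true in every model can be dropped from a theory without changing its logical content. The role of the added unit clauses $p(x)$ is exactly to compensate, when going back from $S'$ to $S$, for the literals and clauses that were ``silently'' suppressed.

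For the direction $S \models S''$, I would first observe that $S \models \forall x\,p(x)$ for each top predicate $p$ directly by definition, so the added unit clauses are entailed by $S$. For any clause $C' \in S'$ coming from step~(1), $C'$ arose from some $C \in S$ by deleting literals of the form $\neg p(t)$ with $p$ a top predicate; since every such literal is false in every model of $S$, the remaining disjunction $C'$ is logically equivalent to $C$ over models of $S$, hence $S \models C'$. Clauses deleted in step~(2) do not appear in $S'$ at all, so no obligation arises for them.

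For the direction $S'' \models S$, I would take an arbitrary $C \in S$ and split on whether $C$ contains a positive top literal. If it does, say $p(t)$, then since $p(x) \in S''$ the universal instantiation yields $S'' \models p(t)$, and therefore $S'' \models C$. Otherwise $C$ survived step~(2); let $C'$ be the result of deleting from $C$ its negative top literals. Then $C' \in S' \subseteq S''$, and since $C$ is obtained from $C'$ by adding disjuncts (namely the deleted $\neg p_i(t_i)$), we have $C' \models C$, so $S'' \models C$.

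The hard part, such as it is, is really only a pair of small verifications rather than a serious obstacle. One needs to check that variables occurring only inside deleted negative top literals do not cause trouble when the reduced clause is re-quantified: this is immediate because the added unit clauses $p(x)$ are universally quantified in $x$, so they can be instantiated to whatever term appears in the deleted literal regardless of which remaining variables the clause has. One should also note the degenerate case in which every literal of some $C \in S$ is a negative top literal, so step~(1) reduces $C$ to the empty clause; but then $C$ was already unsatisfiable in every model of $S$, so $S$ was inconsistent, and likewise $S''$ contains the empty clause, so the equivalence still holds trivially.
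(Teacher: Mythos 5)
Your proof is correct and follows essentially the same route as the paper, which disposes of the proposition in one line by appealing to the soundness of the two reduction steps; you simply spell out both entailment directions (including the recovery of deleted clauses and literals via the re-added unit clauses $p(x)$) that the paper leaves implicit. The edge cases you flag (variables occurring only in deleted literals, and clauses reducing to the empty clause) are handled correctly and do not change the argument.
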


\begin{proof}\small
Easily follows form the fact that both transformation steps (1) and (2) in
Definition~\ref{def:reduced} are sound.
\end{proof}
\noindent In the following, we will restrict our attention to sets of DL
clauses which are in reduced form.

\subsection{Specialising PTTP for DL clauses}
\label{specpttp}

In this section we discuss how to specialise various features of PTTP for
the case of DL clauses.

\paragraph{Contrapositives}
The first step in applying the PTTP approach to a set of DL clauses $S$ is
to generate the contrapositives of each clause in $S$. This, in turn,
requires the introduction of new predicate names for negated literals.

We now reiterate the corresponding definitions from Section~\ref{pttp}. On
one hand, we extend these to handle equalities. On the other hand, we
specialise these definitions for DL clauses.  Recall that a DL clause is a
nonempty disjunction of literals, each literal is a possibly negated atomic
predicate, and an atomic predicate can take one of the following three
forms:
\begin{itemize}
\item a unary predicate $p(x)$
\item a  binary predicate $p(x,y)$
\item an equality $x=y$ 
\end{itemize}

\begin{definition}[The canonical form of literals]
Let $L$ be a literal, or a literal preceded by a negation symbol. The
canonical form of $L$, denoted by $\textit{can}(L)$, is defined as
follows.

\[ \textit{can}(L) = \left\{ \begin{array}{ll}
  \texttt{not\_}p(x)     & \mbox{if $L = \neg p(x)$}\\
  \texttt{not\_}p(x,y)   & \mbox{if $L = \neg p(x,y)$}\\
  \texttt{dif}(x,y)      & \mbox{if $L = \neg (x = y)$}\\
  \textit{can}(L')       & \mbox{if $L = \neg \neg L'$}\\
  L                      & \mbox{otherwise}
\end{array}
\right.
\]
\end{definition}

In the above definition, the first two cases remove a negation from
before atomic predicates and prefix the predicate name with
`\texttt{not\_}'. The third case transforms an inequality to a call of the
predicate \texttt{dif}. This is a predicate available in most Prolog
systems which ensures that its two arguments are not unifiable. For Prolog
programs generated from DL clauses, where the variables are instantiated to
constants sooner or later, this ensures that the arguments of \texttt{dif}
are indeed different. The fourth case removes double negation, while the
last one states that non-negated literals are left unchanged. This implies
that an equality is handled by the Prolog built-in predicate `\texttt{=}'.
We can implement inequality and equality using \texttt{dif} and
`\texttt{=}' because of the Unique Name Assumption, which states that an
inequality holds for any two distinct individual names, and thus an
equality can hold only when its two sides are identical individual names.

\begin{definition}[DL-contrapositive of a DL clause]
Let $\textit{DLC} = \mathop{\bigvee}_{1 \leq i \leq n}L_i$  be an arbitrary DL clause.
The Horn clause 
$$\textit{can}(L_k) \texttt{ :- }\textit{can}(\neg L_1), \ldots, \textit{can}(\neg L_{k-1}), \textit{can}(\neg L_{k+1}),
\ldots, \textit{can}(\neg L_n)$$
is called a DL-contrapositive of $\textit{DLC}$, provided $L_k$ is a (possibly
negated) unary or binary predicate.
\end{definition}

Note that we do not consider Horn clauses with an equality or inequality in
the head. Such clauses could be used to infer that two individuals are
equal or distinct. However, we work with the Unique Name Assumption, which
decides the issue of equality, and so such deductions are unnecessary.

\begin{definition}[DL program]
Let $S$ be a set of DL clauses. The  \emph{DL program} corresponding to $S$ is
denoted by $\textit{PDL}(S)$, and contains all DL-contrapositives of the
clauses in $S$,
i.e.\ $\textit{PDL}(\textit{S}) = \{ C | C$ is a DL-contrapositive of
$C_0$, and $C_0 \in S\}$.
\end{definition}

\noindent Horn clauses are usually grouped into predicates, according to
the functor of the clause head. The functor of a term is a pair consisting
of a name and arity (number of arguments). In Prolog, functors are normally
denoted by the expression \emph{Name/Arity}, for example
\texttt{foo/2}. Thus a DL program can be also viewed as a set of \emph{DL
  predicates}, each of which consists of all clauses of the DL program
which have a given head functor.  In the rest of the paper it is the
context which determines whether we view the DL program as a set of Horn
clauses, or as a set of DL predicates. Accordingly, we use the term ``DL
predicates'' as a synonym of ``DL program''.

As an example, in Figure~\ref{figure:dl_predicates_iocaste} we show
the four DL predicates produced from the Iocaste knowledge base of
Figure~\ref{figure:iocaste_kb}. Notice, for example, that the first
clause of the \texttt{Patricide/1} predicate comes from the ABox,
while the second comes from the TBox. We also show the six DL
predicates of the Happy KB in Figure~\ref{figure:dl_predicates_happy}.
We have not included the DL predicate \texttt{not\_hasChild/2} in
these examples, because we will soon prove that clauses with a negated
binary literal in the head are not needed (cf.\
Proposition~\ref{prop:remneg}).

\begin{figure}[htbp]
\begin{Verbatim}[numbers=left,numbersep=2pt,frame=single]
Ans(A)           :- hasChild(A, B), hasChild(B, C), 
                    Patricide(B), not_Patricide(C).
Patricide(o).
Patricide(A)     :- hasChild(B, A),  hasChild(C, B), 
                    Patricide(B), not_Ans(C).
not_Patricide(t). 
not_Patricide(A) :- hasChild(A, B), hasChild(C, A), 
                    not_Ans(C), not_Patricide(B).

hasChild(i, o). hasChild(i, p). hasChild(o, p). hasChild(p, t).
\end{Verbatim}
\caption{DL predicates of the Iocaste problem.}
\label{figure:dl_predicates_iocaste}
\end{figure}

\begin{figure}[htbp]
\begin{Verbatim}[numbers=left,numbersep=2pt,frame=single]
Happy(A)      :- hasChild(A, B), hasChild(B, C), hasChild(B, D),
                 Clever(C), Pretty(D).

not_Clever(A) :- hasChild(B, C), hasChild(C, A), hasChild(C, D),
                 Pretty(D), not_Happy(B).

not_Pretty(A) :- hasChild(B, C), hasChild(C, D), hasChild(C, A),
                 Clever(D), not_Happy(B)

Clever(lisa). Pretty(lisa). hasChild(kate, bob). hasChild(bob, lisa).
\end{Verbatim}
\caption{DL predicates of the Happy problem.}
\label{figure:dl_predicates_happy}
\end{figure}

\paragraph{Conjunctive queries}
Given the notion of DL programs, we now discuss how such a program can
be queried.  Instance retrieval queries include possibly negated
atomic concepts and unnegated (positive) binary roles, for example \texttt{not\_Patricide(A)} and \texttt{hasChild(A,
  B)}. The former is supposed to enumerate all possible individuals
known to be non-patricide. The latter is expected to enumerate all
pairs of individuals between whom the \texttt{hasChild} relation
holds. In this paper we support \emph{conjunctive queries}
\cite{1210833}, which are conjunctions of the above instance retrieval
constructs. The execution of a conjunctive query with $n$ distinct
variables is expected to return a set of $n$-tuples, each being a
variable assignment satisfying all the conjuncts.  An example of a
conjunctive query with three variables is \texttt{(Patricide(X),
  hasChild(X, Y), not\_Patricide(Y), hasChild(Y, Z))}.

\paragraph{Basic simplifications}
We now discuss three basic simplifications of PTTP for the special case of
DL clauses. First, let us notice that the occurs check is not necessary, as
DL clauses are function-free. Next, for the case of conjunctive queries, we
claim that (a) contrapositives with negated binary literals in the head can
be removed from the DL program, and (b) ancestor resolution is not needed
for roles. Before proving these claims let us observe the following
proposition.

\begin{proposition}
\label{prop:negbin}
In a DL program, a negated binary predicate can only be invoked within a
negated binary predicate.
\end{proposition}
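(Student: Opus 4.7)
The plan is to trace back where a negated binary predicate in a clause body originates, and then apply property (p4) of Definition~\ref{def:dlike} to constrain the head of that clause.

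First, I would unpack what it means for a negated binary predicate to be ``invoked'' in a DL program. The bodies of clauses in $\textit{PDL}(S)$ consist of literals of the form $\textit{can}(\neg L_i)$, where $L_i$ is a literal of the original DL clause $C = \bigvee_i L_i \in S$. By inspection of the canonical-form definition, $\textit{can}(\neg L_i)$ equals $\texttt{not\_}p(x,y)$ for some binary predicate $p$ exactly when $L_i$ itself is a \emph{positive} binary literal $p(x,y)$: if instead $L_i$ were a negative binary literal $\neg p(x,y)$, the double negation would collapse and $\textit{can}(\neg L_i) = p(x,y)$ would be a positive binary atom, not a negated one. So the presence of a negated binary call in the body of a DL-contrapositive pinpoints a positive binary literal somewhere in the underlying DL clause $C$.

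Next, I would invoke property (p4). If $C$ contains a positive binary literal $B = L_i$, then every other literal of $C$ is a negative binary literal. In particular, the literal $L_k$ designated as the head of the contrapositive (with $k \neq i$, since the head $L_k$ is required to be a unary or binary predicate and here we are looking at contrapositives whose body mentions $\texttt{not\_}p(x,y)$) must itself be a negative binary literal. Hence $\textit{can}(L_k) = \texttt{not\_}q(x',y')$ for some binary $q$, which is exactly a negated binary predicate. This is the entire argument: the constraint in (p4) that forbids mixing a positive binary literal with any non-binary or any positive literal does all the work.

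I do not expect a real obstacle here; the only thing to be careful about is the bookkeeping around $\textit{can}$ and double negation, to make sure I am correctly characterising when $\textit{can}(\neg L_i)$ produces a $\texttt{not\_}$-prefixed binary atom versus a plain binary atom. Once that correspondence is pinned down, the proposition is an immediate consequence of (p4) applied to the DL clause from which the contrapositive was generated.
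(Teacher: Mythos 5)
Your argument is correct and is essentially the paper's own proof: both trace the negated binary body goal back to a positive binary literal in the originating DL clause and then apply property (p4) to conclude that the contrapositive's head must be a negative binary literal. Your extra care with the behaviour of $\textit{can}$ under double negation is a welcome clarification but does not change the substance of the argument.
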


\begin{proof}\small
Let $C$ be a clause in the DL program, such that the body of $C$
contains a negated binary goal $G$. Accordingly, $C$ is the
contrapositive of a DL clause where the binary literal corresponding
to $G$ is a positive literal. However, because of property (p4) in
Definition~\ref{def:dlike}, we know that this DL clause cannot contain
any more positive binary literals and, moreover, it can only contain
negative binary literals. Thus, the head of $C$ must correspond to a
negative binary literal.
\end{proof}

\begin{proposition}
\label{prop:remneg}
Removing contrapositives with negated binary literals in the head
from a DL program does not affect the execution of conjunctive queries.
\end{proposition}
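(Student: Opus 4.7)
The plan is to establish an invariant on the execution trace: starting from any conjunctive query, every literal that ever appears as a current goal or as an ancestor during PTTP-style resolution is either a positive binary literal or a (possibly negated) unary literal, so in particular never a negated binary literal. Once this invariant is in hand, Proposition~\ref{prop:remneg} follows immediately because clauses with a negated binary head can only be invoked by unifying their head with a goal, and no such goal is ever produced.

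To set up the induction, I would first make a structural observation about the bodies of DL-contra\-positives. Clauses with a negated-binary head are irrelevant if we can show they are never called, so it suffices to inspect the bodies of clauses whose head is a positive binary literal or a (possibly negated) unary literal. For a clause whose head is a positive binary literal $B$, property (p4) of Definition~\ref{def:dlike} forces the original DL clause to have only negative binary companions, so the body of the contrapositive consists exclusively of positive binary literals. For a clause whose head is a (possibly negated) unary literal, the original DL clause cannot contain any positive binary literal — again by (p4), since then no unary literal could survive — so the body of the contrapositive contains only positive binary literals and (possibly negated) unary literals; equality literals at this level are handled by the built-ins \texttt{=} and \texttt{dif} and do not invoke any DL predicate. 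In all cases, no body goal is a negated binary literal, which recovers one direction of Proposition~\ref{prop:negbin} in exactly the form needed.

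The induction itself is then routine. The base case is immediate: a conjunctive query is, by the definition given just above, a conjunction of positive binary literals and (possibly negated) unary literals. For the inductive step I distinguish the two resolution mechanisms. For ordinary (input) resolution, a current goal $G$ of the allowed form can unify only with the heads of clauses that are themselves of the allowed form, and by the structural observation above the new subgoals produced from their bodies are again of the allowed form; ancestors grow by a literal of the allowed form, preserving the invariant on the ancestor list. For ancestor resolution, the current goal is simply closed by matching its negation against an element of the ancestor list — no clause is invoked, and in particular a positive binary goal cannot be closed this way because by the invariant no ancestor is a negated binary literal.

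The only subtle point, and the one I expect to require the most care in the write-up, is the interaction with ancestor resolution: one has to verify that the clauses with negated binary heads cannot sneak into the derivation via ancestor matching rather than through the usual body expansion. This is dispatched by the second half of the induction above: because the ancestor list itself is kept free of negated binary literals, ancestor resolution can neither match a positive binary current goal against a negated binary ancestor nor produce a negated binary subgoal. With the invariant established, every successful derivation of a conjunctive query from $\mathit{PDL}(S)$ uses only clauses whose head is a positive binary literal or a (possibly negated) unary literal, and deleting the rest leaves the set of computed answer substitutions unchanged.
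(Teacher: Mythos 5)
Your proof is correct and follows essentially the same route as the paper: the paper derives the result directly from Proposition~\ref{prop:negbin} (negated binary predicates are only invoked within negated binary predicates) together with the observation that conjunctive queries contain no negated binary goals, which is exactly the invariant your induction makes explicit. Your write-up merely spells out the reachability induction and the ancestor-resolution case that the paper leaves implicit.
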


\begin{proof}\small
This is a direct conclusion of Proposition~\ref{prop:negbin} and the
fact that a conjunctive query cannot contain negated binary
goals.
\end{proof}

\noindent Note that when the clauses with a negated binary literal in
the head are removed, no negative binary literals will remain in the
bodies (as the latter only appear in clauses with negative binary
heads, cf.\  Proposition~\ref{prop:negbin}). Thus, unless stated otherwise,
the term \emph{binary predicate} will refer to unnegated binary
predicates, from now on.

\begin{proposition}
Ancestor resolution is not required for binary predicates to answer
conjunctive queries w.r.t.\ a DL program $S$.
\end{proposition}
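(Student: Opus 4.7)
The plan is to leverage the two preceding propositions to show that, during the execution of any conjunctive query, every binary goal that is ever called is positive, and every element of the ancestor list that is binary is also positive. Ancestor resolution on a goal $G$ can succeed only when $\textit{can}(\neg G)$ unifies with some ancestor, so if both the current binary goal and every binary ancestor are always positive, an ancestor resolution step on a binary predicate can never succeed and may be omitted without loss of completeness.

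First I would appeal to Proposition~\ref{prop:remneg}, which justifies dropping all DL-contrapositives whose head is a negated binary literal. After this removal, I would argue inductively on the call tree that every binary subgoal encountered while executing a conjunctive query is positive. The base case is immediate: a conjunctive query contains only positive binary atoms (and possibly unary ones). For the inductive step, suppose a clause $C$ currently being unfolded has a positive binary head $R(x,y)$. Then $C$ is a DL-contrapositive of some DL clause $D$ that contains a positive binary literal. By property (p4) of Definition~\ref{def:dlike}, every other literal of $D$ is a negative binary literal, and therefore every literal in the body of $C$ is obtained as $\textit{can}(\neg L)$ where $L$ is a negative binary literal, i.e.\ each binary goal in the body of $C$ is positive. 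The case of a unary head is similar: those bodies can contain positive binary goals (coming from negated binary literals in the parent DL clause) but no negated binary goals, since any $\texttt{not\_}R$ in a body would make the parent a DL-contrapositive with a negated binary head, already removed.

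Next I would combine this observation with how the ancestor list is populated: the ancestor list consists exactly of the goals that have been called but not yet exited. By the previous paragraph, whenever such a call involves a binary predicate, it is a positive binary atom. Hence the ancestor list contains no negated binary literals. An ancestor resolution attempt on a current goal $R(x,y)$ would require finding $\neg R(x',y')$ (equivalently $\texttt{not\_}R(x',y')$) among the ancestors, which by the above is impossible; dually, no negated binary current goal ever arises, so ancestor resolution is never even attempted with a binary target on the ancestor side. In either reading, no ancestor resolution step with a binary principal literal can contribute a successful branch.

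I do not foresee a substantial obstacle here: the entire argument is a structural bookkeeping exercise once Propositions~\ref{prop:negbin} and \ref{prop:remneg} are in hand. The only point that deserves care is the inductive verification that during execution no negated binary subgoal is ever generated, which must trace through both head types (unary and binary) of the remaining contrapositives and invoke property (p4) at the binary-headed step. Once that invariant is established, the conclusion about ancestor resolution is immediate.
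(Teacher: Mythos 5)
Your argument is correct and follows essentially the same route as the paper, which disposes of the claim in one line by observing that (after the removal justified by Proposition~\ref{prop:remneg}) negative binary predicates are never called and hence can never appear on the ancestor list, so a binary ancestor-resolution step can never succeed. Your inductive bookkeeping through property (p4) merely spells out in detail the invariant the paper takes as already established via Propositions~\ref{prop:negbin} and~\ref{prop:remneg}.
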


\begin{proof}\small
This trivially follows from the fact that negative binary predicates are
never called and can never occur in the ancestor list.
\end{proof}

\paragraph{The binary-first rule}
 The next simplification of PTTP, the replacement
of iterative deepening by \emph{loop elimination}, requires
that a specific restriction is imposed on the placement of the binary goals
in clause bodies. We now present an important property of binary goals,
introduce the \emph{binary-first} body ordering rule, and discuss its
implications.

\begin{proposition}[Binary instantiation]
\label{prop:binterm}
Let $S$ be a set of DL predicates and $B$ be a binary goal. If
the Prolog execution of $B$ w.r.t.\ $S$ terminates with success, it
instantiates both its arguments.
\end{proposition}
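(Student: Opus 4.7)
My plan is to induct on the height of the SLD derivation that witnesses the success of $B$. The structural fact I would lean on is property (p4) of Definition~\ref{def:dlike}: a DL clause containing a positive binary literal can contain only negative binary literals besides it, and the variables of the rest of the clause must coincide with the variables of that positive binary literal. Once translated into contrapositive form, this tells me that any Horn clause in $S$ whose head can unify with $B$ is either (i) a ground ABox fact of form (5) in Figure~\ref{figure:dlclauses}, or (ii) a contrapositive whose body is a conjunction of positive binary goals $B_1, \ldots, B_m$ whose combined variable set equals the variable set of the head.

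The base case of the induction is case (i): a successful unification with a ground fact $R(a,b)$ immediately binds both arguments of $B$ to constants. The inductive step handles case (ii): since each $B_i$ must succeed during the execution of $B$, the induction hypothesis applies and yields that both arguments of each $B_i$ are instantiated by the time it returns. Because the variables of the body jointly cover the variables of the head (this is the crucial consequence of (p4)), every head variable is bound once all body goals have succeeded, and the initial head-unification step then propagates full instantiation to both arguments of $B$.

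The main thing to be careful about is ancestor resolution, which in principle offers an alternative way for a goal to succeed that is not accounted for by a clause of $S$. However, this worry is already dispelled by the immediately preceding proposition in the paper: negated binary predicates are never invoked in the course of answering a conjunctive query, and hence no binary goal can ever appear on the ancestor list, so ancestor resolution cannot contribute to proving $B$. Once that observation is in place, the induction proceeds cleanly with no further case analysis beyond the two clause shapes above.
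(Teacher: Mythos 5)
Your proof is correct and is essentially the paper's argument in dual form: the paper picks a ``lowest'' node of the finite proof tree containing a binary goal with the allegedly uninstantiated variable and derives a contradiction from property (p4), which is exactly the minimal-counterexample version of your direct induction on derivation height, both turning on the same key fact that (p4) forces every head variable of a positive-binary-headed clause to reappear in a body binary goal (or the clause to be a ground fact). Your write-up additionally makes explicit two details the paper leaves implicit --- the ground-fact base case and the dismissal of ancestor resolution for positive binary goals via Proposition~\ref{prop:negbin} --- so nothing is missing.
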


\begin{proof}\small
Let us indirectly assume that there is a binary goal $B(X,Y)$ which
terminates, but one of its arguments, let us say $X$, remains
uninstantiated. Since $B(X,Y)$ terminates, there is a finite Prolog
proof tree $T$ for it. Let us consider the nodes in $T$ containing a
binary goal with $X$ as one of its arguments. As $T$ is finite, there
exists a ``lowest'' of these, i.e.\ a node with no occurrences of $X$
in binary goals below it. However, this contradicts property (p4) of
DL clauses in Definition~\ref{def:dlike}.
\end{proof}

\begin{definition}[The binary-first rule]
\label{def:binfirst}
The body of a Prolog clause $C$ is ordered according to the
\emph{binary-first rule} if (1) each binary goal $B$ in the body of $C$ precedes all unary and
(in)equality goals containing any of the variables occurring in $B$, and
(2) if the body of $C$ contains a binary goal with the head variable as an argument, then
at least one such goal precedes all unary goals.
\end{definition}

\noindent For an arbitrary clause $C$
containing a binary goal condition (1) ensures that all unary and equality goals within
$C$ are called with a ground argument, while condition (2) guarantees that
by the time the first unary goal in $C$ is called, the head variable is ground.

\begin{proposition}[Groundness of unary and equality predicates]
\label{prop:ground}
Let $S$ be a set of DL predicates and let us use the binary-first rule
during the Prolog execution. (a) If a unary predicate is invoked with
a variable argument then its parent goal (the goal which is calling the
current goal) is a unary predicate with the same variable. (b)
equality predicates (i.e.\ \texttt{=\,/2} and \verb+dif/2+) are always
invoked with ground arguments.
\end{proposition}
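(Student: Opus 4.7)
The plan is to analyze each claim by inspecting the clause $C$ in whose body the goal is invoked together with the DL clause $D$ of which $C$ is a DL-contrapositive, leaning on properties (p2)--(p4) of Definition~\ref{def:dlike}, the binary-first rule (Definition~\ref{def:binfirst}), and Proposition~\ref{prop:binterm}. A preliminary observation will do most of the work: whenever the body of $C$ contains any unary or (in)equality goal, the head of $C$ must be a unary literal. Indeed, the head of a DL-contrapositive is never an equality; by Proposition~\ref{prop:remneg} it is not a negated binary literal either; and if it were a positive binary literal then, by (p4), every remaining literal of $D$ would be negative binary, so no unary or equality goal could appear in the body of $C$. Combining this with (p3) yields the payoff: whenever $D$ contains a binary literal, every variable of $C$ must appear either in the (unary) head of $C$ or in some positive binary body goal of $C$.

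For part (b), suppose an (in)equality goal $E(x,y)$ in the body of $C$ is called with $x$ non-ground. By (p2), $D$ is either ground -- in which case $x$ is a constant, a contradiction -- or it contains a binary literal (the remaining alternative of (p2) forbids (in)equalities). In the latter case, by the preliminary observation and (p3), $x$ appears in some positive binary body goal of $C$; by part (1) of the binary-first rule that goal precedes $E(x,y)$, and Proposition~\ref{prop:binterm} shows its success grounds $x$. This contradicts $x$ being a variable at call time. The symmetric argument handles $y$.

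For part (a), let $G=U(X)$ be a unary body goal of $C$ called with $X$ still a variable. Exactly the same reasoning rules out the possibility that $D$ contains any binary literal. By (p2), $D$ must then have exactly one variable $z$, no constants, and no (in)equalities, so $D$ is a disjunction of (possibly negated) unary literals all with argument $z$; consequently $C$ has a unary head on $z$ with a body of unary goals all on $z$. Hence $X$ is the head variable of $C$. The parent goal $PG$ invoked $C$, so $PG$ has the same (unary) functor as the head of $C$; since the head variable is still unbound at the call time of $G$, the argument of $PG$ must itself have been a variable unified with the head variable -- i.e.\ the same variable as $X$. The main obstacle throughout is the preliminary bookkeeping establishing that the head of $C$ is unary whenever $C$'s body contains any unary or (in)equality goal; once this is in place, both parts reduce to a direct application of (p3), the binary-first rule, and Proposition~\ref{prop:binterm}.
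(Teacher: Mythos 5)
Your proof is correct and follows essentially the same route as the paper's: both parts rest on (p2)--(p4), the removal of negated-binary-head contrapositives, part (1) of the binary-first rule, and Proposition~\ref{prop:binterm}, used to force any offending variable into a preceding positive binary body goal that would have grounded it. Your preliminary observation (that a clause whose body contains a unary or (in)equality goal has a unary head) is just a slightly more explicit packaging of the paper's appeal to (p4) and Proposition~\ref{prop:remneg}, and your closing argument in part (a) spells out the unification step with the parent goal that the paper leaves implicit.
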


\begin{proof}\small
(a) Let $G$ be the unary goal which is invoked with the variable
  argument $V$ in clause $C$\@. Because of condition (p4) in
  Definition~\ref{def:dlike} and Proposition~\ref{prop:remneg}, a
  unary goal can only be invoked from within a clause of a unary
  predicate. If there are binary goals within the body of $C$, then
  $G$ is always preceded by a binary goal containing $V$ according to
  (p3) in Definition~\ref{def:dlike} and the binary-first
  rule. Because of Proposition~\ref{prop:binterm}, however, we know
  that variable $V$  is already instantiated by the time $G$ is
  invoked. This means there can be no binary goals in the body of $C$, and
  so, according to (p2) in Definition~\ref{def:dlike}, $C$ contains a
  single variable. Consequently, $V$ is the variable appearing in the head of $C$.

\noindent (b) Assume that an equality goal is invoked with an
uninstantiated variable $V$ within a Horn clause obtained from the DL
clause $C$\@. Because of (p2) and (p3) there has to be a binary literal
in clause $C$ containing $V$\@. Because of (p4) this binary literal is
negative. The binary-first rule means that the given binary literal is
executed before the equality predicate, and
Proposition~\ref{prop:binterm} ensures that $V$ is instantiated, which
contradicts our initial, indirect assumption.
\end{proof}

\begin{proposition}
If the binary-first rule is applied, the \texttt{=\,/2} and
\verb+dif/2+ predicate invocations can be replaced by \texttt{==\,/2}
and \verb+\+\texttt{==\,/2} (the standard Prolog term comparison predicates
checking if their arguments are identical, and non-identical respectively).
\end{proposition}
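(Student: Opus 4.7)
The plan is to invoke Proposition~\ref{prop:ground}(b) as the pivotal ingredient: once we know that every call to \texttt{=\,/2} or \verb+dif/2+ reaches Prolog with both arguments ground, the equivalence with \texttt{==\,/2} and \verb+\+\texttt{==\,/2} reduces to a purely operational fact about these four built-ins on ground terms. So I would organise the argument as a single short forward chain: (i) apply Proposition~\ref{prop:ground}(b) to conclude that, under the binary-first rule, each \texttt{=\,/2} and \verb+dif/2+ goal is invoked with ground arguments; (ii) check that on ground arguments these calls behave identically to the corresponding term-comparison built-ins.

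For step (ii), I would argue as follows. For \texttt{=\,/2}: since both arguments are ground, unification reduces to a structural identity test, with no variable bindings ever produced; this is precisely what \texttt{==\,/2} computes. For \verb+dif/2+: on ground arguments \verb+dif/2+ succeeds iff the two terms are not unifiable, which, by the same structural-identity observation, is iff they are not identical; this is exactly what \verb+\+\texttt{==\,/2} computes. Note that the Unique Name Assumption is what makes this substitution semantically meaningful in the DL setting (two distinct individual names denote distinct individuals, so identity of names coincides with equality of individuals), but this was already exploited when introducing \texttt{=\,/2} and \verb+dif/2+ in the canonical form; for the present proposition only the groundness of the arguments is needed.

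I do not foresee a real obstacle: the heavy lifting is done by Proposition~\ref{prop:ground}(b), and the rest is a one-line observation about Prolog built-ins. The only subtlety worth spelling out is that the replacement preserves not just success/failure but also the absence of variable bindings, which is automatic because both of the replacement predicates never bind variables; hence no later goal in the clause body can be affected by the substitution, and the overall Prolog execution of any conjunctive query is unchanged.
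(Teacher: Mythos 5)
Your proposal is correct and follows essentially the same route as the paper: the paper's own proof is precisely the one-line observation that, by Proposition~\ref{prop:ground}(b), the equality goals are invoked with ground arguments, and on ground arguments \texttt{==\,/2} and \verb+\+\texttt{==\,/2} have the same semantics as the predicates they replace. Your additional remarks on the absence of variable bindings and the role of UNA are accurate elaborations but not needed beyond what the paper states.
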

\begin{proof}\small
When invoked with ground arguments, the built-in predicates \texttt{==\,/2} and
\verb+\+\texttt{==\,/2} have the exact same semantics as the predicates they
replace. 
\end{proof}

Let us
now examine what ancestor--descendant pairs are possible for unary
predicates. In general, we have the following five cases, where
variables \texttt{X} and \texttt{Y} are distinct, but predicate names
\texttt{q} and \texttt{p}, as well as constants \texttt{i} and
\texttt{j} can be the same.

\begin{enumerate}
\item[(c1)] within executing \texttt{p(i)} we encounter a goal \texttt{q(j)}
\item[(c2)] within executing \texttt{p(i)} we encounter a goal \texttt{q(X)}
\item[(c3)] within executing \texttt{p(X)} we encounter a goal \texttt{q(i)}
\item[(c4)] within executing \texttt{p(X)} we encounter a goal \texttt{q(X)}
\item[(c5)] within executing \texttt{p(X)} we encounter a goal \texttt{q(Y)}
\end{enumerate}

\noindent The following proposition states that, in the case of DL
predicates, some of these cases cannot occur.

\begin{proposition}
\label{prop:anc}
Let $S$ be a set of DL predicates. When using the binary-first rule,
cases (c2), (c3), and (c5) cannot occur during Prolog
execution. Furthermore, if a unary goal is invoked  with a variable
argument \texttt{X}, then all its ancestors (including the outermost one,
the concept query goal) are unary goals having variable \texttt{X} as their argument.
\end{proposition}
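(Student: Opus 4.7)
My plan is to prove the second assertion first, from which cases (c2) and (c5) follow at once, and then treat (c3) separately by inspecting the currently active clause of the ancestor $p$. For the second assertion, I apply Proposition~\ref{prop:ground}(a) inductively along the ancestor chain of a unary goal $q$ invoked with variable argument $X$: since the parent of a unary goal with variable argument $V$ is again a unary goal with the same variable $V$, walking up the chain shows that every ancestor of $q$ is a unary goal whose argument is the variable $X$.

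Cases (c2) and (c5) are immediate corollaries. In each, the descendant $q$ is invoked with a variable argument, so by the second assertion every ancestor must share that same variable; but the ancestor in (c2) has a constant argument, while the ancestor in (c5) has a distinct variable $Y \neq X$, yielding contradictions in both cases.

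For (c3), I argue by contradiction: suppose that at the moment $q(i)$ is invoked with constant $i$, an ancestor $p$ still displays an unbound variable $X$ as its argument, and split on the form of $p$'s currently active clause. If that clause contains any binary body literal, then (p3) forces $X$ to occur in some binary literal, and condition~(2) of the binary-first rule requires some such binary literal to precede every unary body goal. By Proposition~\ref{prop:binterm}, the success of that binary literal would bind $X$; since $X$ remains unbound, the left-to-right Prolog selection rule implies that no unary body goal of $p$ has been entered. On the other hand, by (p4) together with Proposition~\ref{prop:remneg}, clauses with a binary head contain only (negative) binary literals in their bodies, so binary subgoals never invoke unary subgoals; combining these two observations, $q$ cannot lie anywhere in $p$'s subtree, contradicting the assumption that $q$ is a descendant of $p$. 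If instead $p$'s active clause contains no binary body literal, (p2) forces it to be constant-free with $X$ as its only variable, so every body goal is a unary call with argument $X$; iterating the same analysis (equivalently, applying the second assertion along the subchain) yields that every unary descendant of $p$ is invoked with argument $X$, again contradicting the invocation of $q$ with the constant $i$. The case split in (c3) is the main obstacle, since it requires combining (p2), (p3), the binary-first rule, Proposition~\ref{prop:binterm}, and the structural fact — from (p4) and Proposition~\ref{prop:remneg} — that binary subgoals contain no unary subcalls, whereas (c2) and (c5) follow almost formally from the second assertion.
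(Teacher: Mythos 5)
Your proof is correct and follows essentially the same route as the paper: the second assertion and cases (c2), (c5) are obtained by iterating Proposition~\ref{prop:ground}(a) up the ancestor chain, exactly as in the paper's proof. For (c3) the paper runs the induction in the opposite direction---it uses Proposition~\ref{prop:binterm} and part~(2) of the binary-first rule to show that the \emph{parent of a ground unary goal is ground}, hence all ancestors of \texttt{q(i)} are ground---which is just the contrapositive of your downward case split and rests on the same ingredients, so the two arguments coincide in substance.
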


\begin{proof}\small
When the binary-first rule is used, the cases (c2) and (c5) cannot
occur, as a direct consequence of
Proposition~\ref{prop:ground}. Furthermore Proposition~\ref{prop:binterm}
and part (2) of the definition of the binary-first rule ensure that the
parent of a ground unary goal is ground, too. This implies that all
ancestors of a ground unary goal are ground, hence case (c3) cannot occur.

Now assume that the condition of the second claim holds,
i.e.\ there is a unary goal with a variable argument on the ancestor list.
This means that case (c4) has to apply to the given goal. Consequently, the argument
of its parent goal is the same variable. By repeatedly applying this
argumentation we can conclude that all ancestors of the given goal have the
same variable as their argument.
\end{proof}

\paragraph{Loop elimination}
As the next simplification of the PTTP approach for DL programs,
we replace iterative deepening by normal Prolog
depth-first search, extended with a straightforward \emph{loop elimination}
technique. This feature, which involves  pruning certain branches of the
Prolog search tree,  appeared already in PTTP, as an optimisation
\cite{stickel92prolog}. However, in the context of DL programs, as opposed to
arbitrary first-order logic clauses, loop elimination can itself ensure
termination, as discussed below. 

In the next two definitions we refer to an extension of Prolog
execution where the list of ancestor goals is maintained.

\begin{definition}[Goals subject to loop elimination]
A Prolog goal $G$ encountered in the context of an ancestor list $L$ is
\emph{subject to loop elimination} if $G$ occurs in $L$: more
precisely, if $L$ contains an element $G'$ for which $G$ \texttt{==}
$G'$ holds. Recall that \texttt{==} denotes the standard Prolog predicate
which succeeds if its operands are identical.
\end{definition}

\begin{definition}[Loop elimination]
Let $P$ be a Prolog program and $G$ a Prolog goal. Executing $G$
w.r.t.\ $P$ using \emph{loop elimination}  means the
Prolog execution of $P$  extended in the following way: we
stop the given execution branch with a failure whenever we encounter a
goal $G$ which is subject to loop elimination.
\end{definition}

\noindent Using the notion of loop elimination we can formulate some
termination results.

\begin{proposition}[Termination of DL execution]
\label{prop:terminate}
Let $S$ be a set of DL predicates. Assuming loop elimination and that the
binary-first rule is used, the execution of an arbitrary goal w.r.t.\ $S$
always terminates.
\end{proposition}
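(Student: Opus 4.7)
The plan is to show that no infinite path can exist in the Prolog search tree; because each predicate has finitely many clauses the branching degree is finite, and termination then follows from König's lemma. So I would suppose for contradiction that some branch of the search tree is infinite. Along it the ancestor list must grow without bound, because loop elimination forbids the reoccurrence of any goal (under \texttt{==}). The core of the proof is then to bound the ancestor list length uniformly in terms of the DL program, yielding the contradiction.

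I would classify the goals that can appear on the ancestor list using the previously established propositions. Equality and inequality goals are always ground by Proposition~\ref{prop:ground}, hence finite in number. Ground unary and binary goals are drawn from the finite set determined by the signature of $S$ together with the (finite) set of individual names occurring in the ground clauses. A unary goal with a variable argument belongs, by Proposition~\ref{prop:anc}, to an uninterrupted chain of unary goals all sharing the same variable $X$; loop elimination forces the predicate symbols along this chain to be pairwise distinct, so the chain length is bounded by the number of unary predicate names. For non-ground binary goals, Proposition~\ref{prop:remneg} leaves only binary-headed clauses derived from Figure~\ref{figure:dlclauses}(1), (2), (5), whose bodies are either empty or consist of a single binary literal on the same variables; descending from any non-ground binary goal one thus traverses a linear chain of binary goals sharing the same argument tuple (up to transposition), again bounded by the number of binary predicate names.

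Assembling these pieces, the ancestor list decomposes into a ``spine'' of ground goals (each appearing at most once, hence bounded in total by the finite number of ground atoms), an optional prefix of same-variable unary goals at the top (bounded by the number of unary predicate names), and at most one active chain of binary goals descending from the most recently invoked binary call (bounded by the number of binary predicate names). Each component admits a bound that depends only on the program $S$, so the ancestor list length is globally bounded, contradicting its unbounded growth along the supposed infinite branch. Termination follows.

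The main obstacle is handling non-ground binary goals that carry fresh clause-local variables introduced by clauses of the form in Figure~\ref{figure:dlclauses}(4). Because such variables are renamed apart at each clause invocation, one must rule out an unbounded accumulation of syntactically distinct ``variable'' binary goals on the ancestor list. The argument has to combine the tree-like structure of the binary literals in a type~(4) clause, Proposition~\ref{prop:binterm} (which guarantees that each local variable becomes ground the moment its binary goal succeeds), and the binary-first rule (which delays any unary or equality goal in the body until after the variable has been bound). Showing that these together allow at most one ``live'' non-ground binary descent chain per currently-open clause invocation, and that the stack of open invocations is itself bounded by the finite number of ground goals on the spine, is the technically delicate part of the proof.
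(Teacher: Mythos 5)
Your argument is correct and follows essentially the same route as the paper's proof: assume non-termination, observe that loop elimination then forces an unboundedly long ancestor list of pairwise distinct goals, and derive a contradiction from the finiteness of predicate and constant names together with Propositions~\ref{prop:anc} and~\ref{prop:binterm}, property (p4) of Definition~\ref{def:dlike}, and the binary-first rule. The only real difference is presentational: where you bound the three segments of the list separately (and flag the fresh-variable bookkeeping as delicate), the paper short-circuits this by noting that the very same facts cap the number of uninstantiated variables present on the list at any moment by a small constant, so an infinite supply of distinct goals is immediately impossible.
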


\begin{proof}\small
Let us indirectly assume that there exists a goal $G$ the execution of
which does not terminate. Because of loop elimination this can only
happen if we can build an ancestor list with infinitely many distinct
goals. Since the number of predicate and constant names is finite this
means that the ancestor list contains an infinite number of distinct
variables.

However, according to Proposition~\ref{prop:anc}, unary goals on the
ancestor list contain at most one variable. Property (p4) in
Definition~\ref{def:dlike} implies that any variable appearing in a clause
body within a binary DL predicate appears in the corresponding clause head,
too. Thus a new variable can only be introduced when a binary goal is
invoked in a unary predicate. However, property (p4) also implies that a
binary predicate invokes binary goals only, with no new variables. Furthermore,
Proposition~\ref{prop:binterm} states that by the time a binary goal exits,
both its arguments are instantiated. This means that the ancestor list can
contain at most two uninstantiated variables at any time, contradicting our
indirect assumption.
\end{proof}

\noindent Having proved that loop elimination and the
binary-first rule guarantee termination, let us consider the issue
whether loop elimination is complete, i.e.\ any solution that can be
obtained by PTTP can also be obtained in the presence of loop
elimination. 

Note that for normal Prolog execution, loop elimination is obviously
complete. That is, given an arbitrary proof tree of a goal $P$ where
goal $G_1$ appears in the subtree of an identical goal $G_2$ we can
always create a new proof tree of $P$ where we replace the proof of
$G_1$ by the proof of $G_2$. Continuing this process we can obtain a
proof tree of $P$ which does not contain any goals subject to loop
elimination.

However, PTTP extends the normal Prolog execution by applying ancestor
resolution for goals. This means that successful execution of a goal
$G$ may depend on the location of $G$ within a proof tree (as this
determines the ancestors of $G$). The
completeness of loop elimination in the presence of ancestor
resolution was first stated in \cite{stickel92prolog}. We now give a
reformulation of this statement. 

\begin{proposition}[Completeness of loop elimination]
\label{prop:completeness}
Let $T$ be a proof tree of a goal $G$ corresponding to a PTTP
execution, which contains a goal subject to loop elimination. It is
possible to create another proof tree of goal $G$ which contains no
goals subject to loop elimination.
\end{proposition}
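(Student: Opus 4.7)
The plan is to establish this by strong induction on the number of nodes in the proof tree $T$, showing that whenever $T$ contains a goal subject to loop elimination, we can construct a strictly smaller proof tree $T'$ of the same goal $G$. Iterating the construction terminates (since the tree size is a positive integer) and yields a proof tree that contains no goals subject to loop elimination.

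For the induction step, I would pick a \emph{loop pair}: a goal $G_1$ appearing in $T$ together with an ancestor $G_2$ of $G_1$ satisfying $G_1 \equiv G_2$. The transformation is to graft the subtree $T_1$ rooted at the occurrence of $G_1$ onto the position of $G_2$, discarding the original subtree $T_2$. Because $G_1$ and $G_2$ are syntactically identical, $T_1$ is, viewed on its own, already a proof of $G_2$, and the number of nodes of the resulting tree $T'$ is strictly less than that of $T$. So the only thing one has to verify is that $T'$ is still a legal PTTP proof tree with respect to the ancestor-list discipline in its new position.

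The main obstacle lies in verifying that very claim. The ordinary Prolog resolution steps inside $T_1$ depend only on local clause invocations and on the accumulated substitution, so they transfer verbatim. The delicate case is ancestor resolution: a node inside $T_1$ may, in the original $T$, have resolved against an ancestor $A$. Three subcases arise. If $A$ is $G_2$ itself or an ancestor of $G_2$, then $A$ is still present in the ancestor list at the corresponding node of $T'$, so the step carries over. If $A$ is $G_1$, then after grafting, the corresponding slot on the ancestor list is occupied by $G_2$, and since $G_1 \equiv G_2$ the same unification succeeds. The genuinely problematic subcase is when $A$ lies strictly between $G_2$ and $G_1$ in $T$ and is therefore lost in $T'$.

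To dispose of this last subcase, my plan is to choose the loop pair carefully — for instance, picking $G_2$ as deep as possible, so that the chain between $G_2$ and $G_1$ is minimal and the set of intermediate ancestors is as small as possible — and then to invoke the rerouting argument of Stickel \cite{stickel92prolog}: an ancestor resolution step using an intermediate ancestor $A$ contributes only a particular unification, and this contribution can be reproduced by a combination of the ancestor resolution step that justified $A$'s own appearance in $T$ together with the surviving ancestors of $G_2$, yielding an equivalent derivation inside $T'$. Once every ancestor-resolution step in the grafted subtree has been re-justified, $T'$ is a valid PTTP proof tree strictly smaller than $T$, the induction hypothesis applies, and the proposition follows.
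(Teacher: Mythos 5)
The first thing to note is that the paper does not actually prove this proposition: it argues only the easy case (pure Prolog execution without ancestor resolution, where grafting the subtree of the inner goal $G_1$ onto the position of the identical ancestor $G_2$ obviously works), explicitly observes that ancestor resolution is what makes the general case non-obvious, and then states the proposition as a reformulation of a result of Stickel, citing \cite{stickel92prolog} in lieu of a proof. Your grafting-plus-induction-on-tree-size skeleton is exactly that easy argument, and your case analysis of where a resolved-against ancestor $A$ can live is a genuine step beyond what the paper writes down. You also correctly isolate the crux: ancestor resolution steps inside the grafted subtree that refer to an ancestor strictly between $G_2$ and $G_1$, which disappears from the ancestor list after grafting.

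However, your proposed resolution of that crux does not close the gap. Choosing $G_2$ as deep as possible (i.e.\ taking the nearest identical ancestor) makes the chain between $G_2$ and $G_1$ free of further identical pairs, but it does not make the set of intermediate ancestors empty, so the problematic subcase remains in full force. And the claim that a resolution against an intermediate ancestor $A$ ``can be reproduced by the ancestor resolution step that justified $A$'s own appearance together with the surviving ancestors of $G_2$'' is not an argument: $A$ appears on the path because it is a body goal of the clause invoked by its parent, not because of any ancestor resolution step, so there is nothing to combine; what you have written is essentially a restatement of the property to be proved. (A further unaddressed point is that the grafted subtree may compute a different answer substitution for the variables of $G_2$ than the original subtree did, and the rest of the proof tree outside $G_2$ may depend on those bindings; a lifting-style argument is needed here.) In the end your proof, like the paper's, rests on an appeal to Stickel; the difference is that you present that appeal as if it discharged a specific, well-formulated lemma, which, as stated, it does not. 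Given that the paper itself punts, this is a defensible place to stop -- but the text should then say plainly that the hard case is taken from the literature rather than dressing it up as a completed rerouting argument.
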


\paragraph{Deterministic ancestor resolution}
We now present an important property of ancestor resolution for DL
programs, which is the basis of our last simplification of the PTTP approach.

\begin{proposition}[Deterministic ancestor resolution]
\label{prop:nounification}
If loop elimination and the binary-first rule is applied for DL
predicates, exactly one ancestor can be applicable in a successful
ancestor resolution step, i.e.\ ancestor resolution is deterministic.
\end{proposition}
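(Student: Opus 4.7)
My plan is to show that, whenever ancestor resolution succeeds on a unary goal $G$, the matching ancestor is syntactically forced to be the exact negation of $G$, and that loop elimination prevents two identical copies of this negation from ever coexisting on a single ancestor chain.

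First, I would reduce to the unary case. Because contrapositives with negated binary heads have been removed (Proposition~\ref{prop:remneg}) and ancestor resolution is not needed for binary predicates, only unary goals can be subject to ancestor resolution. Let $G$ be such a goal, and suppose ancestor resolution succeeds: there is an ancestor $A$ of $G$ such that $A$ unifies with the negation of $G$. By Proposition~\ref{prop:anc}, there are two cases: either $G$ is ground, in which case every ancestor of $G$ is ground as well, or $G$ has a single variable argument $X$, in which case every ancestor of $G$ is a unary goal with the same variable $X$ as its argument.

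Next, I would pin down $A$ syntactically in each case. If $G = p(c)$ (or $\neg p(c)$) for a constant $c$, then $A$ is ground and unifiable with $\neg G$, so $A$ must be literally $\neg p(c)$ (or $p(c)$). If $G = p(X)$ (or $\neg p(X)$), then $A$ has $X$ as its sole argument and unifies with $\neg G$, so again $A$ must be literally $\neg p(X)$ (or $p(X)$). In both cases, the applicable ancestor is uniquely determined, as a syntactic object, by $G$ itself; in particular any two applicable ancestors would have to be identical (in the \texttt{==} sense).

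Finally, I would invoke loop elimination to rule out two such copies. Suppose, for contradiction, that the ancestor list contained two identical goals $A_1$ and $A_2$, with $A_2$ occurring strictly below $A_1$. Then at the moment $A_2$ was about to be invoked, $A_1$ was already present as an ancestor, so $A_2$ is subject to loop elimination and the execution branch would have failed before $A_2$ was ever placed on the ancestor list. Hence at most one ancestor matches, and a successful ancestor resolution step uses exactly this unique ancestor, which is what deterministic ancestor resolution means. The only step that requires care is the syntactic pinning-down in the variable case, where Proposition~\ref{prop:anc} is what prevents any variable renaming from producing an alternative matching ancestor.
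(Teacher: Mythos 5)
Your proof is correct and follows essentially the same route as the paper's: both arguments use Proposition~\ref{prop:anc} to restrict the possible ancestor--descendant configurations to the ground case and the shared-single-variable case (the paper's cases (c1) and (c4)), conclude that the matching ancestor is syntactically determined by the goal, and then invoke loop elimination to rule out a second identical copy on the ancestor list. Your write-up merely spells out more explicitly the syntactic pinning-down and the loop-elimination contradiction that the paper states tersely.
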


\begin{proof}\small
Let us examine cases (c1) and (c4), allowed by
Proposition~\ref{prop:anc}, for \texttt{q = not\_p}, i.e.\ the case
relevant for ancestor resolution. In the case of (c1), ancestor resolution
succeeds if \texttt{i} and \texttt{j} are the same, and fails
otherwise. Note that this ancestor resolution step can succeed only
once. This is because loop elimination ensures that the ancestor list
cannot contain \texttt{p(i)} more than once.   Case (c4) succeeds with no
substitution and, similarly to (c1), it can succeed only once. This is because
\texttt{p(X)} cannot occur in the ancestor list more than once and if
\texttt{p(X)} is there, then no goal of the form \texttt{p(Y)} can occur on
the ancestor list, where \texttt{Y} is a variable different from
\texttt{X} (cf.\ Proposition~\ref{prop:anc}).
\end{proof}

\paragraph{Principles of DLog execution}
 To conclude this section, Figure~\ref{figure:pttpdif} gives a
summary of the principles we use in the execution of DL predicates and
compare these to their counterparts in PTTP. 

\begin{figure}[htbp]
\begin{magicbox}
\vspace*{1.3ex}
\begin{enumerate}
\item[(a)] DLog uses normal Prolog unification rather than unification with occurs check
\item[(b)] DLog uses loop elimination instead of iterative deepening
\item[(c)] DLog eliminates contrapositives with negated binary literals in the head
\item[(d)] DLog does not apply ancestor resolution for roles
\item[(e)] DLog uses deterministic ancestor resolution
\end{enumerate}
\end{magicbox}
\caption{A comparison of DLog with generic PTTP.}
\label{figure:pttpdif}
\end{figure}

\noindent We also formulate the main result of this section as the
following theorem.

\begin{theorem}[soundness and completeness of the DLog execution]
\label{theorem:dlprograms}
Let $S$ be a set of DL clauses in reduced form and $Q$ a conjunctive
query. Let $P$ be a set of Prolog clauses obtained from $\textit{PDL}(S)$ by
removing clauses with negated binaries in the head,  ordering clause
bodies according to the binary-first rule, and replacing \texttt{=\,/2} and \verb+dif/2+
 by \texttt{==\,/2} and \verb+\+\texttt{==\,/2}, respectively.  Let us extend a
standard Prolog engine with (1) loop elimination and (2) deterministic
ancestor resolution for unary predicates only. If the extended Prolog
engine is invoked with the program $P$ and goal $Q$, it will terminate
and enumerate those and only those ground instantiations of the
variables of $Q$ for which $Q$ is entailed by $S$.
\end{theorem}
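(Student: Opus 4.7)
The plan is to deduce the theorem from the known soundness and completeness of the generic PTTP procedure on $\textit{PDL}(S)$, by showing that each of the five modifications summarised in Figure~\ref{figure:pttpdif}, together with the binary-first body ordering and the replacement of the equality built-ins, preserves the set of computed answers and yields termination. The argument therefore proceeds as a chain of reductions: at each step one moves from one execution strategy to a slightly simpler one, citing the proposition that justifies the step.

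I would begin from the base case. Because $S$ is in reduced form, any ground consequence of $S$ is still a consequence of $S$ extended with $p(x)$ for each top predicate $p$, so no information relevant to a conjunctive query built from non-top unary and positive binary literals is lost by the reduction. Generic PTTP applied to $\textit{PDL}(S)$ is known to be sound and complete, and this will be my starting execution strategy. I would then justify each modification in turn. Step~(a) is sound because property~(p1) makes all terms function-free, so no cyclic term can ever arise during unification. Step~(b) rests on Proposition~\ref{prop:completeness} for completeness of loop elimination and on Proposition~\ref{prop:terminate} for termination. Step~(c) is exactly Proposition~\ref{prop:remneg}. Step~(d) is immediate from the observation, already used in that proposition, that negated binary atoms never appear on the ancestor list once those clauses are removed. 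Step~(e) is Proposition~\ref{prop:nounification}: at most one ancestor ever matches in a successful ancestor-resolution step, so restricting attention to a single deterministic match loses no answer.

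The two remaining ingredients are the binary-first body ordering and the built-in replacement. Reordering the literals of a Horn-clause body does not alter the set of SLD-computed substitutions, so the binary-first rule is declaratively neutral; it is needed only to establish the groundness invariants of Proposition~\ref{prop:ground}, on which step~(b), step~(e), and the built-in replacement depend. The replacement of \texttt{=\,/2} and \texttt{dif/2} by their term-comparison counterparts is then justified by part~(b) of Proposition~\ref{prop:ground}: their arguments are always ground at call time, and under the Unique Name Assumption syntactic (non)identity of ground individual names coincides with (in)equality. Termination of the final procedure is exactly Proposition~\ref{prop:terminate}, so the conclusion follows.

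The main obstacle, I expect, is keeping the order of the reductions honest, since several justifications are mutually entangled: the correctness of the binary-first rule relies on the simplifications having been made; Proposition~\ref{prop:nounification} assumes both loop elimination and the binary-first rule; and the completeness of loop elimination must remain valid in the presence of the other simplifications. I would therefore fix one canonical ordering of the reductions and carry through the argument by induction on proof-tree depth, maintaining as inductive invariants the function-freeness guaranteed by~(p1), the groundness properties of Proposition~\ref{prop:ground}, and the structural constraints on ancestor lists given by Proposition~\ref{prop:anc}. Verifying at each reduction step that every PTTP-derivable answer is preserved and that the invariants needed by the next reduction still hold, together with the termination guarantee of Proposition~\ref{prop:terminate}, then yields the soundness, completeness, and termination claims of the theorem.
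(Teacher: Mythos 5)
Your proposal is correct and follows essentially the same route as the paper: the published proof is a one-line reduction to the soundness and completeness of generic PTTP together with Propositions~\ref{prop:remneg}, \ref{prop:terminate}, \ref{prop:completeness} and~\ref{prop:nounification}, which is exactly the chain of reductions you spell out. Your version merely makes explicit the dependency ordering among the simplifications (and the supporting roles of Propositions~\ref{prop:ground} and~\ref{prop:anc}) that the paper leaves implicit.
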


\begin{proof}\small
This is a direct consequence of the fact that PTTP is a sound and
complete FOL theorem proving technique, and of Propositions
\ref{prop:remneg}, \ref{prop:terminate}, \ref{prop:completeness} and
\ref{prop:nounification}.
\end{proof}

\begin{corollary}
Let $\textit{KB}$ be a $\mathcal{SHIQ}$ knowledge base and $Q$ a
conjunctive query in which no concept equivalent to $\top$ occurs. In
this case the technique of Theorem~\ref{theorem:dlprograms}, applied
to $\textit{DL}(\textit{KB})$ and $Q$ provides finite, sound and
complete execution for conjunctive queries.
\end{corollary}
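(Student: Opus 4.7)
The plan is to reduce this corollary to Theorem~\ref{theorem:dlprograms} by constructing the appropriate reduced set of DL clauses from $\textit{DL}(\textit{KB})$ and showing that the restriction on $Q$ guarantees that nothing is lost by the reduction.

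First, by the Proposition immediately preceding Definition~\ref{def:reduced} (every clause of $\textit{DL}(\textit{KB})$ is a DL clause), $\textit{DL}(\textit{KB})$ satisfies the hypotheses of Definition~\ref{def:reduced}. So I would let $S$ denote the reduced form of $\textit{DL}(\textit{KB})$ and invoke Theorem~\ref{theorem:dlprograms} on $S$ and $Q$. This immediately yields termination and a sound and complete enumeration of those ground instantiations of the variables of $Q$ for which $S$ entails $Q$. Termination is already the full claim for the ``finite'' part of the corollary, so the remaining work is to show that entailment of $Q$ with respect to $S$ coincides with entailment of $Q$ with respect to $\textit{DL}(\textit{KB})$ (equivalently, with respect to $\textit{KB}$, since the first-stage transformation is sound and complete by the results of \cite{zombori}).

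This is exactly where the hypothesis on $Q$ comes in. By the Proposition following Definition~\ref{def:reduced}, the set $S^{+} = S \cup \{p(x) : p \textrm{ is a top predicate of } \textit{DL}(\textit{KB})\}$ is logically equivalent to $\textit{DL}(\textit{KB})$. Thus for any query $Q$, $\textit{DL}(\textit{KB}) \models Q\sigma$ iff $S^{+} \models Q\sigma$. The step to carry out is to argue that, under the assumption that no concept equivalent to $\top$ occurs in $Q$, the extra facts $p(x)$ in $S^{+}$ cannot influence the set of ground instantiations satisfying $Q$. More precisely, the top predicates in $\textit{DL}(\textit{KB})$ correspond exactly to (possibly negated) atomic concepts that are equivalent to $\top$ modulo the TBox, so by hypothesis no literal of $Q$ involves such a predicate, and hence the clauses $p(x)$ play no role in any derivation of a ground instance of $Q$. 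Consequently $S^{+} \models Q\sigma$ iff $S \models Q\sigma$, closing the chain of equivalences.

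The only delicate step is the last one: one has to verify that the top predicates arising from the reduction really are the ones corresponding to concepts equivalent to $\top$ under $\textit{KB}$, and that no auxiliary top predicate introduced by the first stage of the transformation (e.g.\ by Skolemisation/structural transformation, or by the elimination of transitivity) can interfere with a query that, at the DL level, is free of top concepts. I would handle this by recalling the discussion preceding Definition~\ref{def:reduced} (which identifies top predicates with negations of unsatisfiable concepts), and by observing that the auxiliary predicates introduced during the first stage are fresh and therefore cannot occur in user queries. Once this is dispatched, soundness, completeness and termination for conjunctive queries over $\textit{KB}$ follow directly from Theorem~\ref{theorem:dlprograms}.
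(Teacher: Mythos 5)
The paper states this corollary without any printed proof, treating it as an immediate consequence of Theorem~\ref{theorem:dlprograms} together with the two propositions of Section~\ref{dlclauses}; your argument is a correct elaboration of exactly that intended chain (pass to the reduced form, invoke the theorem, and use the equivalence of $S^{+}$ with $\textit{DL}(\textit{KB})$ plus the fact that top predicates occur neither in the reduced program nor, by hypothesis, in $Q$, so the added facts $p(x)$ are a conservative extension). Your closing remark correctly pinpoints the only point needing care --- that top predicates are precisely the predicates whose concepts are equivalent to $\top$, which is what the paper's hypothesis on $Q$ excludes --- so the proposal is sound and takes essentially the same route the paper intends.
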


\subsection{Interpreting DL predicates}

In Figure~\ref{figure:interpreter} we show a complete interpreter,
which is able to execute DL predicates stored as normal dynamic
predicates in Prolog.

\begin{figure}[htbp]
\begin{Verbatim}[numbers=left,numbersep=2pt,frame=single]
interp(true, _) :- !.
interp((Goal1, Goal2), AncList) :- !,                   
    interp(Goal1, AncList),                             
    interp(Goal2, AncList).
interp(Goal, AncList) :-                                
    (   equality(Goal) -> call(Goal)                    % (in)equalities
    ;   member(Goal0, AncList), Goal0 == Goal -> fail   % loop elimination                              
    ;   neg(Goal, NegGoal), memberchk(NegGoal, AncList) % ancestor resolut.
    ;   NewAncList = [Goal|AncList],                            
        clause(Goal, Body),                     
        interp(Body, NewAncList)
    ).

equality(_ == _).
equality(_ \== _).
\end{Verbatim}
\caption{A full interpreter for DL clauses.}
\label{figure:interpreter}
\end{figure}

The interpreter is invoked through the predicate \texttt{interp/2} with a
conjunctive query in the first, and an empty ancestor list in the second
argument. The interpreter handles  (in)equalities (line 6), ensures loop
elimination (line 7) and provides deterministic ancestor
resolution (cf.\ the use of \texttt{memberchk/2} in line 8). The new
ancestor list is built in line 9. 
The term 
\texttt{NegGoal} in line 8 is the negated version of \texttt{Goal}, as
defined below.

\begin{definition}[Negated version of a goal or a predicate]
\label{def:negatedterm}
The negated version of a Prolog goal $G$, denoted by $\textit{not\_}G$, is
constructed by removing the \texttt{not\_} prefix from the predicate name
of $G$, if it has such a prefix; or otherwise adding this prefix to the
predicate name. 

We overload this notation, and use it for predicate names and functors as well.
\end{definition}

For example, if $G_1=$\,\texttt{p(X)} and $G_2=\texttt{not\_p(X)}$, then
their negated versions are $\textit{not\_}G_1=\texttt{not\_p(X)}$ and
$\textit{not\_}G_2=\texttt{p(i)}$. Also, if $P$ is the predicate
\texttt{not\_p/1}, then $\textit{not\_}P$ denotes the predicate
\texttt{p/1}. 

We now show an example of invoking the interpreter. Assume that the DL
predicates of the Iocaste problem, as shown in
Figure~\ref{figure:dl_predicates_iocaste}, are loaded as dynamic Prolog
predicates.  One can then run the Iocaste query in the following way:

\begin{Verbatim}[numbers=none,numbersep=2pt,frame=single]
| ?- setof(X, interp('Ans'(X), []), Sols).
Sols = [i] ;
no
\end{Verbatim}

Note that the interpreter may return a solution several times, but the
standard Prolog predicate \texttt{setof/3} forms a \emph{set} of the
solutions, i.e.\ an ordered list containing each solution only once. In
Section~\ref{projection} we discuss an optimisation which ensures that
each solution of a unary predicate is returned exactly once.

According to
Theorem~\ref{theorem:dlprograms}, the interpreter is a sound and
complete theorem prover for DL programs and composite queries.

\subsection{Compiling DL predicates}
\label{compiling}

The interpreted solution is pretty straightforward. However, for
performance reasons, we also consider generating Prolog code which does
not require a special interpreter.  The idea is to include loop
elimination and ancestor resolution in the DL predicates themselves, and to
extend the predicates with an additional argument for storing the ancestor
list. 

In contrast with the interpreter, the compiler treats TBox and ABox clauses
separately. This is crucial to allow efficient execution of ABox queries,
e.g.\ by using databases. Therefore, we now distinguish between the TBox
and ABox part of a DL program:

\begin{definition}
Let $P$ be a  DL program. The ABox part of $P$, denoted by
$P_{\mathcal{A}}$, is the set of all ground facts in $P$. The TBox part of $P$, denoted
by $P_{\mathcal{T}}$, contains all remaining clauses,
i.e.\ $P_{\mathcal{T}} = P \setminus P_{\mathcal{A}}$.
\end{definition}

For
example, in Figure~\ref{figure:dl_predicates_iocaste}, clauses in
lines 3, 6 and 10 form the ABox DL predicates, while the remaining lines contain
the TBox DL predicates.

We need the following notion for describing the transformation process.

\begin{definition}[Signature]
Let $P$ be a  DL program. The \emph{signature} of  $P$ is the set of functors of the form $C/1$
and $R/2$ where $C$ is a unary predicate name and $R$ is a binary
predicate name which appears anywhere in $P$.
\end{definition}

\noindent We will apply the notion of signature to the ABox and TBox part of a
DL program (as these parts can be viewed as DL programs themselves). 
For example, if $P$ is the Iocaste DL program  shown in
Figure~\ref{figure:dl_predicates_iocaste}, then the signature of $P$ is $\{$\verb+Ans/1+,
\verb+not_Ans/1+, \verb+Patricide/1+, \verb+not_Patricide/1+,
\verb+hasChild/2+$\}$.  Note that predicate \verb+not_Ans/1+ has no
clauses, but it still belongs to the signature. The signature of
$P_{\mathcal{T}}$ is the same as that of $P$, while the signature of
$P_{\mathcal{A}}$ excludes 
\verb+Ans/1+ and \verb+not_Ans/1+.

We now define two auxiliary transformations which are used in the compilation of a DL
predicate into Prolog code.

\begin{definition}[The expanded version of a term]
Let $T$ be an arbitrary Prolog term with name $N$ and arguments $A_1,
\ldots, A_k$. Let $Z$ be another Prolog term. The \emph{expanded version} of
$T$ w.r.t.\ $Z$, denoted by $\textit{Expd}(T, Z)$ is defined as the
term $N(A_1, \ldots, A_k, Z)$.
\end{definition}

\begin{definition}[The ancestorised form of a clause]
For an arbitrary Prolog clause $C$, whose head is $H$ and body is $B_1,
\ldots, B_n$, the \emph{ancestorised} form of $C$, $\Omega(C)$, is a
Prolog clause defined as follows. The head of $\Omega(C)$ is
$\textit{Expd}(H, \texttt{AL})$, where $\texttt{AL}$ is a newly
introduced variable. The body of $\Omega(C)$ is $E_0, E_1, \ldots,
E_n$. Here, $E_0$ is the goal $\texttt{NewAL =
  [}H\texttt{|}\texttt{AL}\texttt{]}$, where \texttt{NewAL} is a new
variable, and $E_i = \textit{Expd}(B_i, \texttt{NewAL})$,
for $0 < i \leq n$.
\end{definition}

\noindent As an example, the ancestorised form of the Iocaste clause shown in
lines 1--2 in Figure~\ref{figure:dl_predicates_iocaste} is the
following: 

\begin{Verbatim}[numbers=left,numbersep=2pt,frame=single]
Ans(A, AL) :- NewAL = [Ans(A)|AL], hasChild(A, B, NewAL), 
              hasChild(B, C, NewAL), Patricide(B, NewAL), 
              not_Patricide(C, NewAL).
\end{Verbatim}

\noindent Here \texttt{AL} denotes the old, while \texttt{NewAL} denotes
the updated ancestor list. 

\begin{definition}[The compiled form of a DL predicate]\label{def:comp}
Let $P$ be a DL predicate with the functor $N/A$ and clauses
$C_1, \ldots, C_n, n \geq 0$. Let $H$ denote a most general goal with
name $N$ and arity $A$, i.e.\ a term each argument of which is a
distinct variable. The \emph{compiled version} of $P$, denoted by
$\Delta(P)$, is the sequence of clauses $F_1, \ldots, F_{n+3}$,
defined as follows, where $\textit{not\_}H$ is the negation
  of goal $H$, see Definition~\ref{def:negatedterm}:
\begin{description}
\item[$F_1$:] $\textit{Expd}$\texttt{($H$, AL) :- member(G, AL),
  G==$H$, !, fail.} (cf.\ line 7, Figure~\ref{figure:interpreter})
\item[$F_2$:] $\textit{Expd}$\texttt{($H$, AL) :-
  memberchk(}$\textit{not\_}$\texttt{$H$, AL).} (cf.\ line 8,
  Figure~\ref{figure:interpreter})
\item[$F_3$:] $\textit{Expd}$\texttt{($H$, AL) :- abox:$H$.}
\item[$F_{3+i}$:] $\Omega(C_{i}), 0 < i \leq n$.
\end{description}
\end{definition}

\noindent This definition says that the compiled version of a
predicate contains the ancestorised
version of the clauses in the predicate, preceded with three new clauses. These new
clauses are responsible for loop elimination, ancestor resolution and
for accessing the content of the ABox (stored in the Prolog module
\texttt{abox}, cf.\ the prefix ``\texttt{abox:}'').

Note that clause $F_{3}$ provides the link between a compiled predicate and
its ABox part, where the predicate representing the ABox has one argument
less than the compiled predicate. However, certain optimisations of
Section~\ref{generation} remove the additional argument of the 
compiled predicate.  By placing the ABox predicates in the \texttt{abox}
module we make sure that the ABox part is separated from the rest of the
compiled predicate. However, for the sake of readability, we omit the \texttt{abox:}
prefixes from the example programs presented in the paper.

\begin{definition}[The compiled form of a DL program]
Let $P$ be a  DL program, and let $\{N_1/A_1, \ldots, N_k/A_k\}$
  be the signature of $P_\mathcal{T}$. The
  compiled form of 
  $P$ is the set $\{C_1, \ldots, C_k\}
    \union \{\texttt{abox:}C \mid C \in \textit{PDL}(P_\mathcal{A})\}$, where $C_i =
    \Delta(Z_i)$ and $Z_i = \{C \in
      P_{\mathcal{T}}| $ $N_i/A_i$ is the functor of the head of $C$ $\}$.
\end{definition}

\noindent Thus the compiled form of a DL program $P$ is obtained by
compiling the predicates belonging to each functor appearing in the TBox
part of $P$, and adding the ABox DL predicates, stored in the \texttt{abox}
Prolog module. 

Some of the clauses in the compiled form of a predicate can be
omitted under certain conditions. For example,  we do not have to generate clauses of
type $F_2$ for roles (cf.\ item (d) in Figure~\ref{figure:pttpdif}). Furthermore, if
$N/A$ does not appear in the ABox signature, then we can omit the clause
of type $F_3$ for the predicate $N/A$. Also, there are predicates which
have no TBox clauses and thus consist of
nothing but an $F_3$ clause. In case of such \emph{atomic predicates} we
can even get rid of the $F_3$ clause, if we remove the additional argument
(holding the ancestor list) from each invocation and precede it with the
\texttt{abox:} module qualification. 
These optimisations will be covered in detail in 
Section~\ref{classification}.

Note that there can be predicates in a DL program which appear in
clause bodies but not in clause heads.  As an example, consider the
predicate \texttt{not\_Ans}, called in lines 5 and 8 of
Figure~\ref{figure:dl_predicates_iocaste}. This predicate has no
clauses, yet it can succeed using ancestor resolution.

Thus it is important that the operation $\Delta$ can be applied to 
empty predicates. In this case the compiled version consists solely of clauses $F_2$
and $F_3$, because clause $F_1$, serving for loop elimination, can be omitted
as an empty predicate cannot appear on the ancestor list.  If,
based on the ABox signature, we can omit $F_3$ as well, we get a
special case: a compiled predicate which can succeed \emph{only}
through ancestor resolution, i.e.\ using clause $F_2$. Predicates of
this type are called \emph{orphan predicates}, while their invocations are
called \emph{orphan goals} (for the exact definition of orphan predicates see
Section~\ref{subsection:princ_optim}).

\subsection{Compilation examples}
\label{shiqcompiled}

As discussed above, the compilation of DL predicates relies on adding
appropriate pieces of Prolog code to the DL clauses to handle ancestor
resolution and loop elimination. We demonstrate this technique by
presenting the complete Prolog translation of our two introductory
examples. 

The DL predicates of the Iocaste example were presented in
Figure~\ref{figure:dl_predicates_iocaste}. The compiled form of this DL
program is shown in Figure~\ref{figure:naive_translation_iocaste}.

\begin{figure}[htbp]
\begin{Verbatim}[numbers=left,numbersep=2pt,frame=single]
Ans(A, B) :-  member(C, B), C == Ans(A), !, fail.
Ans(A, B) :-  memberchk(not_Ans(A), B).
Ans(A, B) :-  C = [Ans(A)|B], hasChild(D, E), hasChild(A, D),  
              Patricide(D, C), not_Patricide(E, C).

Patricide(A, B) :- member(C, B), C == Patricide(A), !, fail.
Patricide(A, B) :- memberchk(not_Patricide(A), B).
Patricide(A, _) :- Patricide(A).
Patricide(A, B) :- C = [Patricide(A)|B], hasChild(E, D), 
                   hasChild(D, A), Patricide(D, C), not_Ans(E, C).

not_Patricide(A, B) :- member(C, B), C == not_Patricide(A), !, fail.
not_Patricide(A, B) :- memberchk(Patricide(A), B).
not_Patricide(A, _) :- not_Patricide(A).
not_Patricide(A, B) :- C = [not_Patricide(A)|B], hasChild(E, A), 
                       hasChild(A, D), not_Patricide(D, C), not_Ans(E, C).

not_Ans(A, B) :- memberchk(Ans(A), B).

Patricide(o).   not_Patricide(t). 
hasChild(i, o). hasChild(i, p). hasChild(o, p). hasChild(p, t).
\end{Verbatim}
\caption{The complete Prolog translation of the Iocaste problem.}
\label{figure:naive_translation_iocaste}
\end{figure}

Most predicates in this program  have an additional argument, used to pass
the ancestor list from call 
to call. For example, in line 10,  the goal
\texttt{Patricide(D, C)} is invoked, where \texttt{C} contains the new
ancestor list constructed in line 9.

In general, the content of the ABox can be either described as Prolog
facts, as shown in lines 20--21, or can be stored externally in some
database. In the latter case one has to  provide ``stubs'' to
access the content of the ABox. Namely, one should provide three
predicates, for \texttt{Patricide/1}, 
\texttt{not\_Patricide/1} and  \texttt{hasChild/2}. These
predicates should instantiate their head variables by querying
the underlying database in an appropriate manner. In the following, for the
sake of simplicity, we describe the 
content of the ABox as Prolog facts in the generated
programs. 

As the Iocaste  example does not contain role axioms, the role
predicate \texttt{hasChild} is an  atomic predicate. Therefore the
two-argument version is invoked directly, without ancestorisation, see
e.g.\ \texttt{hasChild(D, E)} in line 3.

In Figure~\ref{figure:naive_translation_iocaste}, the first clauses of most
predicates are responsible for loop elimination: the clauses in lines 1, 6,
and 12 check whether the ancestor list contains the goal in question, and
cause the predicate to fail, if this is the case.

Clauses in lines 2, 7, 13, and 18 are used to check whether the ancestor list
contains the negation of the goal in question. If so, ancestor resolution
takes place, which possibly substitutes the head variable \texttt{A}. As explained
earlier, we leave a choice point here, so that the remaining clauses of the
given predicate can be executed if, for example, the branch using the
ancestor resolution fails.

Line 18 in Figure~\ref{figure:naive_translation_iocaste}
shows how  an orphan predicate is translated, producing a single clause.

Having compiled the program of Figure~\ref{figure:naive_translation_iocaste},
we can retrieve the instances of the concept \texttt{Ans} in the following
way: 

\begin{small}
\begin{verbatim}
| ?- setof(X, 'Ans'(X, []), Sols).
Sols = [i] ?
\end{verbatim}
\end{small}

Let us now compare the handmade translation of the Iocaste problem in
Figure~\ref{figure:iocaste_handmade} with the machine translation shown
in Figure~\ref{figure:naive_translation_iocaste}. The goal \texttt{Ans(X)}
in the former corresponds to \texttt{Ans(X, 
[])} in the latter. Furthermore, \texttt{dPatricide(Z, X)} corresponds to
\texttt{Patricide(Z,[\ldots Ans(X)\ldots])}. The second argument of the
\texttt{dPatricide/2} goal, variable \texttt{X}, stores the top individual
of the Iocaste pattern (i.e.\ Iocaste herself), so that each member of the
chain in Figure~\ref{figure:patterns} can be checked to be a child of
\texttt{X}. The same effect is achieved in the machine translation by
placing \texttt{Ans(X)} on the ancestor list, and retrieving it later using
ancestor resolution.

A further difference is that the predicate
\texttt{not\_Patricide/2} does not appear in the handmade variant. This is
because \texttt{not\_Patricide/2} describes the same pattern as
\texttt{Patricide/2} (see Figure~\ref{figure:patterns}), but builds it in
the reverse order. 

Also note that the predicates in the machine translation have more clauses
than in the handmade version. Some of these are superfluous, and will
actually be removed by optimisations presented in
Section~\ref{generation}. This includes the clause responsible for loop
elimination in \texttt{Ans/2}, and the one responsible for ancestor
resolution in \texttt{Patricide/2}. However, the clause ensuring loop
elimination in \texttt{Patricide/2} has to stay, as termination can not be
assured without it, in the presence of potentially cyclic \texttt{hasChild}
relations.

To conclude the presentation of  the generic
compilation scheme we  show the
translation of the Happy knowledge base in
Figure~\ref{figure:naive_translation_happy}. Here a new ancestor list is built in lines 8
and 13. As a trivial simplification, we do not build a new
ancestor list if it is not passed to any of the goals in the body. This
happens when the clause invokes atomic predicates only, as in 
lines 3--4 of the \texttt{Happy} predicate. 

\begin{figure}[htbp]
\begin{Verbatim}[numbers=left,numbersep=2pt,frame=single]
Happy(A, B)      :- member(C, B), C == Happy(A), !, fail.
Happy(A, B)      :- memberchk(not_Happy(A), B).
Happy(A, _)      :- hasChild(A, B), hasChild(B, C), hasChild(B, D),
                    Clever(C), Pretty(D).

not_Clever(A, B) :- member(C, B), C == not_Clever(A), !, fail.
not_Clever(A, B) :- memberchk(Clever(A), B).
not_Clever(A, B) :- F = [not_Clever(A)|B], hasChild(C, A), hasChild(D, C), 
                    hasChild(C, E), Pretty(E), not_Happy(D, F).

not_Pretty(A, B) :- member(C, B), C == not_Pretty(A), !, fail.
not_Pretty(A, B) :- memberchk(Pretty(A), B).
not_Pretty(A, B) :- F = [not_Pretty(A)|B], hasChild(C, A), hasChild(D, C), 
                    hasChild(C, E), Clever(E), not_Happy(D, F).

not_Happy(A, B)  :-  memberchk(Happy(A), B).

Clever(lisa). Pretty(lisa). hasChild(kate, bob). hasChild(bob, lisa).
\end{Verbatim}
\caption{The complete Prolog translation of the Happy problem.}
\label{figure:naive_translation_happy}
\end{figure}

Notice that the Prolog code in
Figure~\ref{figure:naive_translation_happy} is much bigger than the
hand-made translation in
Figure~\ref{figure:happystraightforwardtranslation}. However, the
optimisations of Section~\ref{generation} will simplify this code so
that it becomes the same as that in
Figure~\ref{figure:happystraightforwardtranslation}.

\subsection{Summary}

In this section we have showed how to transform a $\mathcal{SHIQ}$
description logic knowledge base into a Prolog program performing 
instance retrieval tasks for the given knowledge base.

In the first stage of the transformation we convert the $\mathcal{SHIQ}$ axioms to an equivalent
set of so-called DL clauses,  using the techniques of \cite{motik06PhD}
and \cite{zombori}. These clauses are then compiled
to Prolog code, using specialised variants of PTTP techniques, such as ancestor
resolution and loop elimination. We gave a formal description of the
transformation process and proved that it is sound and complete, and that
it always terminates.

The transformation has an important property: it does not modify the
ABox part of the $\mathcal{SHIQ}$ knowledge base in question. This
allows for the ABox to be stored externally. Equally important is the
fact that the transformation of the TBox part relies only on the
signature of the ABox, but not on the content.

\section{Optimising DL compilation}
\label{generation}

The translation principles presented in the previous section
are complete and result in programs which can already be executed in a
standard Prolog environment, but they are not efficient enough. In this
section we describe a series of optimisations which result in a much
more efficient Prolog translation. We note that most of these
optimisations could also be built into the interpreter itself, but here
we deal with the compiled form only.

In Section~\ref{shiqreasoning} we introduced the general
interpretation and compilation schemes for so called DL clauses,
which are more general than the clauses obtained from  $\mathcal{SHIQ}$
knowledge bases. However, in the present section, we do assume that the
DL program to be optimised is obtained from a $\mathcal{SHIQ}$ knowledge base $\textit{KB}$,
i.e.\ it is of the form $\textit{PDL}(\textit{DL}(\textit{KB}))$.
In other words, we assume that the DL clauses, from which the given DL
program originates, are of the form shown in
Figure~\ref{figure:dlclauses}. 

Regarding the issue of equality
predicates, this implies that the body of a Horn clause can contain no
equality goals, only inequalities. This is because the DL clauses in
Figure~\ref{figure:dlclauses} include equality literals but no
inequalities, and in the process of building contrapositives the
former become inequality goals. The binary-first body ordering ensures that
these inequality goals are invoked only when ground, and thus --
taking into account the UNA principle -- they can be implemented using
the \verb+\+\texttt{==\,/2} built-in Prolog predicate.

\subsection{Principles of optimisation}\label{subsection:princ_optim}

The process of optimisation is summarised in
Figure~\ref{figure:process}. As the very first step we do filtering:
we remove those clauses that need not to be included in the final
program as they are never used in the execution (see
Section~\ref{preprocessing}).

Next, we classify the remaining predicates (see
Section~\ref{classification}). This information is used in subsequent
optimisations to make the code generated from a specific class of
predicates more efficient. The first two optimisations are global, in
the sense that e.g.\ the removal a clause during
filtering  requires the examination of  other parts of the knowledge base.

\begin{figure}[htbp]
  \centering
  \psfrag{rol}{roles}
  \psfrag{pi1}{filtering}
  \psfrag{mi1}{classifying}
  \psfrag{o}{ordering}
  \psfrag{o1}{indexing}
  \psfrag{o2}{ground g.}
  \psfrag{o21}{optimi-}
  \psfrag{o22}{sation}
  \psfrag{o3}{decom-}
  \psfrag{o31}{position}
  \psfrag{o4}{project-}
  \psfrag{o41}{ion}
  \psfrag{gr0}{}
  \psfrag{gr1}{}
  \psfrag{pr}{filtering}
  \psfrag{cl}{classification}

  \includegraphics[scale=0.58]{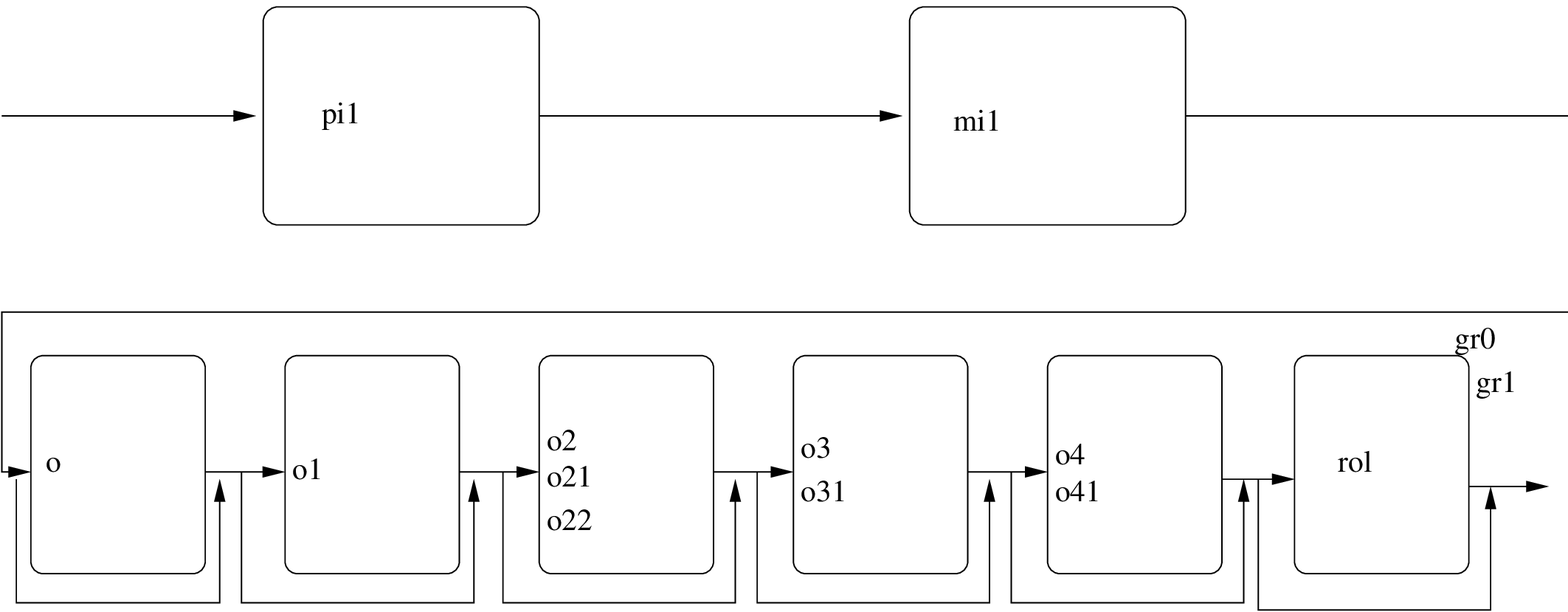}
  \caption{The process of optimisation.}
  \label{figure:process}
\end{figure}

Classification is followed by a sequence of further 
optimisations. Most of these are \emph{local} in the sense
that they concern only a  part of the program, e.g.\  a single
predicate. The optimisations are independent from each other: any
combination of these can be used when generating the final Prolog
program (cf.\ the arrows in Figure~\ref{figure:process}). These
optimisations are summarised below. Note that some further, lower
level optimisations are described in Section~\ref{further}.

\begin{enumerate}
\item[(o1)] ordering of goals in clause bodies (Section~\ref{ordering})
\item[(o2)] support for multiple argument indexing (Section~\ref{indexing})
\item[(o3)] efficient ground goal execution (Section~\ref{ground})
\item[(o4)] decomposition of clause bodies (Section~\ref{decomposition})
\item[(o5)] projection for eliminating multiple answers (Section~\ref{projection})
\item[(o6)] efficient translation of roles and their inverses (Section~\ref{roles})
\end{enumerate}

All above optimisations except for (o2) and (o6) concern
unary predicates. Therefore in the sections corresponding to (o1)
and (o3)--(o5) we implicitly assume that all clauses discussed belong
to unary predicates.

Before going into details we introduce some definitions regarding DL
predicates, to be used in the upcoming sections. Note that a predicate is
referred to by its functor or, if the arity is known from the context, by
its name.

\begin{definition}[Predicate reachability]
A predicate $P_1$ \emph{directly calls} predicate $P_2$ if $P_2$ is
invoked in any of the clauses of $P_1$. It is possible to \emph{reach}
$P_2$ from $P_1$ if (1) either $P_1$ directly calls $P_2$ or (2) there
exists a predicate $T$ from which it is possible to reach $P_2$ and
$P_1$ directly calls $T$.
\end{definition}

\noindent Thus, the relation \emph{reach} is the transitive closure of the relation
\emph{directly calls} between predicates. As an example, let us
consider the knowledge base in
Figure~\ref{figure:dl_predicates_iocaste}. Here predicate
\texttt{not\_Ans/1} is reachable from \texttt{Ans/1}, although it is
not directly called. The definition of reachability can
naturally be reformulated for clauses:

\begin{definition}[Reachability of clauses]
A predicate $P_2$ is reachable from a clause $C_1$, if $C_1$ invokes a
predicate $T$, such that $P_2$ is reachable 
from, or identical to, predicate $T$. A clause $C_2$, belonging to a predicate $P_2$, is
reachable from a clause $C$ (predicate $P$), if the predicate $P_2$ is
reachable from the clause $C$ (predicate $P$).
\end{definition}

\begin{definition}[Properties of DL predicates]
A predicate $P$ is recursive if it is reachable from itself.  We speak
about negative recursion if $P$ is reachable from $\textit{not\_}P$, or
vice versa \cite{przymusinski94wellfounded}. We refine this notion further
by saying that $P$ is an ANR (ancestor negative recursion) predicate, if
$P$ can occur as an ancestor of $\textit{not\_}P$ (i.e.\ the latter is
reachable from the former). Furthermore, $P$ is said to be a DNR
(descendant negative recursion) predicate, if $P$ can become a descendant
of $\textit{not\_}P$ (i.e.\ the former is reachable from the latter).
\end{definition}

\noindent Obviously $P$ is ANR if, and only if, $\textit{not\_}P$ is DNR
(as $\textit{not\_not\_}P$ is $P$). 

Using the above definitions, each DL predicate is classified into one of the
following groups.

\begin{enumerate}
\item A predicate $P$ is \emph{atomic} if all its clauses are ground and
  have empty bodies. Atomic predicates correspond to sets of ABox
  assertions.  Examples for atomic predicates are \texttt{Clever/1},
  \texttt{Pretty/1} and \texttt{hasChild/2} in the Happy and Iocaste DL
  programs.

\item $P$ is a \emph{query predicate} if it is not atomic and it
 satisfies the following three conditions:

  \begin{enumerate}
    \renewcommand{\theenumii}{\roman{enumii}}
    \item $P$ is not recursive;
    \item $P$ is not reachable from $\textit{not\_}P$ (i.e.\ $P$ is not DNR);
    \item all predicates  invoked within the clauses of $P$
      are either atomic or query predicates.
  \end{enumerate}

   Query predicates can be thought of as database queries.  They can be
   defined in terms of atomic predicates using
   conjunction and disjunction only. Thus the  execution of query predicates
   does not require  any special features, such as  keeping track of
   ancestors.

 An example of a query predicate is \texttt{Happy/1} in
 Figure~\ref{figure:dl_predicates_happy}.

\item A predicate is an \emph{orphan} predicate if it has an invocation
  in a clause body (which is called \emph{orphan goal} or
  \emph{orphan call}), but it does not appear in the head of any of
  the clauses. Orphan goals can succeed  only by ancestor resolution.

  Examples include predicates \texttt{not\_Ans/1} and
  \texttt{not\_Happy/1} in Figures~\ref{figure:dl_predicates_iocaste}
  and \ref{figure:dl_predicates_happy}. 

\item Finally, a predicate $P$ is a \emph{general} predicate if it is
  neither atomic, nor query, nor orphan. A general predicate $P$ can
  be further classified into subgroups based on whether $P$ is
  recursive, is of type ANR or DNR. The general predicates in the
  Iocaste knowledge base (Figure~\ref{figure:dl_predicates_iocaste}) are the following: \texttt{Ans/1} (not
  recursive, not DNR, ANR), \texttt{Patri\-cide/1} (recursive, not
  DNR, not ANR) and \texttt{not\_Patricide/1} (recursive, not DNR, not
  ANR).
\end{enumerate}

\subsection{Filtering}
\label{preprocessing}

Filtering removes those clauses of the DL predicates which
are not required in producing solutions.

\begin{definition}[Eliminable clauses]
A clause $C$ is called \emph{eliminable} in a DL
program $\textit{DP}$, if the body
of $C$ always fails in the execution of an arbitrary goal in $\textit{DP}$.
\end{definition}

\noindent Obviously, eliminable clauses can be removed from a DL program
without changing its behaviour. The set of Prolog clauses obtained this way
will still be called a DL program, but sometimes we will use the term
\emph{full DL program} to refer to the DL program before any clauses have
been removed. We now proceed to discuss special types of eliminable clauses.

\begin{definition}[False-orphan clauses]
\label{def:falseorphan}
Let $C$ be a clause of a predicate $P$ in the DL program $\textit{DP}$.  $C$ is said
to have the \emph{false-orphan} property w.r.t.\ $\textit{DP}$, if the body of $C$ invokes an
orphan predicate $O$, $P \neq \textit{not\_}O$, and it is not possible
to reach $P$ (and thus $C$) from $\textit{not\_}O$ in $\textit{DP}$. In this case $O$ is called 
a false-orphan goal in $C$.
\end{definition}

\begin{proposition}
\label{prop:falseorphan}
A clause $C$ having the false-orphan property in $\textit{DP}$ is eliminable in $\textit{DP}$.
\end{proposition}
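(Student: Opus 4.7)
The plan is to show that any call to the orphan goal $O$ inside the body of $C$ is bound to fail, which immediately forces the whole body (a conjunction) to fail, making $C$ eliminable.

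First I would recall the two structural facts about orphan predicates we have already established. (a) By definition, $O$ has no defining clauses, so in the compiled form its predicate consists solely of clause $F_2$; equivalently, in the interpreter it can only succeed via the branch at line 8 of Figure~\ref{figure:interpreter}. (b) Consequently, a call to $O$ can succeed only if, at the moment of the call, the ancestor list contains a goal whose predicate name is $\textit{not\_}O$.

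The heart of the argument is then to show that whenever the goal $O$ appearing in the body of $C$ is actually invoked, no $\textit{not\_}O$ can be present on the ancestor list. At that moment, the ancestor list is precisely the list of goals that have been entered but not yet exited; the immediate parent is the head-goal of $C$, whose predicate is $P$, and the remaining entries are the ancestors of that head-goal, forming a chain of direct calls leading to $P$. There are therefore only two ways for a $\textit{not\_}O$ ancestor to appear: either the parent itself is $\textit{not\_}O$, i.e.\ $P = \textit{not\_}O$, which is excluded by the hypothesis of the proposition; or some strict ancestor of the head-goal has predicate $\textit{not\_}O$. In the second case the existence of such an ancestor, together with the fact that each ancestor directly calls the next, yields a sequence of direct calls from $\textit{not\_}O$ down to $P$, which means $P$ is reachable from $\textit{not\_}O$ in $\textit{DP}$, contradicting the false-orphan hypothesis. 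Hence neither case can occur, so ancestor resolution for $O$ necessarily fails, and therefore the call to $O$ fails.

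Since the body of $C$ is a conjunction that contains the always-failing goal $O$, the body of $C$ fails in the execution of every goal in $\textit{DP}$, matching the definition of eliminable. I expect the only delicate point to be the precise identification of the ancestor list at the moment $O$ is invoked, which is essentially just unfolding the definition of ``ancestor-of'' given at the end of Section~\ref{pttp}; once that is made explicit, the contradiction with the reachability hypothesis is immediate.
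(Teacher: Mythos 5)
Your proof is correct and follows essentially the same route as the paper's: both argue that an orphan goal can only succeed by ancestor resolution against a $\textit{not\_}O$ ancestor, and that the two hypotheses ($P \neq \textit{not\_}O$ and non-reachability of $P$ from $\textit{not\_}O$) rule out any such ancestor, so the call to $O$ fails and $C$ is eliminable. You merely spell out in more detail the step the paper states in one line, namely that the ancestor list at the invocation of $O$ is the chain of open calls down to the head of $C$, so a $\textit{not\_}O$ entry would witness reachability of $P$ from $\textit{not\_}O$.
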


\begin{proof}\small
Let $C$ be a clause of a predicate $P$, containing a false-orphan goal $O$.
By definition, it is not possible to reach $P$ from $\textit{not\_}O$, and
$P \neq \textit{not\_}O$. These two conditions imply that the ancestor list
supplied to $O$ contains no elements with the functor of
$\textit{not\_}O$. 

As an invocation of $O$ can only succeed by ancestor resolution, the
invocation of $O$ fails, and so clause $C$ can never succeed.
\end{proof}

Consider a clause $C$, being the only clause of predicate $P$, which
is removed because  it has the false-orphan  property defined above. At this point $P$
becomes an orphan predicate and some of the clauses invoking $P$ may thus
become eliminable, causing new orphan predicates to appear, and potentially
giving rise to further clauses with the false-orphan  property. 

Let us now define two further kinds of
clauses, which later will be shown to be eliminable.
\begin{definition}[Two-orphan clauses]
Let $C$ be a clause in the DL program $\textit{DP}$. $C$ is said to have the
\emph{two-orphan} property w.r.t.\ $\textit{DP}$, if the body of $C$ invokes
predicates $O_1$ and $O_2$, which are orphans in $\textit{DP}$, and which have
different functors.
\end{definition}

\begin{definition}[Contra-two-orphan clauses]
Let $C$ be a clause of a DL program $\textit{DP}$. Let $O_1$, $O_2$, and $O_3$ be
orphan predicates in $\textit{DP}$, where  $O_1 \neq O_2$ . The clause $C$ is said to have the
\emph{contra-two-orphan} property w.r.t.\ $\textit{DP}$, if the head of $C$ is of the form
$\textit{not\_}O_1(X)$, and the body of $C$ contains the goals
$O_2(Y)$ and $\textit{not\_}O_3(Z)$, where $X$, $Y$ and $Z$ are not
necessarily distinct variables or constants.
\end{definition}

\noindent A clause having the contra-two-orphan property is a 
contrapositive of a specific two-orphan clause, hence the naming of the property. 

For the next two propositions by a \emph{clause of interest} we mean a
clause having the two-orphan or the contra-two-orphan property.  We will prove
that a clause of interest cannot participate in a successful execution and
hence can be eliminated. Furthermore, we show iteratively that clauses that
become clauses of interest due to the elimination of other clauses of
interest are eliminable, too. Therefore the next proposition speaks about a
DL program in which we have already eliminated some clauses of interest
(initially zero clauses).

\begin{proposition}
\label{prop:orphans:main}
Let $\textit{DP}$ be a DL program obtained from a full DL program $DP_0$ by first
removing zero or more clauses of interest, and next repeatedly eliminating the clauses with the
false-orphan property, as long as possible.  If $O$ is an orphan
predicate in $\textit{DP}$, and a clause $C$ in $\textit{DP}$  invokes the predicate
$\textit{not\_}O$, then $C$ has either the two-orphan or the
contra-two-orphan property.
\end{proposition}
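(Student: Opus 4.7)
My plan is to transfer the reason that the sibling contrapositive $C_O$ of $C$'s source DL clause (the one with head $O$) was removed into a structural property of $C$ itself. Since the body of $C$ and the body of $C_O$ differ only in the swap of $\textit{not\_}O$ for $\textit{not\_}P$, where $P$ is the head of $C$, any orphan witness in the body of $C_O$ corresponds almost directly to one in the body of $C$. Before doing the case analysis I would establish the auxiliary fact that $\textit{not\_}O$ is not orphan in $\textit{DP}$: since $C$ survived the iterative elimination, it cannot have the false-orphan property in $\textit{DP}$, and an orphan $\textit{not\_}O$ in the body of $C$ would force the head $P$ to equal $O$ (because the negation of $\textit{not\_}O$ is $O$) or to be reachable from $O$, both impossible because $O$ has no clauses in $\textit{DP}$.

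Let $D$ be the DL clause from which $C$ is obtained; since $\textit{not\_}O$ lies in the body of $C$, $D$ contains $O$ as a positive literal, so the contrapositive $C_O$ of $D$ with head $O$ exists in $DP_0$ but, because $O$ is orphan in $\textit{DP}$, is not in $\textit{DP}$. Hence $C_O$ was eliminated either initially as a clause of interest or by the false-orphan rule at some iteration $t$. In the first case, if $C_O$ has the two-orphan property in $DP_0$, its two witness orphans persist in $\textit{DP}$ by monotonicity of the orphan set; they either both lie in the body of $C$ (yielding the two-orphan property for $C$), or one of them equals the added goal $\textit{not\_}P$, in which case $\textit{not\_}P$ is orphan in $\textit{DP}$, the head $P$ has the form required for contra-two-orphan with $O_1 = \textit{not\_}P$, and $\textit{not\_}O$ together with the other witness supply the remaining pieces ($\textit{not\_}O_3$ with $O_3 = O$, and $O_2$). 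The contra-two-orphan alternative for $C_O$ is ruled out, since its head $O$ would need to be of the form $\textit{not\_}O_1$ for an orphan $O_1$, forcing $\textit{not\_}O$ to be orphan already in $DP_0$ and hence in $\textit{DP}$, contradicting the auxiliary fact.

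The main obstacle is the false-orphan case for $C_O$. Here the body of $C_O$ contains some orphan $Q^*$ in $DP_t$ with $O \neq \textit{not\_}Q^*$ and $O$ unreachable from $\textit{not\_}Q^*$ in $DP_t$. If $Q^*$ lies in the body of $C$, then $Q^*$ is also an orphan of $\textit{DP}$ in the body of $C$, and applying the non-false-orphan property of $C$ in $\textit{DP}$ to $Q^*$ yields either $P = \textit{not\_}Q^*$ (giving contra-two-orphan with $O_1 = Q^*$ and with $\textit{not\_}O$ playing the role of $\textit{not\_}O_3$) or $P$ reachable from $\textit{not\_}Q^*$. If instead $Q^* = \textit{not\_}P$, then $\textit{not\_}P$ is orphan and the head of $C$ already has the contra-two-orphan shape, so what remains is to exhibit a second orphan $O_2$ of different functor in the body of $C$. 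I expect this last extraction to be the hard part; the argument should combine the unreachability of $O$ from $\textit{not\_}Q^*$ in $DP_t$, the monotone growth of the orphan set across subsequent iterations, and the continued survival of $C$, so as either to surface the required additional orphan in the body of $C$ or to derive a contradiction with the non-false-orphan status of $C$ in $\textit{DP}$.
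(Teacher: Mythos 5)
Your overall strategy --- pass to the sibling contrapositive $C_O$ of $C$'s source DL clause, the one with head $O$, observe that it must have been removed, and transfer the reason for its removal back to $C$ --- is exactly the paper's, and your treatment of the case where $C_O$ was removed as a clause of interest (the two-orphan property transfers either directly or via the swapped goal $\textit{not\_}P$; the contra-two-orphan property is impossible because $\textit{not\_}O$ is not an orphan in $\textit{DP}$) matches the paper's argument. The genuine gap is the false-orphan case, which you leave open in two of its three branches and close incorrectly in the third: when the witness $Q^*$ is a shared body goal and $P=\textit{not\_}Q^*$, you assert the contra-two-orphan property for $C$, but that property also requires a positive orphan goal $O_2(Y)$ with $O_2\neq O_1$ in the body of $C$, and you exhibit no such goal; the branch ``$P$ reachable from $\textit{not\_}Q^*$'' is left with no conclusion at all; and you concede that the branch $Q^*=\textit{not\_}P$ is unfinished.

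The missing idea, which is the heart of the paper's proof, is that the false-orphan case for $C_O$ simply cannot occur. Since $O$ is an orphan in $\textit{DP}$, it is invoked in some surviving clause $D$, and since $\textit{DP}$ contains no false-orphan clauses, $D$'s predicate either is, or is reachable from, $\textit{not\_}O$; hence $O$ is reachable from $\textit{not\_}O$ in $\textit{DP}$. Because $C$ contains the goal $\textit{not\_}O(X)$, this makes $O$ reachable from $P$ and, using once more that $C$ is not a false-orphan clause of $\textit{DP}$, reachable from $\textit{not\_}Q$ for every orphan goal $Q$ in the body of $C$. These reachability facts persist in every intermediate program containing $\textit{DP}$, so no goal in the body of $C_O$ --- neither the swapped goal $\textit{not\_}P$ nor any shared $G_i$ --- can ever serve as a false-orphan witness for $C_O$: the head predicate $O$ of $C_O$ is always reachable from the witness's negation. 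The paper packages this by taking $DP'$ to be the \emph{last} program containing $C'=C_O$ and showing that $C'$ is not false-orphan w.r.t.\ $DP'$ (and not contra-two-orphan either), which forces the two-orphan alternative and lets the clause-of-interest analysis you already have complete the proof. In particular, each of your three dangling branches is vacuous by this reachability argument, rather than yielding the property you claim for it.
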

\begin{proof}\small
We can assume that the   clause $C\in DP$ is of the form `$H(Y)\texttt{ :- }
\textit{not\_}O(X), G_1,\ldots, G_n$', $n\geq 0$. Consider  
the following clause $C'$: `$O(X)\texttt{ :- }\textit{not\_}H(Y), G_1,\ldots
G_n$', which is a contrapositive of the same DL clause as $C$
is. Therefore $C'$ had to be present in the full DL program $DP_0$. However,
$C'$ is not present in $\textit{DP}$, because it belongs to the predicate $O$,
which is an orphan in $\textit{DP}$. Thus clause $C'$ was removed at some point.
Let $DP'$ be the last DL program in which $C'$ is present, i.e.\ $DP' \supseteq DP \cup
\{C'\}$ and $C'$ has one of the three
orphan-related properties introduced above, which justify its removal from $DP'$.

Let us first discuss if there can be any false-orphan goals in $C'$
w.r.t.\ the program $DP'$.  Because $O$ is an orphan predicate in $\textit{DP}$,
there is a clause $D$ in $\textit{DP}$ which calls $O$, and because there are no
false-orphans in $\textit{DP}$, this clause is reachable from
$\textit{not\_}O$. Thus in $DP'$, which contains the clause $C'$
belonging to the predicate $O$, all goals in the body of $C'$ are
reachable from $\textit{not\_}O$ through clause $D$. Consequently, all
these goals are also reachable from the clause $C$ of predicate $H$, as $C$
contains the goal $\textit{not\_}O(X)$. This means that the first goal in
the body of $C'$, $\textit{not\_}H(Y)$, cannot be a false-orphan in $DP'$,
because it is reachable from its negation, $H$. Consider now a goal $G_i$,
$0<i \leq n$, and assume that it is an orphan goal in $DP'$. Because $DP' \supseteq
DP$, $G_i$ is an orphan goal in $\textit{DP}$, too. As $G_i$ is present in the body
of $C$, and there are no false-orphans in $\textit{DP}$, $C$ is reachable from
$\textit{not\_}G_i$ in $\textit{DP}$. But then, in $DP'$, $G_i$ in $C'$ is also
reachable from $\textit{not\_}G_i$, as $C'$ is reachable from $C$, which,
in turn, is reachable from $\textit{not\_}G_i$.  Thus $G_i$ in $C'$ is not
a false orphan in the DL program $DP'$.  We have thus shown that the clause
$C'$ does not have the false-orphan property w.r.t.\ $DP'$.

Next, assume that $C'$ has the contra-two-orphan property in $DP'$. This implies
that the head of $C'$ is the negation of an orphan, i.e.\
$\textit{not\_}O$ is an orphan in $DP'$. Again because $DP' \supseteq
DP$, $\textit{not\_}O$ is an orphan predicate in $\textit{DP}$, and thus has no
clauses. This is a contradiction with the fact stated above that a goal
with the functor $O$ occurs in a clause $D$ \emph{reachable} from
$\textit{not\_}O$. 

This means that $C'$ has the two-orphan property in $DP'$. We now consider
two cases. If there are two orphan goals with different functors in the set
$\{G_i|0<i \leq n\}$, then clause $C$ has the two-orphan property, as
well. Otherwise, $\textit{not\_}H(Y)$ has to be an orphan goal, and
there has to be another orphan goal, with a different functor, amongst the
$G_i$'s, say $G_k$.  Let $O_1 = \textit{not\_}H$ and let $O_2$ and $Z$
denote the name and argument of the goal $G_k$ (i.e.\ $G_k =
O_2(Z)$). Using this notation the head of $C$ is of the form
$\textit{not\_}O_1(Y)$, while its body contains the goals
$O_2(Z)$ and $\textit{not\_}O(X)$, where $O_1 \neq O_2$ holds. Thus $C$
satisfies the contra-two-orphan property.
\end{proof}

\begin{proposition}
\label{prop:interest:eliminable}
Let $\textit{DP}$ be a DL program
obtained from a full DL program by removing some clauses of interest and/or
some clauses having the false-orphan property.  Any clause having the
two-orphan property or the contra-two-orphan property in $\textit{DP}$ is eliminable.
\end{proposition}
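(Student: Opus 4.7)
The plan is to argue by contradiction using a minimality argument on call-stack depth, with Proposition~\ref{prop:orphans:main} as the main engine. Suppose, for a contradiction, that the body of some clause of interest (having the two-orphan or the contra-two-orphan property) succeeds during the execution of some query against $\textit{DP}$. Across all such successful invocations (in all successful executions), pick one whose position on the call stack has minimal depth, and let $C$ denote the clause of interest invoked at that position.

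The first step is to identify, in both cases, a positive orphan goal in the body of $C$. In the two-orphan case either of the orphan goals $O_{1}, O_{2}$ works; in the contra-two-orphan case the goal $O_{2}(Y)$ works. Call this orphan goal $O$. Since $O$ is orphan in $\textit{DP}$, it has no defining clauses, and therefore it can succeed only via an ancestor resolution step with a matching $\textit{not\_}O$ already present on the ancestor list at the moment $O$ is invoked. Tracing the ancestor list back, there must be a clause $C'$ strictly above $C$ on the call stack whose body contains $\textit{not\_}O$ as a goal and which is therefore responsible for putting $\textit{not\_}O$ onto the list.

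The key step is then to apply Proposition~\ref{prop:orphans:main} to the orphan predicate $O$ and the invoking clause $C'$ in $\textit{DP}$. This immediately yields that $C'$ itself has the two-orphan or the contra-two-orphan property. But $C'$ is strictly above $C$ on the call stack, so it is an invocation of a clause of interest at strictly shallower depth than $C$, contradicting the minimal choice of $C$. Hence no clause of interest can ever succeed, which is exactly the assertion that such a clause is eliminable.

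The main delicate point I expect is reconciling the hypothesis of Proposition~\ref{prop:interest:eliminable} with that of Proposition~\ref{prop:orphans:main}: the former only states that ``some'' clauses of interest and/or ``some'' false-orphan clauses have been removed, while the latter requires an exhaustive false-orphan sweep after the last clause-of-interest removal. I would handle this by arguing that, without loss of generality, $\textit{DP}$ may be assumed to have undergone exhaustive false-orphan elimination: additional false-orphan removals are sound by Proposition~\ref{prop:falseorphan}, they can only enlarge the set of orphan predicates (so $O$ remains orphan) and the set of clauses of interest (so the minimality argument is unaffected), and they do not change which clauses of the original $\textit{DP}$ can participate in a successful execution. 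Under this reduction the hypothesis of Proposition~\ref{prop:orphans:main} is met and the contradiction above goes through.
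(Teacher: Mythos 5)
Your overall strategy---contradiction, a minimality argument over clause-of-interest invocations, and Proposition~\ref{prop:orphans:main} as the engine---is the same as the paper's, and your ``without loss of generality'' reduction to an exhaustively false-orphan-filtered program correctly mirrors the paper's first step. But there is a genuine gap in the central step. You claim that if an orphan goal $O$ in the body of $C$ succeeds, then ``there must be a clause $C'$ strictly above $C$ on the call stack whose body contains $\textit{not\_}O$ as a goal.'' That is false: the matching ancestor $\textit{not\_}O(\cdot)$ can be the top-level query goal itself (an instance retrieval query for the concept $\textit{not\_}O$, pushed onto the ancestor list when the first clause of that predicate is entered), in which case no such clause $C'$ exists and Proposition~\ref{prop:orphans:main} has nothing to apply to. This is not a corner case---it is the normal mode of operation: in the Iocaste program the orphan goal \texttt{not\_Ans} succeeds precisely by resolving against the query goal \texttt{Ans(X)}, and such clauses are \emph{not} eliminable. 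Indeed, the paper remarks immediately after this proposition that a clause whose orphan goals all share one functor cannot be removed.

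This is exactly why the two definitions insist on \emph{two different functors}, and your argument never uses that fact where it is needed. The paper's proof does: in the two-orphan case, at most one of the two required negated-orphan ancestors can be supplied by the query goal, so the \emph{other} must come from an earlier clause, which is then a clause of interest by Proposition~\ref{prop:orphans:main}---contradiction with minimality. In the contra-two-orphan case the paper argues via the \emph{head} $\textit{not\_}O_1$ of $C$ rather than via the body goal $O_2(Y)$ you chose: either $C$ is invoked from an earlier clause (which then calls the negation of the orphan $O_1$ and is itself a clause of interest), or $C$ is the very first clause entered, in which case the only available ancestor is $\textit{not\_}O_1(X)$ and the orphan goal $O_2(Y)$ with $O_2 \neq O_1$ must fail. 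Your single-orphan-goal argument covers neither branch when the needed ancestor happens to be the query. To repair the proof you would need to add a case split on whether the matching ancestor is the query goal, and in that case switch to the second orphan goal (two-orphan case) or to the clause head (contra-two-orphan case).
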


\begin{proof}\small
Let $DP'$ be the DL program
obtained from $\textit{DP}$ by repeatedly eliminating the clauses with the
false-orphan property as long as possible.

Let us indirectly assume that there is a successful execution path in $\textit{DP}$,
which uses a clause of interest, and of these consider the one used
earliest, say $C$. As a clause having the false-orphan property cannot be
part of a successful execution, the path in question is a valid path in
$DP'$.

Let us first assume that $C$ is a two-orphan clause. For this clause to
succeed, the two orphans with different functors require two different
ancestor goals, which are their negations, i.e.\ negations of orphans. One
of the ancestors can come from the query goal, but the other has to be
present in an earlier clause. Because of
Proposition~\ref{prop:orphans:main}, this is a clause of interest, which
contradicts the fact that $C$ is the earliest such clause in the
execution. 

Next, assume that $C$ has the contra-two-orphan property. In this
case $C$ has to be the very first clause used, because if there were a
preceding clause $C_0$, then $C_0$ would have to contain a negated orphan
goal (as the head of $C$ is a negated orphan), and so, again due to
Proposition~\ref{prop:orphans:main}, $C_0$ would be a clause of interest.
As $C$ is the first clause called, the ancestor list supplied to the goals
in its body contains only the head of $C$.
However, a contra-two-orphan clause contains an orphan goal whose functor is
different from that of the negated clause head, and so this orphan goal
fails, contradicting our initial assumption.
\end{proof}
\noindent We note that a clause containing several orphan goals, all with the
same functor cannot be eliminated, as all these goals can succeed by
resolving against a single ancestor in the ancestor list.

Proposition \ref{prop:interest:eliminable} makes it possible to iteratively
remove all three kinds of eliminable clauses introduced.
This process
terminates when there are no clauses with the above properties:

\begin{definition}[Filtered DL programs]
A DL program is \emph{filtered} if there are no clauses in the program
which have the false-orphan property, or the two-orphan property, or the
contra-two-orphan property.
\end{definition}

\begin{proposition}
\label{prop:orph:not:called}
Let $\textit{DP}$ be a filtered DL program. If $O$ is an orphan predicate in $\textit{DP}$,
then no clause in $\textit{DP}$ can contain a goal which invokes the predicate
$\textit{not\_}O$.
\end{proposition}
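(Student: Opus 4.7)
The plan is to argue by contradiction, reducing the statement to a direct application of Proposition~\ref{prop:orphans:main} combined with the definition of a filtered DL program. Assume $\textit{DP}$ is filtered, $O$ is an orphan predicate in $\textit{DP}$, and some clause $C \in \textit{DP}$ contains a goal that invokes $\textit{not\_}O$. Our aim is to derive that $C$ itself must have either the two-orphan or contra-two-orphan property in $\textit{DP}$, contradicting the filtered assumption.

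To apply Proposition~\ref{prop:orphans:main}, I first need to cast $\textit{DP}$ in the form required by that proposition: obtained from the full DL program $DP_0$ by first removing zero or more clauses of interest and then exhaustively removing clauses having the false-orphan property. The filtering process that produced $\textit{DP}$ from $DP_0$ consists of interleaved removals of clauses with one of the three problematic properties. Because $\textit{DP}$ is filtered, no false-orphan clauses (and no two-orphan or contra-two-orphan clauses) remain at the end. I would argue, by induction on the length of the removal sequence, that adjacent removal steps of different types can be swapped while preserving the final outcome: removing a clause of interest cannot turn a clause that was already false-orphan into a non-false-orphan (since removal of clauses can only create new orphans, hence new false-orphan situations), and pushing false-orphan removals to the end is harmless because every clause that is false-orphan at some intermediate stage remains false-orphan after any further removals. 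Thus $\textit{DP}$ admits a description of exactly the form required by Proposition~\ref{prop:orphans:main}.

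Once this reordering is in place, Proposition~\ref{prop:orphans:main} applies with the orphan $O$ and the clause $C$ invoking $\textit{not\_}O$, yielding that $C$ has either the two-orphan or the contra-two-orphan property in $\textit{DP}$. But by the definition of a filtered DL program, no clause of $\textit{DP}$ has any of the three problematic properties. This is the desired contradiction, and the proof is complete.

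The main obstacle is the reordering argument that lets us invoke Proposition~\ref{prop:orphans:main} on $\textit{DP}$. A careful bookkeeping is needed because ``clause of interest'' and ``false-orphan'' are both defined relative to the current program state, so swapping the order of two removals can in principle change whether each removal is ``justified'' at the moment it is made. The monotonicity observation above — that orphan status, and hence the two problematic-for-interest properties and the false-orphan property, can only be gained and never lost as more clauses are removed — is the key lemma that makes the reordering harmless and gives us the hypothesis of Proposition~\ref{prop:orphans:main} on $\textit{DP}$.
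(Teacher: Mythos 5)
Your overall strategy --- derive a contradiction by applying Proposition~\ref{prop:orphans:main} to the filtered program and then invoking the definition of a filtered DL program --- is exactly the paper's intended argument (the paper's own proof is a one-line appeal to Proposition~\ref{prop:orphans:main}). The genuine problem lies in the reordering lemma you introduce to force $\textit{DP}$ into the literal two-stage form of that proposition's hypothesis. Your monotonicity observation (orphan status, and hence the relevant properties, can only be gained as clauses are removed) justifies \emph{delaying} a false-orphan removal, but the swap you need also requires \emph{advancing} a clause-of-interest removal past a false-orphan removal, and that direction fails: a clause $B$ may acquire the two-orphan (or contra-two-orphan) property only \emph{because} an earlier false-orphan removal deleted the last clause of some predicate and thereby turned that predicate into an orphan. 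Before that removal the predicate is not an orphan, $B$ is not a clause of interest, and removing $B$ at the earlier position is not justified. The Happy example in Section~\ref{preprocessing} exhibits exactly this mechanism: removing the false-orphan clauses of \texttt{not\_Clever/1} and \texttt{not\_Pretty/1} is what turns those predicates into orphans. So the reordering lemma is false as stated and the induction does not go through.

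The gap is repairable without any reordering. The proof of Proposition~\ref{prop:orphans:main} never actually uses the two-stage order of removals; it only uses that (i) every clause absent from $\textit{DP}$ was removed from some intermediate program $DP' \supseteq \textit{DP}$ in which it had one of the three properties, and (ii) $\textit{DP}$ itself contains no false-orphan clauses. Both conditions hold for a filtered program produced by the interleaved iterative process of Section~\ref{preprocessing}, so the conclusion of Proposition~\ref{prop:orphans:main} is available for $\textit{DP}$ directly, and your contradiction with filteredness then finishes the proof. (Equivalently, one can read the hypothesis of Proposition~\ref{prop:orphans:main}, as the paper implicitly does, as covering any sequence of justified removals whose result contains no false-orphan clauses.)
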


\begin{proof}\small
This is a simple consequence of Proposition~\ref{prop:orphans:main}.
\end{proof}

\noindent This means that in a filtered program an orphan goal can 
succeed only if the initial query predicate is its negation (cf.\ the
\texttt{not\_Ans/1} orphan goal in
Figure~\ref{figure:naive_translation_iocaste}). Another consequence of
Proposition~\ref{prop:orph:not:called} is that ancestor
resolution for orphan predicates is deterministic. We have proved
earlier that ancestor resolution in general is deterministic
(cf.\ Proposition~\ref{prop:nounification}), but for this we had to
assume that the binary-first rule was applied. Note that this
assumption is not needed now.

\paragraph{Implementation}

Our first optimisation is to transform the DL program into an
equivalent filtered form. To obtain a filtered program we use an
iterative process. Here we start from the initial DL program and we
eliminate as many clauses as we can. However, if we successfully
eliminated the last remaining clause of at least one predicate, then
we start the whole process again. We do as many iterations as needed
to reach a fixpoint, i.e.\ to have a set of clauses from which we
cannot eliminate any more clauses.

\paragraph{Example}

As an example for filtering, let us consider the DL program of the
Happy problem presented in Figure~\ref{figure:dl_predicates_happy}. In
the first iteration, we can eliminate clauses in lines 4--5 and 7--8
as they invoke the orphan goal \texttt{not\_Happy(B)}, and there is no
way to reach these clauses from predicate \texttt{Happy/1}.

As these were the last clauses of their corresponding predicates,
\texttt{not\_Clever/1} and \texttt{not\_Pretty/1} have actually become
orphans. Therefore, we apply one more iteration. Now we cannot
eliminate anything else: we have reached a fixpoint, containing a
single TBox clause in lines 1--2 (and the ABox facts in line 10).

\subsection{Classification}
\label{classification}

Within the filtered DL program we distinguish between different groups
of predicates based on their properties. This classification is useful when
generating the Prolog programs as it provides 
guidelines for what to generate and it also serves as a basis for
further optimisations. As discussed in
Section~\ref{subsection:princ_optim}, we distinguish between atomic,
orphan, query and general predicates.

A predicate $P$ is classified as atomic or orphan simply by checking
whether the specified condition holds for $P$.

However, to determine the set of query predicates, we use an iterative
process similar to the one used in filtering
(cf.\ Section~\ref{preprocessing}). The idea is that we iterate as
long as we find at least one new query predicate. We note that
we actually use a single iterative process which encapsulates
filtering as well as query predicate classification.

All the remaining predicates are classified as general predicates.

\paragraph{Use of classification information} 

Having classified the predicates of a DL program we can apply specific
compilation schemes for certain classes. We now examine
each of the predicate classes.

\begin{itemize}
\item \emph{atomic predicates}: Atomic predicates directly correspond
  to tables in a data\-ba\-se and thus their translation does not require an
  extra argument for the ancestor list.

\item \emph{query predicates}: The conditions in the definition
  guarantee that in the case of a query predicate $P$, we (i) do not
  need to check for loops, (ii) do not need to apply ancestor
  resolution, and (iii) do not need to pass the ancestor list to any
  of the goals in the body of $P$.

  Consequently, the code for query predicates does not require the
  additional argument for the ancestor list, similarly to atomic
  predicates.

\item \emph{orphan predicates}: The translation of an orphan predicate is a
  predicate consisting of a single clause of type $F_2$
  (cf.\ Definition~\ref{def:comp}). When an orphan predicate is invoked, a
  small optimisation can be applied regarding the ancestor list
  argument. The body of an orphan predicate contains nothing but an
  ancestor check: if the ancestor list contains the negation of the orphan
  predicate, then it succeeds. Now, unless the orphan predicate is invoked
  from within its negation, the ancestor list passed to it need not include
  the parent goal, i.e.\ the predicate from which it is invoked. This means
  that the ancestor list argument can be the same as the one in the parent
  goal.  This is the case, for example, in lines 10 and 16 in
  Figure~\ref{figure:naive_translation_iocaste}. Thus in these two lines
  the second argument of the orphan goal \texttt{not\_Ans}, namely variable
  \texttt{C}, can be replaced by the variable \texttt{B}.

\item \emph{general predicates}: We need to generate loop tests only
  for recursive, and ancestor tests only for DNR predicates. Updating
  the ancestor list is only required for ANR predicates.
\end{itemize}

\paragraph{Examples} 

We now discuss some examples of how the Prolog code can be simplified due
to classification. The DL predicate \texttt{Happy/1} in
Figure~\ref{figure:dl_predicates_happy} is classified as a query
predicate. Having removed DL predicates \texttt{not\_Clever/1} and
\texttt{not\_Pretty/1} in the filtering step, we can further simplify the
Happy program by removing the ancestor list arguments. This results in
the code shown in Figure~\ref{figure:happy_preprocessed}. Note that we
have actually obtained the hand-made translation for the Happy
problem (see Figure~\ref{figure:happystraightforwardtranslation}).

\begin{figure}[htbp]
\begin{Verbatim}[numbers=left,numbersep=2pt,frame=single]
Happy(A) :- hasChild(A, B), hasChild(B, C), hasChild(B, D),
            Clever(C), Pretty(D).

Clever(lisa). Pretty(lisa). hasChild(kate, bob). hasChild(bob, lisa).
\end{Verbatim}
\caption{The Happy program after filtering and classification.}
\label{figure:happy_preprocessed}
\end{figure}

We also note that in the case of the Iocaste program in
Figure~\ref{figure:naive_translation_iocaste}, classification directly
results in omitting lines 1, 2, 7 and 13. Lines 1 and 2 can be removed
because predicate \texttt{Ans/1} is classified as a non-recursive
non-DNR general predicate. Ancestor tests in lines 7 and 13 can be
omitted as \texttt{Patricide/1} and \texttt{not\_Patricide/1} are
non-DNR predicates.

\subsection{Body ordering}
\label{ordering}

An important optimisation is to order the goals in the body of the
generated clauses so as to minimise the execution time. This is a
generic idea used in some form or other by many systems. For example,
in the case of databases, query optimisation is an essential task, as
without it one would not be able to answer complex queries
\cite{databasequery}. Query optimisation is similarly important when
querying non-relational information sources, such as XML
\cite{xquery}.

Query optimisation often relies on statistical information, such as
the size of database tables or the number of distinct values in a
given column. In the present work we do not take into account
such information and so we restrict our attention to optimisations
which consider only the TBox part of the DL programs.

Prolog systems also use body reordering. For example, the Mercury
compiler reorders the conjunctions in clauses for more efficient
execution \cite{mercury}. Body reordering, instantiation analysis and
related techniques are used by many parallel systems as well. For
example, in the Andorra system \cite{andorra} the deterministic goals
in a clause are moved to the front.

In our case we have very special clauses to work with, as described in
Section~\ref{dlclauses}. This allows us to use a simple, specialised
ordering technique.

Below we first propose a possible ranking between the different kinds
of goals in a body. This ranking uses heuristics applicable for DL
programs. Next, we introduce the simple algorithm we use for body
ordering. Note that this algorithm is actually only the first step, as
it forms the basis of a more complex body restructuring technique
described in Section~\ref{decomposition}.

\subsubsection{Ranking of goals}

Let us start with considering some principles for ranking. 

\begin{itemize}
\item Atomic and query predicates can be answered by using ABox facts
  only, i.e.\ they correspond to (maybe complex) database queries. 

\item General predicates, such as \texttt{Patricide/2} for example,
  may require complex, possibly recursive, execution on the Prolog
  side.
\end{itemize}

\noindent These considerations lead to some heuristics which are summarised below.

\begin{itemize}
\item We invoke atomic and query predicates before
general predicates.

\item We prefer to invoke a ground role predicate at a given point, instead
  of a role predicate with one or two uninstantiated variables. The former 
  simply checks whether a relation holds between two individuals. The
  latter enumerates all possible pairs of individuals for which the given
  relation holds, leaving a possibly huge choice point behind.

\item Given two role predicates with potentially uninstantiated
  variables we prefer to invoke first the one having the
  \emph{head variable}, i.e.\ the variable in the head of a clause, as its
  argument. The main justification for this is that the head variable may
  actually be instantiated, which is not the case for any other variable.
\end{itemize}

A further issue to discuss is the place of the orphan goals within a
body. Recall that orphan goals can only succeed by ancestor resolution, and
only if their negation is the query goal. For example, the orphan goal
\texttt{not\_Ans(E, C)} in line 10 in
Figure~\ref{figure:naive_translation_iocaste} can only succeed if invoked
within an \texttt{Ans(X)} query goal. However, when an orphan goal
succeeds, its first argument (variable \texttt{E} above) may stay
uninstantiated.

These properties of orphan goals suggest to put them in the first
available place where they are ground. However, it also seems to be a
good idea to move an orphan goal to the very front of the body. This is
because orphan goals tend to fail very often: if they are placed at
the front, in the case of failure, we do not need to execute the rest
of the clause.

Note, however, that placing orphan goals at the front invalidates the
proof of Proposition~\ref{prop:nounification} on page
\pageref{prop:nounification}, as now the case (c2) can also
happen. Fortunately, Proposition~\ref{prop:orph:not:called} ensures that
ancestor resolution remains deterministic for DL programs, even if the
invocations of the orphan goals are moved to the front of a body.

Based on the above discussion, we have designed an appropriate ranking
order, called \emph{base ranking}, which is summarised in
Figure~\ref{figure:ranking}. Here we define 10 categories of goals and
give orphan goals the highest priority. Higher priority means earlier
placement in the body. If there are more goals within the same
category, the selection between them is unspecified, i.e.\ any of them
can be chosen. For example, if we have two non-ground atomic concepts,
either of these can come first.

Note that the base ranking ensures the binary-first rule introduced in
Definition~\ref{def:binfirst}, except for orphan goals. Furthermore,
part (b) of Proposition~\ref{prop:ground} ensures that the variables
occurring in inequalities will get instantiated, so we do not have to
deal with non-ground inequalities.

\begin{figure}[htbp]
\begin{magicbox}
\begin{enumerate}
\item orphan goal
\item ground inequality
\item ground role
\item ground atomic or query concept
\item role with 1 unbound variable
\item role with 2 unbound variables, but at least one of them is a head variable
\item role with 2 unbound variables
\item non-ground atomic or query concept
\item ground general concept
\item non-ground general concept
\end{enumerate}
\end{magicbox}
\caption{The base ranking of the different types of
  goals within a body.}
\label{figure:ranking}
\end{figure}

\subsubsection{The ordering algorithm}

In Figure~\ref{figure:ordering} we present a simple algorithm which orders
the body of a clause of a DL program. This algorithm has three inputs:
the body to be ordered ($B$), a pre-defined ranking of the different kinds
of goals ($R$) and an initial variable list ($V$) containing those
variables that are known to be instantiated at the beginning.

\begin{figure}[htbp]
\begin{magicbox}
      \begin{enumerate}
        \item input parameters: $B$, $R$, $V$, $i := 1$
        \item if $B = \emptyset$, exit with $G_1, \ldots, G_{i-1}$
        \item $G_\mathrm{i} := $ highest priority goal in $B$ according to ranking $R$ w.r.t.\ variables $V$
        \item $B := B \setminus \{G_\mathrm{i}\}$
        \item if $G_\mathrm{i}$ is non-orphan $V := V \cup
          \textrm{variables of goal } G_\mathrm{i}$
        \item $i := i+1$
        \item goto step 2
      \end{enumerate}
\end{magicbox}
\caption{The ordering algorithm used to optimise the execution of a body.}
\label{figure:ordering}
\end{figure}

The idea is to repeatedly select the highest priority goal from the remaining goals, and
place it in the ordered goal sequence forming the final body (see step
3). To be able to assess the groundness of arguments we keep track of
the set $V$ of variables instantiated so far. $V$ is initialised from
the input parameter (step 1) and is updated to include the variables
of the goal selected (step 5). Having selected a goal, we continue by
iteratively ordering the rest of the body (step 7).

As an example, reordering the body of the main clause of
\texttt{Patricide/2}  (cf.\ lines 9--10 in
Figure~\ref{figure:naive_translation_iocaste}) yields the clause presented in
Figure~\ref{figure:reordered_patricide}. Here the orphan call
\texttt{not\_Ans/2} is moved to the front. The second goal is a role predicate
containing a head variable. The third one is also a role predicate with at least
one variable instantiated: the instantiation state of variable \texttt{E} is
not known at compile-time. Finally, the last goal is a ground general
concept call.  To make the comparison of the original and the reordered
clauses easier, in  Figure~\ref{figure:reordered_patricide}  we keep the
variable names of the original clause.

\begin{figure}[htbp]
\begin{Verbatim}[numbers=left,numbersep=2pt,frame=single]
Patricide(A, B) :- C = [Patricide(A)|B], not_Ans(E, B), hasChild(D, A), 
                   hasChild(E, D), Patricide(D, C).
\end{Verbatim}
\caption{The reordered version of the main clause of \texttt{Patricide/2}.}
\label{figure:reordered_patricide}
\end{figure}

For another example let us consider clause \texttt{Happy/1} in
Figure~\ref{figure:happy_preprocessed}. Using body reordering on this
clause we get the clause presented in
Figure~\ref{figure:reordered_happy_clause}. Note that the goal
\texttt{Clever(C)} is now moved forward into the place where it first becomes ground.

\begin{figure}[htbp]
\begin{Verbatim}[numbers=left,numbersep=2pt,frame=single]
Happy(A) :- hasChild(A, B), 
            hasChild(B, C), Clever(C), 
            hasChild(B, D), Pretty(D).
\end{Verbatim}
\caption{The reordered version of the \texttt{Happy/1} clause.}
\label{figure:reordered_happy_clause}
\end{figure}

\subsection{Multiple argument indexing}
\label{indexing}

In this section we discuss a transformation  of role predicates which makes
their Prolog execution more efficient.

Notice that goal \texttt{has\_child(E, D)} in line 2 in
Figure~\ref{figure:reordered_patricide} is always called with the
second argument instantiated. If we use a database system to store the
content of the ABox this call is executed efficiently. This is
because databases can do indexing on every column of a table.
In most Prolog systems, however, indexing is done only on the first head
argument. This may raise performance issues if we use
Prolog for storing large amounts of ABox facts.

To achieve multiple argument indexing in the generated programs we do
the following. For each role $P$ we generate a new role
\texttt{idx\_$P$}. This new set of Prolog facts (called \emph{index
  predicate}) captures the inverse relation between the arguments of
$P$, i.e.\ \texttt{idx\_$P(X, Y)$} holds if, and only if, \texttt{$P(Y,
  X)$} holds. In the case of the Iocaste problem this effectively
means that we add the following index predicate to the generated
program:

\begin{small}
\begin{Verbatim}[numbers=left,numbersep=2pt,frame=single]
idx_hasChild(o, i). 
idx_hasChild(p, i). 
idx_hasChild(p, o). 
idx_hasChild(t, p).
\end{Verbatim}
\end{small}

Consider an invocation of a role $P$ where the second argument is
instantiated, but the first is (possibly) not.  We replace each such
invocation by a call of \texttt{idx\_$P$} with the two arguments
switched. For example, the ordered clause for \texttt{Patricide/2} in
Figure~\ref{figure:reordered_patricide} takes the following form (note that
the variable \texttt{E} cannot be assumed to be instantiated by the orphan call
\texttt{not\_Ans(E, B)}):

\begin{small}
\begin{Verbatim}[numbers=left,numbersep=2pt,frame=single]
Patricide(A, B) :- C = [Patricide(A)|B], not_Ans(E, B), hasChild(D, A), 
                   idx_hasChild(D, E), Patricide(D, C).
\end{Verbatim}
\end{small}

Note that we do not actually have to generate index predicates for
every role in the ABox, because using compile time analysis we can
identify those role predicates $P_1, \ldots, P_i$ that need indexing at all
(i.e.\ those which are called at least once in such a way that their second
argument is ground, but the first is possibly not).

Also note that most Prolog implementations create a choice point when
both arguments of a role predicate $P$ are instantiated, although it is obvious
that such invocations can only succeed once (as an ABox cannot
contain a given $P(i,j)$ axiom twice). For example, consider the goals
\texttt{hasChild(i, o)} or \texttt{idx\_hasChild(p, i)}.

To avoid these choice points we apply the
commonly known technique of using auxiliary predicates. Namely, given
a role predicate $R/2$ (including the index predicates introduced above)
with facts $F$ we do the following. For every maximal set $D \subseteq
F$ of facts, which share their first argument we introduce a single
\emph{grouping clause} $R(A, Y)\texttt{ :- }T(Y)$. Here, $Y$ is a variable and $A$
is the constant shared by all 
of the facts in the first argument position in $D$. $T$ is the name of
a newly introduced predicate containing facts $T(Z_1), \ldots, T(Z_k)$
which correspond to the constants in the second arguments of the facts
in $D$, i.e.\ $\{Z_1, \ldots, Z_k\} = \{ B | R(A, B) \in D \}$.

As an example, we show the optimised version of the four clauses of the
predicate \texttt{idx\_hasChild/2} introduced above. Here, line 2 contains
a grouping clause invoking the auxiliary predicate
\texttt{idx\_hasChild\_p/1}. This makes it possible for Prolog not to
create any choice points when invoking the goal \texttt{idx\_hasChild(p,
i)} or \texttt{idx\_hasChild(p, o)}.

\begin{small}
\begin{Verbatim}[numbers=left,numbersep=2pt,frame=single]
idx_hasChild(o, i). 
idx_hasChild(p, Y) :- idx_hasChild_p(Y).
idx_hasChild(t, p).

idx_hasChild_p(i).
idx_hasChild_p(o).
\end{Verbatim}
\end{small}

\subsection{Ground goal optimisation}
\label{ground}

An important optimisation step is to make sure that the truth value of
ground goals, i.e.\ goals with all arguments instantiated, is
calculated only once. Note that by default this behaviour is not
provided by the Prolog execution, but is supported, for example, by Mercury
\cite{mercury}.

To achieve this, we duplicate a general or query predicate $P$,
i.e.\ we create two versions of $P$ depending on whether we assume
that the head variable is instantiated or not. These variants of $P$
are called non-deterministic (\emph{nondet}) and deterministic
(\emph{det}) variants, respectively.

We also create a \emph{choice} predicate for the general case which
checks if the head variable is ground at runtime. This predicate then
calls either the \emph{nondet} or the \emph{det} variant of predicate $P$.

The differences between the two variants of a predicate $P$ are the
following:

\begin{enumerate}
\item We place a Prolog cut (denoted by \texttt{!}) at the end of each clause
  in the \emph{det} variant. This results in pruning the rest of the
  search space after a successful execution of the \emph{det} variant.
\item We order the body of the clauses in the \emph{det} variant based
  on the assumption that the head variable $H$ is instantiated
  (i.e.\ the ordering algorithm in Figure~\ref{figure:ranking} is
  executed with the initial variable list $V = \{H\}$).
\end{enumerate}

Finally, we transform every goal in the program calling a general or
query predicate $P$ into another goal which calls \texttt{choice\_$P$}
instead. This technique is illustrated in
Figure~\ref{figure:ground_goal_handling}.

\begin{figure}[htbp]
\begin{Verbatim}[numbers=left,numbersep=2pt,frame=single]
choice_Patricide(A, B) :-
        (   nonvar(A) -> det_Patricide(A, B)
        ;   nondet_Patricide(A, B)
        ).

nondet_Patricide(A, B) :- member(C, B), C == Patricide(A), !, fail.
nondet_Patricide(A, _) :- Patricide(A).
nondet_Patricide(A, B) :- C=[Patricide(A)|B], not_Ans(D,B), hasChild(E,A), 
                          idx_hasChild(E, D), det_Patricide(E, C).

det_Patricide(A, B) :- member(C, B), C == Patricide(A), !, fail.
det_Patricide(A, _) :- Patricide(A), !.
det_Patricide(A, B) :- C=[Patricide(A)|B],not_Ans(D,B), idx_hasChild(A,E), 
                       idx_hasChild(E, D), det_Patricide(E, C), !.
\end{Verbatim}
\caption{The two variants of predicate \texttt{Patricide/2}.}
\label{figure:ground_goal_handling}
\end{figure}

In lines 10 and 16, instead of \texttt{choice\_Patricide/2},
we directly invoke predicate \texttt{det\_Patricide/2}. This is a
further optimisation step. Namely, we can directly call the \emph{det}
variant of a predicate $P$ if we know already at compile-time that the
first argument of the specific invocation of $P$ is ground. In our
case we can be sure that variable \texttt{E} is instantiated at the
time of calling \texttt{det\_Patricide/2}, because the predicate call
\texttt{idx\_hasChild(E, D)} instantiates it. 

Note the difference between the body goals in the two variants in
Figure~\ref{figure:ground_goal_handling} (lines 9--10 and 15--16). In
the \emph{det} variant we assume that variable \texttt{A} is
instantiated at call time, therefore we use the \texttt{idx} variant
of the goal \texttt{hasChild(E, A)}.

Proposition~\ref{prop:anc} ensures that all unary goals within a \emph{det}
variant of a predicate are ground at the time of their invocation. Thus all
these goals will directly invoke the \emph{det} variant of their predicate.

\subsection{Decomposition}
\label{decomposition}

The goal of decomposition is to split a body into independent
components. This is achieved by uncovering the dependencies between
the goals of the body. This process, on one hand, introduces a higher
level body ordering, where the independent goal groups are ordered
first, and then the individual groups are split and ordered
recursively. More importantly, the discovery of independent components
makes it possible to use a generalisation of the ground goal
optimisation, by applying this technique to a whole independent goal
group. For DL programs generated from a $\mathcal{SHIQ}$ KB this
practically means recovering certain useful structural properties of
the initial TBox axioms. Before we go into details we show an example
to demonstrate a problem which can be solved using decomposition.

\subsubsection{An introductory example}

Let us recall the single clause of the predicate \texttt{Happy/1} shown in
Figure~\ref{figure:reordered_happy_clause}, stating that someone is
happy if she has a child having both a clever and a pretty child.
Although the body of this clause is ordered
according to our base ranking, in certain cases the execution of it is
far from optimal. For example, consider the ABox specified below:

\begin{Verbatim}[commandchars=\\\{\}]
hasChild(kate, bob).
hasChild(bob, lisa\(\sb{i}\)).     \textrm{for \(i=1\ldots n\)}
Clever(lisa\(\sb{i}\)).            \textrm{for \(i=1\ldots n\)}
\end{Verbatim}

\noindent Thus we know that \texttt{bob} is the child of \texttt{kate}
and he has $n$ clever children, but nobody is known to be
pretty. This ABox does not entail that
\texttt{kate} is happy, i.e.\ the  goal  \texttt{Happy(kate)} fails.
However, obtaining this negative answer involves lots of useless computation. Namely, we
enumerate all children of \texttt{bob} and check whether they are
clever. We do this in spite of the fact that \texttt{bob} has no
pretty children at all, even though having a pretty grandchild is a
necessary condition for \texttt{kate} being happy. What happens is
that we explore the choice point created in line 2 in
Figure~\ref{figure:reordered_happy_clause}, although goals in line 3
are bound to fail.

Note that this behaviour would not change if we applied ground goal
optimisation here, i.e.\ if we used the \emph{det} variant of the
clause (cf.\ Section~\ref{ground}) in
Figure~\ref{figure:reordered_happy_clause}. The order of the goals in
the body would be the same. The cut at the end of the clause would not
matter either, as the goal \texttt{Happy(kate)} fails.

What we need here is the realisation that the \texttt{hasChild(B, C),
  Clever(C)} group of subgoals, used for checking that \texttt{bob}
has a clever child, is independent from the remaining subgoals of the
body. Thus, once we have proved that \texttt{bob} has a clever child,
there is no point in proving this property in other ways.

\subsubsection{The solution}

In the above example, the real reason behind the inefficient execution
is that during the Prolog translation we do not utilise the structural
properties of the TBox axiom in Figure~\ref{figure:happy_kb}. This
axiom actually describes that somebody is happy if she has a child
satisfying a certain condition, namely having a clever child as
well as a pretty child. This condition can be split into two
independent parts: \texttt{hasChild(B, C), Clever(C)} and
\texttt{hasChild(B, D), Pretty(D)}. The two parts share only a single
variable \texttt{B}. If \texttt{B} is ground, we can stop
enumerating her children once a clever one is found, as a new
value for variable \texttt{C} can not affect the remaining goals.

The solution is to use this knowledge by \emph{decomposing} the body
of clause \texttt{Happy/1}, as shown in
Figure~\ref{figure:happy_clause_decomposed}.

\begin{figure}[htbp]
\begin{Verbatim}[numbers=left,numbersep=2pt,frame=single]
Happy(A) :-
     hasChild(A, B),
     (   hasChild(B, C),
         Clever(C) -> true
     ),
     (   hasChild(B, D),
         Pretty(D) -> true
     ).
\end{Verbatim}
\caption{The decomposed version of the \texttt{Happy/1} clause.}
\label{figure:happy_clause_decomposed}
\end{figure}

The clause for \texttt{Happy/1} starts with a single goal representing
the condition that somebody should have at least one child in order to
be happy (line 2). The required properties of this child are captured by the
two consecutive \emph{components} in the clause (lines 3--5 and
6--8). The idea here is that we only look for the first solution of
these components, i.e.\ we place an implicit Prolog cut at the end of the component (by
using the conditional expression operator \texttt{->}). This ensures
that once a component succeeds it prunes the rest of its search
space. This is, in fact, the ground goal optimisation, applied to a
whole component, rather than to a single goal.

Note that the goal \texttt{hasChild(A, B)} in line 2 generates a
choice point, which we cannot eliminate here as we cannot be sure that
\texttt{B} has the required properties. On the other hand, if the
ground goal optimisation (cf.\ Section~\ref{ground}) is also applied,
then the cut (\texttt{!}) at the end of the \emph{det} variant clause prunes
this choice point.

\subsubsection{The process of decomposition}

Decomposition relies on identifying independent components in clause
bodies, i.e.\ subgoal sequences which do not share uninstantiated
variables. Such techniques have been extensively studied, mostly in
the context of parallel execution of logic programs, for example in
\cite{manuel2}.

Because of the special properties of DL predicates we can apply here a very
simple algorithm. The decomposition process is actually a
modification of the ordering algorithm introduced in 
Figure~\ref{figure:ordering}: the steps 2a--2c  shown in
Figure~\ref{figure:decomposition_process} are added after step 2 of the
ordering algorithm.

\begin{figure}[htbp]
\begin{magicbox}
      \begin{enumerate}
        \item[2a.] split $B$ into a partition $\{B_1, \ldots, B_n\}$
          w.r.t.\ $V$
        \item[2b.] if $n = 1$ continue at 3
        \item[2c.] apply  the ordering algorithm recursively (starting from step 1)
          for $(B_j, R, V)$, producing a (possibly composite) goal $C_j$, for each $j = 1, \ldots, n$;
          and return $G_1, \ldots, G_{i-1}$,  $c(C_1), \ldots, c(C_n)$.

      \end{enumerate}
\end{magicbox}
\caption{The process of decomposition (extension of
  the algorithm in Figure~\ref{figure:ordering}.}
\label{figure:decomposition_process}
\end{figure}

Decomposition starts with step 2a, which partitions the set of body goals
into one or more subsets in such a way that goals in different partitions
share only variables in $V$ (the set of variables considered to be
instantiated) and the maximal number of partitions is obtained. If the
decomposition results in a single partition (see step 2b), then we continue
with the normal goal ordering algorithm.

If multiple partitions have been obtained then each of these is
ordered and recursively decomposed (step 2c). In this case the output
of the modified ordering algorithm contains the goals collected so
far, followed by the components. The latter are distinguished from
ordinary goals by being encapsulated in a $c(\ldots)$ structure. This
marks the independent units where pruning can be applied. 

Note that the components themselves also undergo an ordering
phase, but this is not detailed here.

We illustrate the idea of recursive decomposition on the \emph{nondet}
variant of clause \texttt{Ans/1} from the Iocaste problem. The result is
shown in Figure~\ref{figure:decomposition_iocaste}. The first evaluation of
the step 2a yields a single component. Therefore step 3 of the ordering
algorithm (Figure~\ref{figure:ordering}) is performed, the highest priority
goal is selected and placed at the beginning of the body (see line 3 of
Figure~\ref{figure:decomposition_iocaste}). Next, the process of
decomposition is repeated for the remaining goals, where the evaluation of
step 2a yields two components, shown in lines 4--7 and 8. As the second
component contains a single goal, there is no need for explicit pruning (as
the call of a \texttt{det\_\ldots} predicate leaves no choice points
behind).

\begin{figure}[htbp]
\begin{Verbatim}[numbers=left,numbersep=2pt,frame=single]
nondet_Ans(A, B) :-
        C=[Ans(A)|B],
        hasChild(A, D),
        (   hasChild(D, E),
            det_not_Patricide(E, C) ->
            true
        ),
        det_Patricide(D, C).
\end{Verbatim}
\caption{The \texttt{Ans/1} clause after decomposition.}
\label{figure:decomposition_iocaste}
\end{figure}

Also note that variables used for ancestor resolution in the generated
program are not considered during the decomposition process as this is
performed on the DL program directly. This is the reason why goals in
line 5 and line 8 can be placed into separate components, although
both of them contain variable \texttt{C}.

\subsection{Projection and supersets}
\label{projection}

As discussed in the previous section, decomposition helps in reducing the
number of unnecessary choice points in a clause body by using conditional
structures. However, the choice point in the first component 
of the \emph{nondet} variant of a clause has to remain, and this can cause
serious performance problems.

As an example, let us consider the behaviour of the clause
\texttt{nondet\_Ans/2}, shown in Figure~\ref{figure:decomposition_iocaste},
when run on a large Iocaste pattern. Here, the first component is the goal
\texttt{hasChild(A, D)} in line 3, which enumerates all objects in the
parent-child relationship. Let us assume that the first few facts in the
\texttt{hasChild/2} predicate are \texttt{hasChild(i, e}$_i$\texttt{)}, for
$i = 1, \ldots, k$, cf.\ the rightmost pattern in
Figure~\ref{figure:patterns}. Thus the goal \texttt{hasChild(A, D)} first
succeeds with the substitution \texttt{A = i, D = e$_1$}. As explained in
Section~\ref{motivex}, the remaining two components of
\texttt{nondet\_Ans/2} (lines 4--8 of
Figure~\ref{figure:decomposition_iocaste}) will complete successfully,
without leaving a choice point, and thus the solution \texttt{A = i} is
obtained. We now backtrack to the choice point in line 3, to look for other
individuals satisfying \texttt{nondet\_Ans/2}. However, the next few
substitutions returned by the \texttt{hasChild} goal in line 3 will be
\texttt{A = i, D = e$_i$}, $i = 2, \ldots, k$. In all these substitutions \texttt{A} obtains the
value \texttt{i}, which is already known to be a solution. Thus the exploration
of this part of the search space is absolutely useless. Having obtained a
solution \texttt{A = i}, one should ignore all further ABox facts of the form
\texttt{hasChild(i, \_)}. However, one cannot cut away the choice point in
line 3, because there could be other \texttt{hasChild(x, \_)} facts, which
lead to further solutions. Contrastingly, no such problem appears in the
\emph{det} version of the same predicate, as a cut is placed at the very
end of the clause (cf. the ground goal optimisation, Section~\ref{ground}).

We eliminate the need for the \emph{nondet} variant of a
predicates by the optimisation presented in this section. This works by
first calculating a so called superset of the predicate, which is a set of
individuals containing all the solutions of the predicate.  Next, the
elements of the superset are enumerated, and the \emph{det} variant of the
predicate is called for each individual in the superset. 

We now proceed with the definition of the notion of superset. Next, we
show how it can be used to eliminate the non-deterministic predicates
from the generated programs.

\subsubsection{The notion of superset}
Let $I(P)$ denote the set of solutions of a unary predicate (unary clause)
$P$ w.r.t.\ a Prolog program. By a solution of a clause $C$ we mean a
solution of the predicate which $C$ belongs to, obtained through the
successful execution of clause $C$.

\begin{definition}[The superset of a predicate]
A set of instances $S$ for which $I(P) \subseteq S$ holds is
called a \emph{superset} of predicate (clause) $P$.
\end{definition}

According to the definition, the superset of a predicate is a set of
instances which contains all the solutions of the predicate (and
possibly some other individuals as well). For example, the set of
individuals $\{\verb+i+,\verb+o+,\verb+p+\}$ forms a superset of predicate
\texttt{Ans/1}, as it contains the individual \texttt{i}. 

Now, given a predicate $P$ and one of its supersets $S$, we can
eliminate the \emph{nondet} variant of $P$ as follows: we create a new
predicate which invokes the \emph{det} variant of $P$ for each
individual $i \in S$. Technically, this logic can be built into the
\emph{choice} predicate, as exemplified below:

\begin{Verbatim}[numbers=left,numbersep=2pt,frame=single]
choice_Ans(A, B) :-
        (   nonvar(A) -> det_Ans(A, B)
        ;   member_of_superset_Ans(A), 
            det_Ans(A, B)
        ).
...
\end{Verbatim}

\noindent Here we call the \emph{det} variant directly if \texttt{A} is
instantiated (line 2). However, we also call the \emph{det} variant if
\texttt{A} is uninstantiated (line 4), following a goal (in line 3) which
enumerates the elements of the superset in the variable \texttt{A}.

This technique has an important property: it ensures that
each solution is returned only once. For example, invoking
\texttt{choice\_Ans(A, [])}  enumerates instance
\texttt{i} only once w.r.t.\ the usual Iocaste ABox.  The
above scheme can be used for supersets which \emph{do not fit into
  memory}: the Prolog goal \texttt{member\_of\_superset\_Ans(A)} can be
implemented as a database
invocation which enumerates the individuals in the superset.

We noted at the end of Section~\ref{ground} that all goals within a
\emph{det} variant themselves invoke the \emph{det} variant of their
predicate. Thus, if the projection optimisation is applied to all
predicates of a program, then the \emph{choice} predicates can only be
called from the conjunctive query. Such predicates are called \emph{entry
predicates}, and are known to have an empty ancestor list argument.

We now describe an algorithm which assigns a set of instances to a
predicate $P$, then we show that this set is actually a superset of
$P$.

\subsubsection{Calculating supersets}

Our goal is to find a method for building supersets for predicates such
that the supersets (1) do not contain too many non-solutions and (2) are
easy to calculate.

\begin{definition}[Projection of predicates]
The \emph{projection} of a role predicate $P$ with respect to its
$n$th ($n=1,2$) argument is $\textit{Pr}_n(P) = \{v_n | (v_1,v_2) \in
I(P)\}$. The projection of a concept predicate $C$ with respect to its
only argument is $\textit{Pr}_1(C) = I(C)$. If $G$ is a goal,
$\textit{Pr}_n(G)$ means the projection of the predicate
invoked by the goal $G$ w.r.t.\ its $n$th argument.
\end{definition}

For example, $\textit{Pr}_1(\texttt{hasChild(A,B)})$ w.r.t.\ the usual
Iocaste knowledge base is the set  $\{\verb+i+,\verb+o+,\verb+p+\}$, excluding
\texttt{t}, as \texttt{t} has no children. Note that this projection
can be calculated by the Prolog call
\verb+setof(A, B^hasChild(A, B), R)+.

We now introduce the notion of \emph{projected label} for clauses. This is
either a superset of the clause, or the functor of a predicate whose
solution set contains all solutions of the given clause. Within this definition we use
a refinement of the notion of DNR predicate: a call of a predicate $Q$ in
the body of predicate $P$ is said to be a DNR invocation, if $P$ is
reachable from $\textit{not}\_Q$.

\begin{definition}[Projected label]
\label{def:proj_label}
Let $C$ be a unary clause in a DL program $\textit{DP}$. Let $W$ be the set of all
atomic and query goals in $C$ which contain the head variable. We define
the \emph{projected label} of $C$, denoted by $\textit{Pl}(C)$, as follows.

If $C$ is a fact of the form $C(a)$ then $\textit{Pl}(C)$ is the set
$\{a\}$. 
Otherwise, if $W \neq \emptyset$, then $\textit{Pl}(C)$ is
calculated as the intersection of the projections of the goals in $W$
w.r.t.\ the head variable, i.e.\ $\textit{Pl}(C) = \mathop{\cap}_{G_i
  \in W} \textit{Pr}_{p_i}(G_i)$, where $p_i$ is the position of the
head variable in the goal $G_i$.

If $W = \emptyset$ and $C$ contains a goal which is not a DNR invocation, 
then $\textit{Pl}(C)$ is the functor of an arbitrary such
goal (e.g.\ the one which  comes first w.r.t.\ the standard
Prolog term ordering). If all goals in $C$ are DNR invocations, then
$\textit{Pl}(C)$ is the set of all individual names in $\textit{DP}$.
\end{definition}

\noindent The notion of projected label has an interesting invariant:
$I'(\textit{Pl}(C)) \sqsupseteq I(C)$, where the function $I'$ takes either
a functor of a predicate $P$ and maps it to $I(P)$, or takes an arbitrary set
$S$ and maps it to itself. The invariant states that the ``solution set''
of the projected label of a clause $C$ contains all the solutions of
$C$. This invariant can be easily checked by going through the four cases
of the definition.

As an example, let us consider the Iocaste program presented
in Figure~\ref{figure:dl_predicates_iocaste}. The projected label of
the clause in lines 1--2 is the set $\textit{Pr}_1(\texttt{hasChild(A,
  B)})$, while the projected label of the clause
\texttt{not\_Patricide/1} in lines 7--8 is the intersection
$\textit{Pr}_1(\texttt{hasChild(A, B)}) \cap
\textit{Pr}_2(\texttt{hasChild(C, A)})$. As a second example, let us
consider a case where the projected label is a functor:
if $C$ is \texttt{p(X) :- q(X), r(X)}, then $\textit{Pl}(C) =
\texttt{q/1}$, assuming that \texttt{q/1} and \texttt{r/1} are
general, non-DNR predicates.

Using the definition of the projected label, we introduce the notion
of miniset graphs, which we will use to define the notion of the
miniset of a predicate.

\begin{definition}[The miniset graph of a DL program]
Let $S$ be a DL program containing the predicates $\{P_1, \ldots, P_n\}$,
where a predicate $P_i$ consists of clauses   $\{C_{i1}, \ldots,
C_{ik_i}\}$. The \emph{miniset graph} of $S$ is a labelled directed graph 
$(V, E, \mathcal{L})$, where $\mathcal{L}$ is a function assigning
labels to vertices. To each predicate $P_i$ and each clause $C_{ij}$ there
corresponds a node in the graph: $p_i$ and $c_{ij}$, respectively.
Thus $V = \mathop{\cup}_{P_i \in S}\{p_i, c_{i1}, \ldots, c_{ik_i}\}$. 

Each node $p_i$ is labelled with the functor of $P_i$, i.e.\
$\mathcal{L}(p_i) = \mathrm{the\ functor\ of\ }P_i$. A node corresponding to
a clause $C_{ij}$ is labelled with the projected label of the clause, i.e.\
$\mathcal{L}(c_{ij}) = \textit{Pl}(C_{ij})$.

There are directed edges from each predicate node $P_i$ to the nodes representing its clauses, i.e.\
$(p_i, c_{ij}) \in E, 1 \leq i \leq n, 1 \leq j \leq k_i$. Furthermore, for
each clause $C_{ij}$ whose projected label is a predicate functor $F$,
there is an edge from the corresponding node $c_{ij}$ to the node of the
predicate with the functor $F$.

\end{definition}

\noindent As an example,
the miniset graph of the Iocaste program of
Figure~\ref{figure:dl_predicates_iocaste} is shown in
Figure~\ref{figure:miniset_iocaste}.

\begin{figure}[htbp]
\centering
\psfrag{Ans}{\texttt{Ans/1}}
\psfrag{mAns1}{\texttt{\ogt\cgt}}
\psfrag{mAns2}{\texttt{\ogt i, o, p\cgt}}
\psfrag{Pat}{\texttt{Patricide/1}}
\psfrag{nPat}{\texttt{not\_Patricide/1}}
\psfrag{mPat1}{\texttt{\ogt o\cgt}}
\psfrag{mnPat1}{\texttt{\ogt t\cgt}}
\psfrag{mPat2}{\texttt{\ogt o, p, t\cgt}}
\psfrag{mnPat2}{\texttt{\ogt o, p\cgt}}
\includegraphics[scale=0.6]{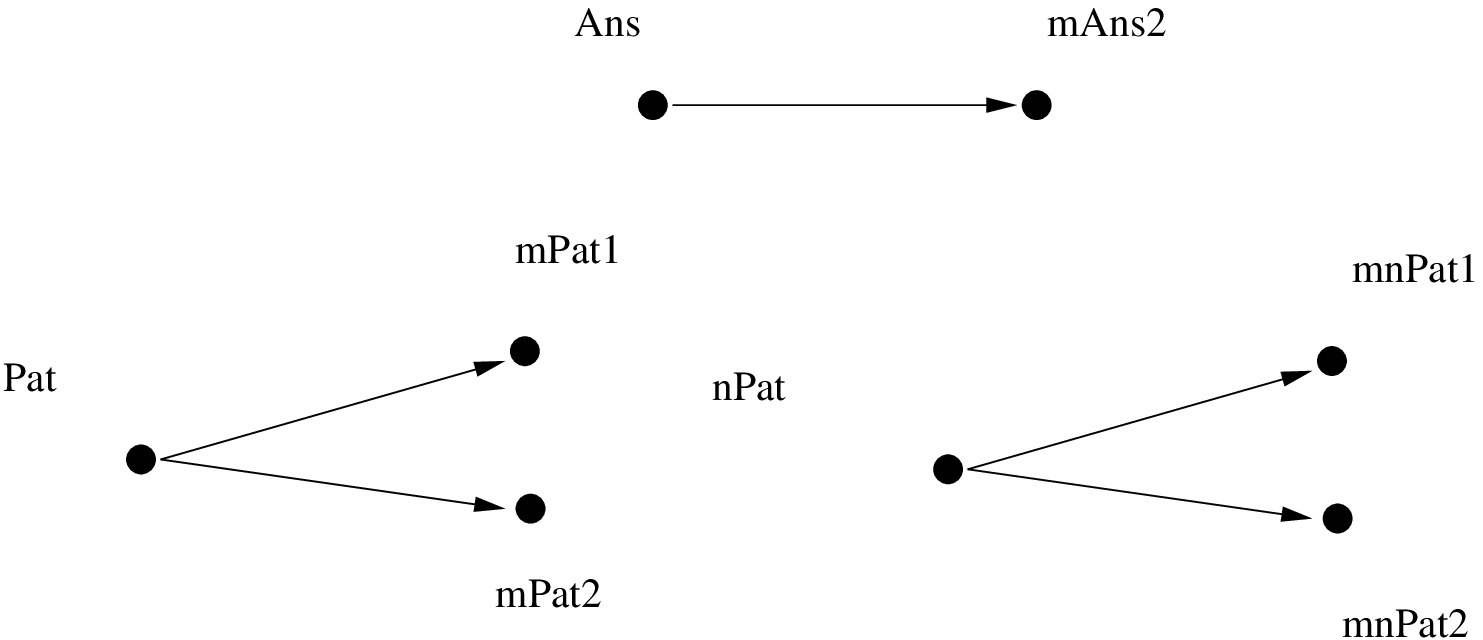}
\caption{The miniset graph of the Iocaste DL program in Figure~\ref{figure:dl_predicates_iocaste}.}
\label{figure:miniset_iocaste}
\end{figure}

\noindent Now we are ready to formulate the definition of the miniset
of a predicate.

\begin{definition}[The miniset of a predicate]
\label{def:miniset_predicate}
Let $G$ be a miniset graph of a DL program. The \emph{miniset} $M(P)$ of a
predicate $P$ in this program is calculated as the union of the labels of
the nodes which (1) are reachable from the node corresponding to $P$ in the
graph $G$, and (2) are labelled with a set.
\end{definition}

\noindent For example, the miniset of predicate \texttt{Patricide/1}
in the Iocaste knowledge base is $\{\texttt{o}\} \cup \{\texttt{o}, \texttt{p},
\texttt{t}\}= \{\texttt{o}, \texttt{p}, \texttt{t}\}$. Note that for an ABox stored in a database
the calculation of minisets can be done using database queries, as
the projected labels in the miniset graph refer only to
atomic and query predicates.

\begin{proposition}
For an arbitrary predicate $P$ in a DL program $\textit{DP}$, $M(P)$ is a superset of $P$.
\end{proposition}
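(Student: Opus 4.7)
The plan is to proceed by induction on the height of the Prolog proof tree. Fix an arbitrary $a \in I(P)$ and pick any proof of $P(a)$. The proof must use some clause $C$ of $P$, so $a \in I(C)$, and the node $c$ for $C$ is directly reachable from the node $p$ for $P$ via the edge $(p,c)$ in the miniset graph.

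I then invoke the invariant already recorded in the paper just after Definition~\ref{def:proj_label}, namely that $I'(\textit{Pl}(C)) \supseteq I(C)$; this is verified by inspecting the four cases of the projected-label definition. From this, $a \in I'(\textit{Pl}(C))$. If $\textit{Pl}(C)$ is a set, then $a \in \textit{Pl}(C) = \mathcal{L}(c)$, and since $c$ is reachable from $p$ and labelled with a set, Definition~\ref{def:miniset_predicate} immediately gives $a \in M(P)$. If instead $\textit{Pl}(C)$ is the functor of some predicate $Q$, then the invariant gives $a \in I(Q)$, and, crucially, a successful execution of clause $C$ with the head variable bound to $a$ contains a strict subproof establishing $Q(a)$; the induction hypothesis therefore yields $a \in M(Q)$. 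Because the miniset graph contains the edge $(c,q)$, every node reachable from $q$ is reachable from $p$, so $M(Q) \subseteq M(P)$ and $a \in M(P)$.

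The main obstacle is that the miniset graph may contain cycles through functor-labelled edges, whenever predicates are recursive or mutually recursive with functor-valued projected labels. This rules out a naive structural induction over the graph; doing the induction on proof-tree height, where the functor-labelled case corresponds to a strictly smaller subproof, is what sidesteps the difficulty. A secondary piece of bookkeeping is the projected-label invariant itself: the four cases are routine, the only slightly delicate one being the $W = \emptyset$ branch, where one needs to check that the chosen non-DNR goal indeed witnesses the head-variable-to-solution correspondence used in the inductive step.
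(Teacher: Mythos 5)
Your overall strategy is the same as the paper's: follow a successful execution of $P(a)$ down through the miniset graph, use the invariant $I'(\textit{Pl}(C)) \supseteq I(C)$ at each clause node, and stop at a set-labelled node that must contain $a$. The paper phrases this as ``each execution of the goal $P(X)$ has a corresponding finite path in the miniset graph'' where you make the recursion explicit as an induction on proof-tree height; that is a presentational improvement (and your observation that the graph may be cyclic, so one cannot induct on the graph itself, is a fair point that the paper leaves implicit).

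There is, however, a genuine gap: you never account for ancestor resolution, which is precisely what the paper's proof spends most of its effort on. Your opening step, ``the proof must use some clause $C$ of $P$,'' is not true in general for DL programs -- a unary goal can succeed via the $F_2$ clause by resolving against $\textit{not\_}P$ on the ancestor list, using no clause of $P$ at all, in which case your induction has nothing to descend into. The paper closes this at the root by noting that the very first entry to $P$ has an empty ancestor list. The same issue recurs, more seriously, at your inductive step: the ``strict subproof establishing $Q(a)$'' is only a derivation to which your induction hypothesis can be applied if its root step actually uses a clause of $Q$ rather than ancestor resolution (a subderivation that succeeds by resolving against ancestors inherited from the enclosing proof need not witness $a \in I(Q)$ in the top-level sense, nor expose a clause of $Q$ to recurse on). This is exactly why Definition~\ref{def:proj_label} insists on choosing a \emph{non-DNR} goal in the $W = \emptyset$ case: non-DNR guarantees that $\textit{not\_}Q$ can never appear on the ancestor list at that call site, so ancestor resolution is inapplicable there. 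You gesture at the $W = \emptyset$ branch being ``slightly delicate'' when checking the invariant, but the non-DNR condition is never actually invoked where your argument needs it. The fix is available and short -- add the two observations the paper makes (empty ancestor list at first entry; non-DNR invocations along every $C \rightarrow Q$ edge of the miniset graph) -- but as written the induction does not go through.
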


\begin{proof}\small
If predicate $Q$ has clauses $C_1, \ldots, C_n$, then $I(Q)
=\mathop{\cup}_{C_i \in Q} I(C_i)$, assuming $Q$ cannot succeed using
ancestor resolution. When solving $P$, ancestor resolution cannot be used
at the very first entry to $P$, because the ancestor list is then
empty. Furthermore, if there is an edge from a clause $C$ to a predicate
$Q$ in the miniset graph of $\textit{DP}$, then the invocation of $Q$ is known to be
non-DNR, as specified in the definition of the projected label. This means
that ancestor resolution is not applicable when $Q$ is invoked from $C$.
Also note that the invariant $I(C)
\subseteq I(Q)$, similar to that mentioned after the definition
of projected label, holds for the edge $C \rightarrow Q$. 

Each execution of the goal $P(X)$ has a corresponding finite path in the miniset
graph of $\textit{DP}$. The endpoint of this path has a set as a label, which
contains the value assigned to $X$. Thus the answer to the query $P(X)$ is
contained in a set label reachable from $P$, and thus in $M(P)$, too.
\end{proof}

\subsubsection{Implementation}

In our implementation we calculate the miniset of a predicate $P$ in the
following way. First, for each clause reachable from $P$ in the miniset
graph, we collect the conjunction of the goals participating in the
construction of the projected label for the given clause.  Next, we build
an auxiliary predicate whose body is the disjunction of these
conjunctions. Finally, we calculate the set of solutions of this auxiliary
predicate using the standard predicate \texttt{setof}, and enumerate the
members of the superset using the list membership predicate
\texttt{member}:

Below we show an example of superset calculation for a fictitious predicate,
assuming it gives rise to three three conjunctions shown in lines 4--6,
where \texttt{X} is the head variable:
\begin{Verbatim}[numbers=left,numbersep=2pt,frame=single]
member_of_superset_goal(A) :- setof(X, goal(X), S), member(A, S).

goal(X) :- 
        (   hasChild(Y, X), hasChild(X, Z), hasFriend(X, W)
        ;   hasChild(X, Y) 
        ;   Rich(X)
        ).
\end{Verbatim}

\noindent Note that we simplify the \texttt{goal/1} above: we omit the
first branch of this disjunction (line 4) as it is subsumed by the more
general goal in the second branch (line 5). 

There are cases when the projection optimisation is not applied. For very
simple predicates, e.g.\ those invoking atomic goals only, calculating the
projection is simply duplication of work, and so this optimisation is not
used.  Another case is when all goals in the body of a clause are DNR
invocations (i.e.\ can succeed via ancestor resolution).  The superset of
such a clause (and of its predicate, too) is defined to contain all the
individuals in the ABox. Obviously, in such cases the superset optimisation
is not applied either. The definition of superset could be refined to decrease the
number of such cases. As an alternative, source-to-source transformation
techniques can be used to eliminate the need for ancestor resolution in the
very early phase of execution, as discussed in \cite{iclp2008}.

To conclude this section, in Figure~\ref{figure:final_translation_iocaste}
we show the most interesting parts of the compiled Iocaste problem.  To
save space we have omitted the definition of the predicate
\texttt{det\_Patricide/2} (lines 11--14 of
Figure~\ref{figure:ground_goal_handling}), as well as the \emph{choice} and
\emph{det} variants of the
predicate \texttt{not\_Patricide/2}, which are very similar to
corresponding variants of \texttt{Patricide/2}.  All optimisations
discussed so far have been applied here, including the superset
optimisation. Notice how simple is the entry predicate for
\texttt{Patricide} (lines 15--16): it only invokes the atomic ABox
predicate. This is because the last clause of \texttt{Patricide/2}
(cf.\ lines 9--10 in Figure~\ref{figure:naive_translation_iocaste})
contains an orphan goal which cannot succeed when \texttt{Patricide} is
used as an entry predicate (as it has an empty ancestor list argument). For
the same reason the clauses for loop elimination and ancestor resolution
(cf.\ lines 6--7 in Figure~\ref{figure:naive_translation_iocaste}) can be
removed.  This leaves us with a single clause with a single atomic goal,
for which there is no point in generating the superset.  Because of this,
we do not even generate the conditional structure usually present in
\emph{choice} predicates.

\begin{figure}[htbp]
\begin{Verbatim}[numbers=left,numbersep=2pt,frame=single]
choice_Ans(A, B) :-
        (   nonvar(A) -> det_Ans(A, B)
        ;   setof(A, C^hasChild(A,C), D), member(A, D), det_Ans(A, B)
        ).

det_Ans(A, B) :- C = [Ans(A)|B], 
                 hasChild(A, D),
                 (   hasChild(D, E),
                     det_not_Patricide(E, C) ->
                     true
                 ),
                 det_Patricide(D, C), !.

choice_Patricide(A, _) :-
        Patricide(A).

...
\end{Verbatim}
\caption{The final Prolog translation of the Iocaste problem.}
\label{figure:final_translation_iocaste}
\end{figure}

\subsection{Transforming role axioms}
\label{roles}

We present here a compilation scheme for $\mathcal{SHIQ}$ role axioms which
is more efficient than the one introduced in Section~\ref{compiling}.  We
consider role subsumption axioms only, as an equivalence axiom $R \equiv S$
can be replaced by two subsumptions, and the transitivity axioms are
removed by the first stage of the transformation (see
Section~\ref{shiqprinciples}). 

The general scheme of Section~\ref{compiling} applies loop elimination for
role axioms. This is required because, for example, the subsumption axioms
$R\sqsupseteq S$ and $S \sqsupseteq R$ are transformed to the
following two DL clauses, whose Prolog execution obviously
leads to an infinite loop:

\begin{Verbatim}[numbers=left,numbersep=2pt,frame=single]
R(A, B) :- S(A, B).
S(A, B) :- R(A, B).
\end{Verbatim}

\noindent
In general, looping of role-predicates is related to role equivalence
(roles $R$ and $S$ above are obviously equivalent). The main idea is to
avoid the need for loop elimination by designating one of the equivalent
roles as the representative of the others. All invocations of these
predicates are replaced by appropriate calls of the representative
predicate. Furthermore, of the two subsumption axioms stating role
equivalence, we keep only the one where the non-representative role is
defined in terms of the representative one.  In the above example, if $R$
is the representative, we replace all occurrences of $S$ by $R$ throughout
the TBox, and retain only the second of the above clauses, the one
corresponding to the axiom $S \sqsupseteq R$.

Note that the above scheme  does not work when a role subsumption axiom
states that a role $R$ is a \emph{symmetric}: 
$R^{-}\sqsubseteq R$. The  Prolog translation of this axiom, $R(X,Y)
\texttt{ :- } R(Y,X)$, is an obvious loop. We break this loop by introducing
an auxiliary predicate name \texttt{base\_}$R$, replacing all occurrences
of $R$ in clause heads by \texttt{base\_}$R$, and defining the predicate
$R$ in terms of \texttt{base\_}$R$ by the two clauses $R(X,Y) \texttt{ :- }
\texttt{base\_}R(X,Y)$ and $R(X,Y) \texttt{ :- } \texttt{base\_}R(Y,X)$.

We start the formal discussion with some auxiliary definitions.

\begin{definition}[Reduced graph]
Let $G$ be an arbitrary directed graph. The \emph{reduced graph} of
$G$, denoted by $G_r$, is defined as follows. The vertices of $G_r$
are the strongly connected components (SCC) of $G$. There is an edge in
$G_r$ from $A$ to $B$ if, and only if, there is an edge in $G$ from one
of the vertices in the SCC corresponding to $A$ to one of the vertices
in the SCC corresponding to $B$.
\end{definition}

\begin{definition}[Canonical inverse of role]
Let $R$ be an atomic role or its inverse. The canonical inverse of $R$,
denoted by $\textit{Inv}(R)$, is defined as follows.

\[ \textit{Inv}(R) = \left\{ \begin{array}{ll}
  S       & \mbox{if $R = S^{-}$}\\
  R^{-}     & \mbox{otherwise}
\end{array}
\right.
\]
\end{definition}

\begin{definition}[The role dependency graph]
For a given knowledge base $\textit{KB}$ the role
dependency graph $G = (V, E)$ is defined as follows.  The set of vertices $V$
of $G$ is  the set of atomic roles occurring in $\textit{KB}$ and of their inverses. There is a directed edge from $P_i$
to $P_j$ \emph{and} from $\textit{Inv}(P_i)$ to $\textit{Inv}(P_j)$,
if, and only if, $P_i \sqsubseteq P_j \in KB$.
\end{definition}

\noindent Let $G$ be a role dependency graph w.r.t.\ a knowledge base $\textit{KB}$
and let us consider its reduced graph $G_r$.  Notice that each node of
$G_r$ is a component of the original graph whose elements are equivalent
roles. Also notice that if roles $R_1, \ldots, R_n$ all belong to a single
component $E$, then roles $\textit{Inv}(R_1), \ldots, \textit{Inv}(R_n)$
belong to a single component as well, which we call the inverse of the
first component, and denote by $E^{-}$. A role is symmetric if, and only if,
its component is the inverse of itself.

Consider the set $E \cup E^{-}$, where $E$ is a component of a role
dependency graph. Predicate invocations of two
roles in this set return the same pairs of individuals (possibly
in a different order). Therefore we designate a single atomic role name, say the one
which comes first in the lexicographic order, as the \emph{representative}
of all roles in this set. Thus for any role $R \in E \cup E^{-}$, let 
$\textit{Repr}(R)$ denote the first of the atomic role names in this set, according
to the lexicographic order. 

We now discuss how to transform role predicate invocations and role
predicate heads, so that they use  representative roles only. The
transformation schemes for invocations and heads are the same, except for
symmetric roles $R$, where the auxiliary predicate $\texttt{base\_}R$ is
used in the heads to break the loops.

Let $\textit{DP}$ be a DL program generated from a knowledge base $\textit{KB}$, and let $G$
be the role dependency graph of $\textit{KB}$.  Let $RR =\textit{Repr}(R)$ denote
the representative of a role $R$.  Let us consider the compiled version of
the program $\textit{DP}$, as defined in Section~\ref{compiling}. We first remove
the ancestor list arguments and the loop elimination clauses (denoted by
$F_2$) from all role predicates. We then perform the following
transformations on all role predicate invocations and heads, except for
those prefixed with the module name \texttt{abox}  (occurring in the
bodies of clauses of type $F_3$).
\begin{enumerate}
  \renewcommand{\theenumi}{\alph{enumi}}
\item
If $R$ and $RR$ belong to the same
component of $G$, then the role predicate invocation
$R(X,Y)$ is replaced by $RR(X,Y)$; otherwise it is replaced  by $RR(Y,X)$.
\item If $R$ is not a symmetric role, then the role predicate head is
  transformed as described in point a.\ above.
\item If $R$ is a symmetric role, then the role predicate head 
  $R(X,Y)$ is replaced by $\texttt{base\_}RR(X,Y)$.
\end{enumerate}

Here we view the compiled program as a set of clauses, rather than a set of
predicates. This is important for two reasons. First, when replacing role
names with their representatives, several instances of the same clause may
be produced, of which only one should be kept. Second, changing clause
heads means that clauses are moved from one predicate to another.

Finally the compiled DL program is extended with the following predicates:
\begin{itemize}
\sloppy
\item  For each symmetric atomic role name  $R$, which is a representative
  of a set of roles, we add the following  definition: $R(X,Y) \texttt{ :- }
\texttt{base\_}R(X,Y)$ and $R(X,Y) \texttt{ :- } \texttt{base\_}R(Y,X)$. This
way $R$ becomes the symmetric closure of $\texttt{base\_}R$, which is
populated using the ABox and/or role subsumption axioms.
\item For each atomic role name $R$, which is not a representative of a
  role set, we build the (tautological) clause $R(X,Y) \texttt{ :- } R(X,Y)$,
  and transform its body according to point a.\ above. Such clauses will
  only be used when the role $R$ occurs in a composite query. (An
  alternative is not to include these clauses, and to apply the
  transformation of point a.\ above to the composite query.)
\end{itemize}

The above transformation  can be easily combined with the role
indexing technique introduced in Section~\ref{indexing}. This is
incorporated in the DLog system, but the details are not discussed here.

The transformation scheme  has several
advantages. First and foremost, it ensures that the evaluation of a role
predicate cannot loop, and so there is no need for the ancestor list
argument and the loop-elimination clause in the role
predicates. Furthermore, it avoids those duplicate solutions, which are due to
interchangeability of equivalent roles. However, a role predicate can still
produce duplicate solutions (e.g.\ when the role subsumes two other roles
sharing a solution), and the transformation scheme could be refined further
to improve the efficiency of execution.

\subsection{Summary}

We presented several optimisations which result in a
much more efficient Prolog translation, in comparison with the generic
compilation scheme described in Section~\ref{shiqreasoning}. These
optimisations preserve the most important property of the generic
compilation scheme, e.g.\ the separation of the TBox from the ABox. In
the following we give a brief summary of the optimisations
presented.

In \emph{filtering}, we remove those clauses that need
not to be included in the final program as they are not used in the
execution. We proved that the certain clauses (those having the \emph{false-orphan},
the \emph{two-orphan}, or the \emph{contra-two-orphan} property) can be removed.

\emph{Classification} puts each predicate into one of the four categories:
\emph{atomic}, \emph{query}, \emph{orphan} and \emph{generic}. For each
class, we presented an optimised translation scheme.

The \emph{ordering} optimisation arranges the body goals so as to
minimise the execution time. We defined a heuristic and specified an
ordering algorithm which uses this heuristic.

The \emph{indexing} optimisation is introduced to get around the
problem that in most Prolog systems indexing is done only on the first
head argument and this may raise performance issues if we use Prolog
for storing large amounts of ABox facts.

The \emph{ground goal optimisation} makes sure that if a ground goal
succeeds, then all choice points within it are pruned. To achieve
this, we create two versions of each unary predicate, which handle the
cases of the head variable being instantiated or uninstantiated.

The goal of \emph{decomposition} is to split a body into independent
components: this recursive process introduces a more refined notion of
body ordering and a generalisation of the ground goal optimisation. We
described the decomposition process and specified its relation to the
body ordering algorithms.  

The idea of the \emph{superset} optimisation is to determine, for each predicate $P$, a
set of instances $S$ for which $I(P) \subseteq S$ holds, where $I(P)$
is the set of solutions of $P$. If the size of $S$ is not
significantly larger than that of $I(P)$, then we can use $S$ to
efficiently reduce the initial instance retrieval problem to a finite
number of instance checks. We defined the notions of miniset graphs and
minisets and showed that the so-called miniset of a predicate fulfils
the above criteria.

Finally, we defined an efficient translation scheme for the
$\mathcal{SHIQ}$ role axioms ($R \sqsubseteq S$).

\section{The DLog system}
\label{dlog}

In this section we first introduce the software architecture of the
DLog system. Next, we discuss the implementation specific
optimisations we have developed. Finally, we present the various
parameters one can use to tune the behaviour of the DLog system.

\subsection{Architecture}
\label{architecture}

DLog is a resolution based Description Logic ABox reasoner for the
$\mathcal{SHIQ}$ DL language, which implements the techniques described in
the paper. DLog has been developed in Prolog, involving a total of
approximately 180KB of Prolog source code. Our main implementation is in
SICStus Prolog, a port to SWI Prolog has been completed recently.

The general architecture of the DLog implementation is shown in
Figure~\ref{figure:architecture}. The system can be used as a server
or as a stand-alone application. In either case, the input of the
reasoning process is provided by a DIG file. The DIG format \cite{dig} is a
standardised XML-based interface for Description Logic Reasoners.

\begin{figure}[htbp]
  \centering
  \psfrag{di}{DIG input}
  \psfrag{pp}{SICStus Process}
  \psfrag{r}{Runtime system}
  \psfrag{a}{ABox}
  \psfrag{t}{TBox}
  \psfrag{a1}{no modification/}
  \psfrag{a11}{indexing}
  \psfrag{a2}{compiling}
  \psfrag{q}{Ask part}
  \psfrag{us}{User queries}
  \psfrag{pr1}{Generated}
  \psfrag{pr2}{Prolog}
  \psfrag{pr3}{program}
  \psfrag{ma}{Module \texttt{abox}}
  \psfrag{mt}{Module \texttt{tbox}}
  \psfrag{out}{DIG output}
  \psfrag{Dat}{Database}

  \includegraphics[scale=0.6]{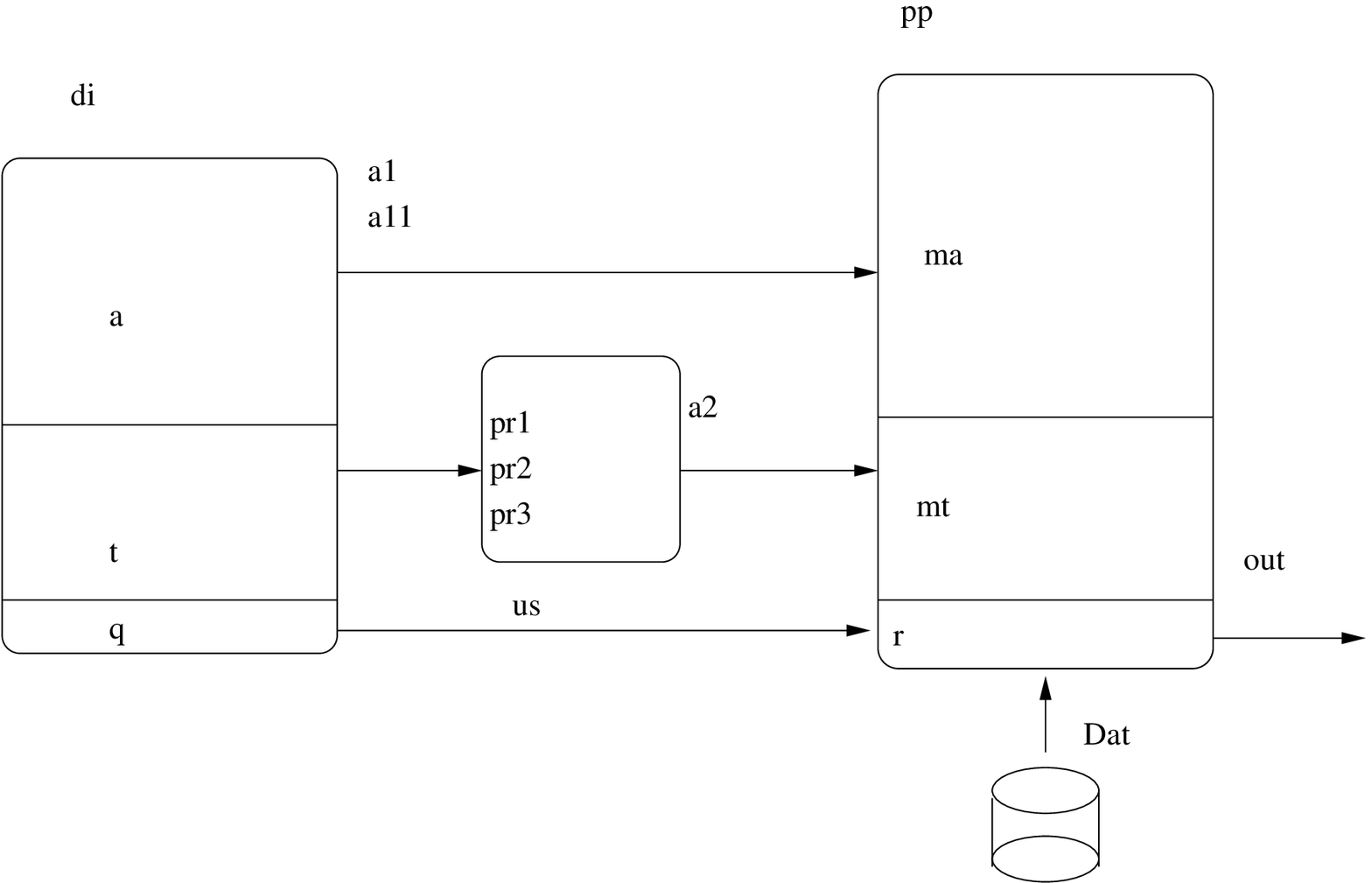}
\caption{The architecture of the DLog system.}
\label{figure:architecture}
\end{figure}

The input file has three parts: the (potentially) large ABox, the
smaller TBox and the Ask part describing the queries. The content of
the ABox is asserted into module \texttt{abox}, either with no
modifications, or (if the indexing optimisation is applied) together with the index
predicates. Note that the ABox can also be supplied as a database. This is
essential for really large data sets.

The content of the TBox is transformed into a Prolog program following
the techniques described in previous sections. This program is
then compiled into module \texttt{tbox}.

The content of the Ask part in the DIG input contains the user
queries. In the simplest case, the user poses an instance retrieval
query which directly corresponds to a concept in the TBox. Such cases
are answered by directly invoking the appropriate choice predicate. In
the more complex case we have a \emph{conjunctive query} as introduced
in Section~\ref{specpttp}.

We handle conjunctive queries by reducing  the problem of
query answering to a normal DL reasoning task \cite{horrocks00how}. We
simply apply body reordering (Section~\ref{ordering}) and
decomposition (Section~\ref{decomposition}) on a conjunctive query and
use the normal Prolog execution for the resulting goal. We are aware
that much more sophisticated techniques are available
\cite{motik06PhD}, but at the moment our simple approach seems to be
efficient enough.

\subsection{Low-level optimisations}
\label{further}

During the implementation we have applied several optimisations which
can be considered implementation specific or low level. Below we
give a brief summary of these optimisations.

\paragraph{Loop and ancestor separation} 

It is worth separating the data structures used for loop elimination
and ancestor resolution. This way we can update them separately, which
results in more efficient execution.

\paragraph{Hashing} 

Rather than using lists, we introduced a more efficient data structure
to store the goals used for loop elimination and ancestor resolution. For
this purpose we developed a special hashing library written in Prolog and C,
relying on the foreign language interface of the Prolog system used. 

As an example for the benefits of hashing consider the following DL
knowledge base.

\begin{Verbatim}[numbers=left,numbersep=2pt,frame=single,commandchars=\\\{\}]
\(\exists\texttt{hasFriend}\ldotp\texttt{Alcoholic} \sqsubseteq \neg\texttt{Alcoholic}\)
\(\exists\texttt{hasParent}\ldotp\neg\texttt{Alcoholic} \sqsubseteq \neg\texttt{Alcoholic}\)

hasParent(i1, i2). hasParent(i1, i3). hasFriend(i2, i3). 
\end{Verbatim}

This TBox states that if someone has a friend who is alcoholic then
she is not alcoholic (she sees a bad example). Furthermore, if someone
has a non-alcoholic parent then she is not alcoholic either (she sees
a good example). The ABox contains two \texttt{hasParent} and one
\texttt{hasFriend} role assertions, but nothing about someone being
alcoholic or non-alcoholic. Interestingly, it is possible to conclude
that \texttt{i1} is non-alcoholic as one of her parents is bound to be
non-alcoholic (as at least one of two people who are friends has
to be non-alcoholic).

For certain ABoxes, the Prolog translation of this knowledge base has
a runtime which is quadratic in the number of \texttt{hasParent}
relations, if the ancestors are stored in a list. An example of such an ABox
is the following:

\begin{alltt}
hasParent(i\(\sb{k}\), i\(\sb{k+1}\)), \(k = 1,\ldots,n\)
hasFriend(i\(\sb{n+1}\), i\(\sb{n+2}\))
hasParent(i\(\sb{n+2+t}\),i\(\sb{n+1+t}\)), \(t = 1,\ldots,n\)
\end{alltt}

Here, for each individual \texttt{i$_{n+1+t}$}, $t > 0$, the Prolog
code checks if the ancestor list contains the term
\texttt{not\_Alcoholic(i$_{n+1+t}$)}. This has a linear cost w.r.t.\ the
size of the ancestor list, assuming that the check for a given
ancestor is performed by a linear scan of the ancestor list. The
quadratic time complexity can be reduced to (nearly) linear when a
hash table is used for storing the ancestors (with a nearly
constant time ancestor check).

\paragraph{Placing the update operations} 

In the translation scheme presented in Section~\ref{generation} the
extension of the ancestor list takes place at the very beginning of each
clause (see e.g.\ line 7 of
Figure~\ref{figure:final_translation_iocaste}). However, updating a hash
structure is more expensive than adding a new element to a list. Therefore
we perform the hash update operation as late as possible, i.e.\ before the
first goal which uses the updated hash value. This, for example,
corresponds to moving the ancestor update operator in line 7 of
Figure~\ref{figure:final_translation_iocaste} to before line 10.

\paragraph{Clause-level categorisation} 

The predicate categorisation (see Section~\ref{classification}) can be
refined so that the characteristics of individual clauses of the predicate
are taken into account. For example, even if a predicate is recursive, some
of its clauses will never lead to recursive calls of this predicate. For
these clauses, there is no point in updating the loop data
structure. Similarly, if $\textit{not\_P}$ cannot be reached from a clause
of $P$, then there is no need for updating the ancestor data structure in
the given clause.

\subsection{Execution parameters}
\label{options}

Most of the optimisations discussed in Section~\ref{generation} can be
enabled/or disabled in DLog, resulting in different generated Prolog
programs. The possible parameter settings are summarised below (the
parameter values allowed are shown in parentheses, the first value is the
default):

\begin{itemize}
\item decompose (yes/no): whether to decompose the bodies  (Section~\ref{decomposition})
\item indexing (yes/no):  whether to generate index predicates for roles (Section~\ref{indexing})
\item projection (yes/no):  whether to calculate supersets (Section~\ref{projection})
\item filtering (yes/no): whether to do filtering (Section~\ref{preprocessing})
\item ground\_optim (yes/no): whether to use ground goal optimisation (Section~\ref{ground})
\item orphan (first/general): whether orphan calls are brought to the
  beginning of the clause or handled in the same way as general concept calls
\item hashing (yes/no): whether to apply hash tables instead of lists for storing ancestors
\end{itemize}

\section{Performance Evaluation}
\label{evaluation}

This section presents a comparison of the performance of DLog with
that of existing DL reasoning systems. The aim here is to obtain an
insight into the practical applicability of the methods described in
Sections \ref{shiqreasoning} and \ref{generation}.

During the tests we have found several anomalies which resulted
in significant performance drops in the case of certain DL
reasoners. We believe that most of these will be fixed by the
respective authors in the near future. Here, however, we took each
system ``as it is'', which means that we examined how their most
up-to-date version performs on various inputs.

Our tests suggest that resolution-based techniques are very promising
in practical applications, where the TBox is relatively small, but the
ABox is huge.

\subsection{Test environment}

We have compared our system with three state-of-the-art description
logic reasoners: RacerPro 1.9.0, Pellet 1.5.0 and the latest version
of KAON2 (August 2007). RacerPro \cite{HMSW04} is a commercial system,
Pellet \cite{pellet} is open-source, while KAON2 \cite{motik06PhD} is
free of charge for universities for non-commercial academic usage. We
did not consider other available reasoning systems mainly because they
are either outdated or they do not support ABox reasoning at all
(e.g.\ this is the case for the widely used FaCT system).

We contacted the authors of each reasoning system in order to obtain the
preferred sequence of API calls for running our tests. From one of
them we did not receive any response so we used the API according to
the documentation. The benchmarks were executed by a test framework we
have specifically developed for testing DLog and other systems. For
each query, we started a clean instance of the reasoner and loaded the
test knowledge base. Next, we measured the time required to
execute the given query. Each query was executed 5 times. The best and
the worst results were excluded and the remaining three were used for
calculating an average. In case the execution was very fast (less
than 10 milliseconds) we have repeated the test 1000 times and
calculated the average. We made sure that all systems return the same
answers for the given queries.

The tests were performed on a Fujitsu-Siemens S7020 laptop with a
Pentium-M 1.7GHz processor, 1.25GB memory, Ubuntu Linux 7.04 with
Linux kernel 2.6.20-16 and SICStus Prolog 3.12.8. The version of the
Java virtual machine, used for KAON2 and Pellet, is 1.5.0.

\subsection{Test Ontologies}

For the benchmark we have used three families of ontologies. The first
one corresponds to the Iocaste problem introduced in
Figure~\ref{figure:iocaste_kb}. For performing this test we have
created a program which generates random Iocaste knowledge bases using
certain initial parameters (number of nodes, branching factor, etc.).

First we used this program to generate ``clean'' Iocaste ontologies,
i.e.\ DL knowledge bases with the ABox containing nothing else but
Iocaste patterns of a given size
(cf.\ Figure~\ref{figure:patterns}). These knowledge bases are named
\texttt{cN}, where \texttt{N} is the size of the pattern. For example,
\texttt{c100} denotes the DL knowledge base with a single TBox axiom
and an ABox containing 102 individuals according to
Figure~\ref{figure:patterns} with $n = 100$.

We have also generated ``noisy'' Iocaste knowledge bases. By ``noise''
we mean irrelevant individuals, role and concept assertions which we
added to the ABoxes (for example pairs in \texttt{hasChild} relation
which are not relevant to the Iocaste problem).  We did this in
order to be able to measure how sensitive the inference engines are to
this kind of ABox modification. By using irrelevant nodes we actually
simulate real life search situations, where the task is to find some
specific instances within huge amounts of data. The noisy Iocaste
knowledge bases are named \texttt{n1}, \texttt{n2}, \texttt{n3}, and
\texttt{n4}. Table~\ref{table:testfilesprop}, on
page~\pageref{table:testfilesprop}, shows the
properties of the clean and noisy Iocaste ontologies, together with their
DLog compilation times, under various parameter settings.

The top four rows of the table contain information on the
knowledge bases. The first row
gives the size of the corresponding DIG files (in megabytes), the second
and third shows the number of TBox and ABox axioms, while the fourth row shows the
time (in seconds) it took for the DLog system to parse the DIG files and convert
them to Prolog terms (load time). We can see that the largest clean ontology
contains a bit more than 20000 ABox axioms, while the largest noisy
ontology has more than 30000 axioms. Each ontology contains only a
single TBox axiom (cf.\ Figure~\ref{figure:iocaste_kb}). 

Subsequent sections in Table~\ref{table:testfilesprop} correspond to
the various parameter settings we have tried the DLog system with. For
each setting, we give the translation time (the time it took to
generate the Prolog program as described in
Sections~\ref{shiqreasoning} and \ref{generation}) and the time it
took the SICStus Prolog system to actually compile the generated
program. The total time is the sum of three values: the load time, the
translation time and the compile time.

In our tests, out of the possible $2^8$ option variations
(cf.\ Section~\ref{options}), we have only used the following ones:

\begin{itemize}
\item \texttt{base}: everything is left as default
\item \texttt{[g(n)]}: do not use ground goal optimisation 
\item \texttt{[p(n)]}: do not use projection  
\item \texttt{[f(n)]}: do not use filtering 
\item \texttt{[i(n)]}: do not use indexing 
\item \texttt{[o(n)]}: handle orphan goals as general concept goals 
\item \texttt{[d(n)]}: do not use decomposition
\item \texttt{[pd(n)]}: do not use projection and decomposition
\item \texttt{[od(n)]}: do not use decomposition and handle orphan goals as general concept goals
\end{itemize}

Table~\ref{table:testfilesprop} shows that most settings have very
similar compile-time properties. However, the setting \texttt{[i(n)]}
disables the generation of index predicates, which
results in a more compact code. This means shorter translation and
compilation times.

Note that the Iocaste ontologies, both the clean and the noisy ones,
use the $\mathcal{ALC}$ DL language.

The second ontology we used for testing is VICODI
\cite{nm03fuzzy}, an ontology about European history,
created manually. It is the result of an EU-IST programme
project. Technically VICODI is an $\mathcal{ALH}$ ontology with a
fairly simple TBox and a huge ABox. We have obtained VICODI from the
VICODI homepage in the form of a Protege project. We have converted
this project into OWL and DIG formats using Protege 3.3.1 and used
these as inputs for the various reasoners. The sizes of these
converted files are 9.5 and 23 megabytes, respectively.

The VICODI TBox consists of 182 concept and 9 role subsumption
axioms. The ABox contains 84550 role axioms and 29614 concept axioms.

Finally, we have also tested our system on LUBM, the Leigh University
Benchmark \cite{lubm}. LUBM was developed 
specifically as a benchmark for the performance analysis of description logic
reasoners. The ontology describes the organisational structure of
universities and it uses the $\mathcal{ALCHI}$ language. The ABox part
can be automatically generated by specifying a size parameter (the
number of universities).

We have used four variants of the  ontology,  denoted by \texttt{lubm1}, \texttt{lubm2},
\texttt{lubm3} and \texttt{lubm4}. All contain 36 concept inclusion, 6
concept equivalence, 5 role inclusion and 4 role equivalence
axioms. They also contain a transitive role and 21 domain and 18 range
restrictions. The number of ABox axioms in the various LUBM ontologies
and their sizes in megabytes are shown in
Table~\ref{table:lubm_prop}.

{
\renewcommand{\arraystretch}{1.2}
\begin{table}[htb]
\centering
\caption{The properties of the LUBM test ontologies}
\begin{tabular}{llrrrr}
\hline\hline
& testfile &\texttt{lubm1}&\texttt{lubm2}&\texttt{lubm3}&\texttt{lubm4}\\
\hline
& OWL filesize (MBytes) & 6.90 & 15.84 & 23.24 & 32.97\\
& DIG filesize (MBytes) & 16.57 & 37.99 & 58.80 & 81.74\\
& concept assertions & 18128 & 40508 & 58897 & 83200\\
& role assertions & 49336 & 113463 & 166682 & 236514\\ 
\hline\hline
\end{tabular}
\label{table:lubm_prop}
\end{table}
}

\subsection{Results}

We now present the performance results for the Iocaste, VICODI and LUBM
ontologies. For each case we give a detailed explanation of the
results.

\subsubsection{The Iocaste ontologies}

The performance results of the DLog system on the Iocaste ontologies are
presented in Table~\ref{table:result}, on page~\pageref{table:result}.
Here, we show four values for each parameter setting. Three values (loop,
ancres, orphancres) give statistical information, describing the number of
loop eliminations, ancestor resolutions and orphan ancestor resolutions
(ancestor resolutions in orphan goals). Finally, we show the most important
value, the runtime in seconds.

With the best settings (\texttt{base}) DLog solved each task within a
fraction of a second, including the biggest clean and the biggest
noisy cases as well. Actually, using projection
(cf.\ Section~\ref{projection}) seems to be a key factor, as without
it the performance drops dramatically. We can also notice that the
lack of multiple argument indexing (cf.\ Section~\ref{indexing}) has
very negative effect on the execution time.  With the last
parameter setting DLog was unable to solve all tasks (this is denoted by
-). In this setting we do not use decomposition and we treat orphans
as normal predicates. The reason why this setting has the worst
performance is that it causes the orphan goal \texttt{not\_Ans(D, B)}
to be placed at the very end of the corresponding
\texttt{det\_Patricide/2} clause
(cf.\ Figure~\ref{figure:ground_goal_handling}).

We have compared the performance of DLog, using the \texttt{base}
parameter setting, with that of the other three reasoning
systems. These aggregate results are shown in
Table~\ref{table:aggregate}. In this table, as in the rest of the
paper, whenever we compare various systems/options, the best total
time is given in \textbf{bold}.

{
\renewcommand{\arraystretch}{1.25}
\begin{table}[htb]
\caption{Aggregate results for the Iocaste ontologies (times in seconds)}
\begin{center}\begin{tabular}{llrrrrrrrrr}
\hline
& Testfile &c10&c20&c100&c1000&c10$^4$&n1&n2&n3&n4\\
\hline\hline
\multirow{3}{*}{\begin{sideways}DLog\end{sideways}} & load &0.07&0.08&0.15&0.33&1.47&0.14&0.24&0.38&1.99\\
& runtime &0.00&0.00&0.00&0.01&0.11&0.00&0.00&0.00&0.02\\
& \textbf{total} &\textbf{0.07}&\textbf{0.08}&\textbf{0.15}&\textbf{0.34}&\textbf{1.58}&\textbf{0.14}&\textbf{0.24}&\textbf{0.38}&\textbf{2.01}\\
\hline
\multirow{3}{*}{\begin{sideways}KAON2\end{sideways}} & load &0.45&-&-&-&-&0.46&0.60&0.97&2.36\\
& runtime &0.72&-&-&-&-&0.67&4.72&63.60&425.17\\
& total &1.17&-&-&-&-&1.13&5.32&64.57&427.53\\
\hline
\multirow{3}{*}{\begin{sideways}RacerPro\end{sideways}} & load&0.01&0.01&0.03&0.51&4.68&0.03&0.10&0.68&6.04\\
& runtime &0.07&0.09&0.15&1.68&79.91&0.10&0.47&1.76&23.25\\
& total &0.08&0.10&0.18&2.19&84.59&0.13&0.57&2.44&29.29\\
\hline
\multirow{3}{*}{\begin{sideways}Pellet\end{sideways}} & load &1.27&1.35&1.44&2.19&-&1.32&1.53&2.36&5.92\\
& runtime &0.19&0.32&1.31&456.40&-&0.33&0.80&2.48&23.95\\
& total &1.46&1.68&2.76&458.58&-&1.65&2.33&4.84&29.87\\
\hline
\end{tabular}
\end{center}
\label{table:aggregate}
\end{table}
}

Here, for each Iocaste ontology and each reasoning system we give the
following values: the load time, the runtime and their sum, the total
time. The load time in the case of the DLog system includes parsing,
translating and SICStus compilation (cf.\
Table~\ref{table:testfilesprop}). For the other systems, load time is the
time it takes to reach the point when a query can be posed (we do not have
detailed information what the systems are actually doing here other than
parsing the input). Note that the size of the input given to DLog is bigger
than that for the other systems, as the DIG format is more verbose than the
OWL one.

KAON2 showed a very poor performance on the clean Iocaste ontologies:
\texttt{c10} was the only test case it was able to solve within the
time limit. To understand better what is going on we have tested KAON2
with clean Iocaste patterns of length $n=11, \ldots 15$.  The results
of this experiment are summarised in
Table~\ref{table:kaon2_iocaste}. Here we can see that KAON2 scales
very badly when increasing the size of the pattern. Note that the
increase between the consecutive test cases is minimal: the ontology
\texttt{c}$_{i+1}$ has one more instance
and two more role assertions, than the ontology \texttt{c}$_i$.

{
\renewcommand{\arraystretch}{1.2}
\begin{table}[htb]
\centering
\caption{Performance of KAON2 on the Iocaste ontologies (times in seconds)}
\begin{tabular}{llrrrrrr}
\hline\hline
& test file &c10&c11&c12&c13&c14&c15\\
\hline
& runtime &0.72 &0.68& 3.51 & 16.18 & 17.03 & 309.91\\
\hline\hline
\end{tabular}
\label{table:kaon2_iocaste}
\end{table}
}

Another interesting thing is that KAON2 actually ran \emph{faster} on
ontology \texttt{c11} than on \texttt{c10}. It also seems to scale
reasonably well (at least comparing to the other cases) from
\texttt{c13} to \texttt{c14}.

In the case of the noisy ontologies KAON2 also behaved
strangely. Although it was able to solve all the tests within 10
minutes, we definitely expected KAON2 to solve these cases much
faster. This is because KAON2 uses resolution, similarly to DLog,
which theoretically means that it should be resistant to noise to a
large extent.

We have actually learnt \cite{boris} that in KAON2 many things depend
on the order of rule applications, something which is a very difficult
task to set properly. Choosing a bad order may result in a big
performance drop. This can be a reason for the anomalies we have seen
in the case of the Iocaste ontologies.

RacerPro was able to solve each test case within the time limit. It
showed a very consistent behaviour both in the case of the clean and
the noisy variants. From the test results it seems that RacerPro
scales linearly although with a much worse constant than DLog. As a
tableau based reasoner, RacerPro showed a surprisingly good
performance in the case of the largest noise variant \texttt{n4},
as well (23.25 seconds).

Pellet was nearly as fast as RacerPro in the case of the noisy
variants. On the clean Iocaste ontologies, however, it was clearly
outclassed by RacerPro as Pellet was not able to solve \texttt{c10000}
within 10 minutes and in all the other cases it was fairly slow as
well. We have also found that in several cases Pellet threw certain Java
exceptions on the very same input it successfully processed earlier or
later. We guess that this can be  due to the use of Java hash codes.

As a conclusion, we state that DLog is several orders of magnitude faster
on the Iocaste benchmark than the other ABox reasoning system examined,
considering both the reasoning time (runtime) and the total execution time.

\paragraph{Using databases}

Instead of creating large internal Prolog databases for storing the
ABox, we can actually put the content of the ABox into a real database
and use DLog to generate a program from only the TBox. We have used
this technique for the largest noisy variant \texttt{n4} with the option setting
\texttt{[i(n)]}. Here, according to Table~\ref{table:testfilesprop}
and Table~\ref{table:result}, we use 1.41 seconds for the compilation and
0.02 seconds for runtime. By using a database for storing the content
of the ABox, we expect drastic decrease in the total compilation time,
with a slight increase in the execution time.

The actual (MySQL) database contains 15 tables, of which 10
correspond to concepts (i.e.\ they have only one column), while the
rest corresponds to roles (i.e.\ they have two columns). Note that
because of the top-down execution, the Prolog program generated from
the TBox actually accesses only tables \texttt{Patricide},
\texttt{not\_Patricide} and \texttt{hasChild}.  We have 5058 pairs in
\texttt{hasChild} relation, 855 instances are known to be patricide
and 314 are known to be non-patricide.

The performance results are summarised in
Table~\ref{table:database}. The database variant of \texttt{n4}
enumerated all the instances of concept \texttt{Ans} in 0.36
seconds. This, compared to the original 0.02 seconds is much
slower. However, the time we spent at compile-time was altogether 0.07
seconds, resulting in a total execution time of 0.43 seconds.

{
\renewcommand{\arraystretch}{1.2}
\begin{table}[htb]
\centering
\caption{The in-memory and database variants of \texttt{n4} (times in seconds)}
\begin{tabular}{llccccc}
\hline\hline
& DLog & load time & translation time & compilation time & runtime & total \\
\hline
& in-memory &0.88& 0.52 & 0.01 & 0.02 & 1.43\\
& database &0.05& 0.01 & 0.01 & 0.36 & 0.43\\
\hline\hline
\end{tabular}
\label{table:database}
\end{table}
}

From the figures of Table~\ref{table:database}, 
one may think that the main benefit of using a database for storing
the ABox lies in reducing the compilation time. However, we believe
that by using further optimisations, such as transforming query
predicates to database queries, the version using a database can also
produce better execution times than the in-memory variant.

We have thus shown that it is feasible to use a database for storing the
content of an ABox, and, in the case of the Iocaste ontologies, the
database approach provides better overall performance than the variant
which stores the ABox as Prolog facts.

\paragraph{Hashing}

We have also measured how much is the execution time affected by the
data structures used for storing ancestor goals. For this, we have
picked the best parameter setting, \texttt{base}, and run the tests by
replacing the hash tables with simple lists as assumed throughout
Section~\ref{generation}. The results are summarised in Table
\ref{table:datastructures} together with the hash-based results from
Table~\ref{table:result}.

{
\renewcommand{\arraystretch}{1.2}
\begin{table}[htb]
\centering
\caption{The effect of hashing on the Iocaste ontologies (times in seconds)}
\begin{tabular}{llrrrrrrrrr}
\hline\hline
& testfile &c10&c20&c100&c1000&c10000&n1&n2&n3&n4\\
\hline
& hash &0.00& 0.00 & 0.00 & \textbf{0.01} & \textbf{0.11}  & 0.00 & 0.00 & 0.00 & \textbf{0.02}\\
& list &0.00& 0.00 & 0.00 & 0.11 & 10.52 & 0.00 & 0.00 & 0.00 & 0.03\\
\hline\hline
\end{tabular}
\label{table:datastructures}
\end{table}
}

We can see that in the case of the large Iocaste patterns
(\texttt{c1000} and \texttt{c10000}) the hashing implementation
outperforms the solution using lists significantly.

\subsubsection{VICODI}

To test the performance of the DL reasoners on the VICODI ontology, we
used the following two queries, borrowed from \cite{motik06PhD}:

\begin{Verbatim}[numbersep=2pt,frame=single,commandchars=\\\{\}]
VQ\(\sb{1}\)(X)     \(\equiv\) Individual(X)
VQ\(\sb{2}\)(X,Y,Z) \(\equiv\) Military-Person(X), hasRole(Y, X), related(X, Z)
\end{Verbatim}

The results are summarised in Table~\ref{table:aggregate_vicodi}. The
DLog system used 8.61 seconds to load the VICODI ontology. From this,
4.91 seconds were actually spent on parsing the input and transforming
the DL knowledge base into DL predicates. DLog used 3.38 seconds
to generate the Prolog code. The rest (0.36 seconds) was used by
SICStus Prolog to compile the generated Prolog program. Having loaded the
knowledge base,
the execution was nearly instantaneous: 0.05 seconds for
\texttt{VQ$_1$} and 0.09 seconds for \texttt{VQ$_2$}.

{
\renewcommand{\arraystretch}{1.2}
\begin{table}[htb]
\centering
\caption{Aggregate results for the VICODI ontology (times in seconds)}
\begin{center}\begin{tabular}{llrrrr}
\hline\hline
&& DLog & KAON2 & RacerPro & Pellet\\
\hline
& load time & 8.61 & \textbf{5.88} & 34.96 & -\\
\hline
\multirow{2}{*}{\begin{sideways}\texttt{VQ$_1$}\end{sideways}} & runtime  & \textbf{0.05} & 0.36 & 76.48 & -\\
& total & 8.66 & \textbf{6.24} & 111.44 & -\\
\hline
\multirow{2}{*}{\begin{sideways}\texttt{VQ$_2$}\end{sideways}} & runtime & \textbf{0.09} & 0.35 & 76.61 & -\\
& total & 8.70 & \textbf{6.23} & 111.57 & -\\
\hline\hline
\end{tabular}
\end{center}
\label{table:aggregate_vicodi}
\end{table}
}

RacerPro spent nearly 35 seconds for loading the ontology. The
execution of \texttt{VQ$_1$} was fairly slow: it took 76.48 seconds to
enumerate all the instances of class \texttt{Individual}. We also
measured the execution time by first checking the consistency of the
ABox, then preparing the query answering engine before posing the
query itself. The consistency check took 65.86 seconds, the query
engine preparation 1.29 seconds and the query itself 8.25
seconds. This results in a total time of 75.40, which (as expected) is
comparable to the total time of simply loading and querying.

In the case of \texttt{VQ$_2$}, RacerPro produced nearly the same
results. We believe this is because RacerPro spends most of its time
in checking ABox consistency, which requires the same amount of time
in both queries.

Pellet was unable to answer any of the queries within the 10 minutes
time limit. We believe that Pellet properly read the input as we could
formulate VICODI queries which Pellet was able to answer, but this was
not the case for queries \texttt{VQ$_1$} and \texttt{VQ$_2$}. We have
also tried the Windows version of Pellet, but we have experienced the
same behaviour. Actually, in \cite{motik06PhD} Pellet 1.3 beta was
tested against the VICODI ontology with acceptable results. Thus it
seems that recent changes in the Pellet reasoner are responsible for the
performance drop we have found.

KAON2 could not read the VICODI OWL input we generated with Protege: we got
an exception. To be able to run the tests, we used a version of the
ontology specifically made for KAON2 (available on the VICODI
website). This version of the ontology is physically twice as large as the
normal OWL dialect (i.e.\ it is 18MB). On this, KAON2 was very
convincing. It took 5.88 seconds to load the ontology and 0.36 seconds to
answer query \texttt{VQ$_1$}. Answering query \texttt{VQ$_2$} was even a
bit faster, it required 0.35 seconds. We note that neither RacerPro, nor
Pellet supports this format of the VICODI ontology, so the comparison is
not fully fair.

To conclude we can say that KAON2 had the best overall performance
when dealing with the VICODI ontology. DLog answered the queries even
faster than KAON2, but for the compile-time tasks we needed a few
seconds more.  We note,
however, that the DIG input is larger by 5MB than the KAON2 version of
the VICODI ontology which naturally results in more load time work
for us.

\subsubsection{LUBM}

We have tested the LUBM ontologies with the following two queries:

\begin{Verbatim}[numbersep=2pt,frame=single,commandchars=\\\{\}]
LQ\(\sb{1}\)(X)     \(\equiv\) Person(X), hasAlumnus(h{}ttp://www.University0.edu, X)
LQ\(\sb{2}\)(X,Y)   \(\equiv\) Chair(X), Department(Y), worksFor(X, Y), 
\hspace*{0.25cm}            subOrganizationOf(Y, h{}ttp://www.University0.edu)
\end{Verbatim}

These queries were selected from the 14 test queries available on the
LUBM homepage. Answering \texttt{LQ$_1$} requires proper handling of
role subsumptions and inverses. \texttt{LQ$_2$} is interesting as it is
a complex conjunctive query. The performance results are summarised in
Table~\ref{table:lubm}. For DLog we used the \texttt{base} parameter
setting, i.e.\ we apply all optimisations. 

Loading \texttt{lubm1} took DLog 6.96 seconds. From this it took 5.29
seconds to read the DIG file and create the DL predicates. We
needed 1.47 seconds to generate the Prolog code. Finally, it took
0.18 seconds for SICStus Prolog to compile the generated
code. Answering \texttt{LQ$_1$} required only 0.26 seconds, while
\texttt{LQ$_2$} was answered instantaneously. 

Loading the larger \texttt{lubm} ontologies required much
more time, and the time needed for answering \texttt{LQ$_1$}  increased roughly
in proportion with the load time. However, the second query,
\texttt{LQ$_2$}, was executed instantaneously on all of the LUBM
ontologies.

{
\renewcommand{\arraystretch}{1.3}
\begin{table}[htb]
\centering
\caption{Aggregate results for the LUBM ontologies (times in seconds)}
\begin{center}\begin{tabular}{llrrrrrrrr}
\hline
& Query & \multicolumn{4}{c}{\texttt{LQ$_1$}} & \multicolumn{4}{c}{\texttt{LQ$_2$}}\\
\hline
& Testfile &lubm1&lubm2&lubm3&lubm4&lubm1&lubm2&lubm3&lubm4\\
\hline\hline
\multirow{3}{*}{\begin{sideways}DLog\end{sideways}} & load& 6.96 & 11.83 & 15.79 & 21.34 & 6.96 & 11.83 & 15.79 & 21.34\\
& runtime & \textbf{0.26} & \textbf{0.63} & \textbf{0.92} & \textbf{1.32} & \textbf{0.00} & \textbf{0.00} & \textbf{0.00}&\textbf{0.00} \\
& total & \textbf{7.22} & \textbf{12.46} & \textbf{16.71} & \textbf{22.66} & \textbf{6.96} & \textbf{11.83} & \textbf{15.79} & \textbf{21.34}\\
\hline
\multirow{3}{*}{\begin{sideways}KAON2\end{sideways}} & load& 6.56 & 13.56 & 20.66 & 28.73 & 6.56 & 13.56 & 20.66 & 28.73\\
& runtime & 0.70 & 0.99 & 1.33 & 1.69 & 0.66 & 0.93 & 1.27 & 1.62 \\
& total & 7.26 & 14.55 & 21.99 & 30.42 & 7.12 & 14.49 & 21.93 & 30.35\\
\hline
\multirow{3}{*}{\begin{sideways}RacerPro\end{sideways}} & load& 24.84 & 91.57 & X & X & 24.84 & 91.57 & X & X\\
& setup & 29.41 & 112.29 & X & X & 29.41 & 112.29 & X & X\\
& runtime & 2.69 & 5.89 & X & X & 4.07 & 7.49 & X & X\\
& total & 56.94 & 209.75 & X & X & 58.32 & 211.35 & X & X\\
\hline
\multirow{3}{*}{\begin{sideways}Pellet\end{sideways}} & load& 16.76 & - & - & - & 16.76 & - & - & -\\
& setup & 4.84 & - & - & - & 4.84 & - & - & -\\
& runtime & 27.09 & - & - & - & 27.19 & - & - & -\\
& total & 48.69 & - & - & - & 48.79 & - & - & -\\
\hline
\end{tabular}
\end{center}
\label{table:lubm}
\end{table}
}

Note that a significant part of the compile-time work for DLog is the
generation of the index predicates (cf.\ Section~\ref{indexing}). This
effectively doubles the number of the role assertions. The use of this
optimisation becomes unnecessary if we use a Prolog system with
multiple argument indexing or we store the ABox externally in a
database -- which is the preferred use of the DLog system. Also note
that the DIG input given to DLog is significantly larger
(cf.\ Table~\ref{table:lubm_prop}) than the OWL input the other
reasoning systems use.

KAON2 behaved very nicely on the LUBM ontologies: it was able to
answer both queries \texttt{LQ$_1$} and \texttt{LQ$_2$} on all
ontologies very quickly. We note that the official version of
KAON2 was  actually unable to solve the LUBM tests due to certain technical
problems. After contacting the author, the bugs causing this failure
were quickly fixed.

RacerPro managed to solve both queries on the ontologies \texttt{lubm1}
and \texttt{lubm2} with total times between 56.94 and 211.35
seconds. Here we can see the usual pattern: there is no real
difference between the execution times of \texttt{LQ$_1$} and
\texttt{LQ$_2$}. Unfortunately, on the bigger ontologies, RacerPro had
memory problems. 

Pellet solved the queries only on the smallest LUBM ontology. This
required 48.69 and 48.79 seconds. On the larger ontologies Pellet did
not signal memory problems, but simply ran out of the 10 minutes time
limit. 

Note that in the case of RacerPro and Pellet we also show the setup
time which is the time of the ABox consistency tests these systems
always perform at startup. We can see that RacerPro really spends most of
its time in this phase. On the other hand, Pellet spends fairly little
on consistency checking.

To sum up the results of the LUBM tests we can say that DLog and KAON2 were the
only systems able to solve both queries on all LUBM
ontologies. Of these two systems DLog emerges as the winner by a
small margin (although in terms of runtime DLog is much faster). It is
again worth noticing that, as in other cases, the execution times of DLog and
KAON2 are very good compared to those of the tableau-based reasoners.

\section{Future work}
\label{future}

In this section we give a brief overview of future work on the DLog system, for
improving its performance as well as extending its capabilities.

\paragraph{Partial evaluation}
Recall property (p2) in Definition~\ref{def:dlike}, which states that
each DL clause either contains a binary literal or it is ground, or it
contains no constants and exactly one variable. Note that the body of
the latter type of clauses is actually a conjunction of concept
goals. It is because of such clauses that the ancestor list can be
non-ground.

One can apply partial evaluation techniques, such as in
\cite{DBLP:conf/ecai/Venken84}, to unfold clauses containing no binary
literals. Such unfolding should be continued until each clause
contains either a binary literal or a unary literal corresponding to
an ABox predicate. Both such types of literals ensure that all their
arguments are ground upon exit. This means that we no longer need to
cater for executing unary predicates with uninstantiated arguments
(except for the outermost query predicate). Also, the ancestor list
becomes ground, which simplifies hashing. The absence of logic
variables in the data structures opens up the possibility of compiling
into Mercury code, rather than Prolog, which is expected to execute
much faster than standard Prolog. Some initial results on work in this
direction are reported in \cite{iclp2008}.

\paragraph{Tabling in the presence of ancestors}
It is often the case that the same goal is invoked several times during
query execution. Tabling \cite{xsb} can be used to prevent unnecessary
execution of such goals. Note, however, that unary goals in DLog have an
additional ancestor list argument. In most cases this additional argument
differs from call to call, making traditional tabling techniques
useless. Therefore it looks worthwhile to develop special  tabling methods
for DLog execution, which keep track of those ancestors that are actually
required for the successful completion of a given goal invocation.  This is
expected to improve the execution of queries on knowledge bases heavily
relying on ancestor resolution, such as the \texttt{Alcoholic} example of
Section~\ref{further}.

\paragraph{Relaxing the Unique Name Assumption}
Allowing different individual names to denote the same
individual is very important, as web-based reasoning requires exactly
this.  However, dismissing UNA has serious implications on the
transformation process. 

First, the definition of the DL program (Section~\ref{specpttp}) has to be
modified: we can no longer omit the contrapositives with an equality or an
inequality in the head. Such clauses will become parts of the two Prolog
predicates for inferring the equality and inequality of individuals. The
inequality predicate has to be further extended with some generic code, as
explained in Section~\ref{general}, which has to read the whole ABox. This,
however, goes against the main idea of the work presented here: focusing on
a small part of the ABox during query execution.

A possible compromise is to support a user-defined equality relation. This
would mean that the user can specify an equality relation for individual
names. The transitive-symmetric-reflexive closure of this relation is then
used as the equality, while its complement becomes the inequality
relation. In this case we can retain the transformation process, changing
only the code generated for the invocations of equality and inequality
relations. However, a user-defined equality can be inconsistent with the
rest of the knowledge base: e.g.\ while the user specifies that $i_1 =
i_2$, the ABox can contain assertions $C(i_1)$ and $\neg C(i_2)$. Therefore
this approach needs further investigation.

\paragraph{Other improvements}
As explained in Section~\ref{architecture}, presently we apply a simple
query ordering technique for execution of conjunctive queries. This can be
improved using the techniques of \cite{motik06PhD}. Furthermore, we
presently do not use statistical information in query ordering. Techniques
relying on statistical data are well researched in the context of
databases.  The use of such techniques in DLog should be investigated as
these can result in significant increase of execution performance.

The transformation scheme for role predicates, discussed in
Section~\ref{roles}, can be made more efficient by e.g.\ removing redundant
role axioms.

We also plan the extension of the external interfaces of DLog to
support new input formats, in addition to the DIG standard. We
presently have an experimental interface to support database
queries. Further work is needed to implement general interfaces to
database systems, including optimisations such as passing appropriate
conjunctive queries to database management systems, instead of single
queries.

\section{Summary and conclusions}
\label{conclusion}

In this paper we have presented the description logic reasoning system
DLog. Unlike the traditional tableau-based approach, DLog determines
the instances of a given $\mathcal{SHIQ}$ concept by transforming the
knowledge base into a Prolog program. This technique allows us to use
top-down query execution and to store the content of the ABox
externally in a database, something which is essential when large
amounts of data are involved.

We have compared DLog with the best available ABox reasoning
systems.  The test results show that DLog is significantly faster than traditional tableau-based
reasoning systems in all our benchmarks. In most of the cases DLog also outperforms KAON2,
which uses a similar resolution based approach as DLog.

We note that trends and behaviours of the various algorithms on
certain inputs can be more interesting than the actual runtimes (as
the latter can be very much affected by specific implementation
details). Considering also this, we argue that DLog and KAON2 are much
better suited for large data sets than tableau-based reasoners.

As an overall conclusion, we believe that our results are very
promising and clearly show that  description logic is an interesting
application field for Prolog and logic programming.

\section*{Acknowledgements}

We would like to thank all people who have contributed to the
development of the DLog system. Zsolt Nagy worked on Prolog based DL
reasoning with empty TBoxes. P{\'e}ter Boros{\'a}n created the first
version of the module responsible for transforming DIG input to DL
predicates. We are very grateful to Zsolt Zombori who extended this
module to support the $\mathcal{SHIQ}$ language and still works on the
DLog project. Zsolt also contributed to
Section~\ref{shiqprinciples} of this paper.

We are thankful to the developers of the existing reasoning systems
for their time and help during the testing phase. We are especially
grateful to Boris Motik for his valuable comments.

We would also like to thank Andr{\'a}s Gy{\"o}rgy B{\'e}k{\'e}s,
Tam{\'a}s Benk{\H o}, Zsolt Nagy and Zsolt Zombori for their
insightful comments on draft versions of this paper.

Finally, we would like to thank the anonymous reviewers whose valuable
suggestions were of great help in  improving the quality of the paper.

\bibliography{dlog}

\appendix
{
\renewcommand{\arraystretch}{1.08}
\begin{table}[htb]
\caption{Properties of the test files (times in seconds)}
\begin{center}\begin{tabular}{llrrrrrrrrr}
\hline
& Testfile &c10&c20&c100&c1000&c10000&n1&n2&n3&n4\\
\hline\hline
& size(MB) &0.00&0.00&0.02&0.19&1.88&0.01&0.06&0.35&2.82\\
& TBox &1&1&1&1&1&1&1&1&1\\
& ABox &22&42&202&2002&20002&100&646&3897&30797\\
\hline\hline
& load(sec) &0.04&0.06&0.13&0.23&0.78&0.12&0.21&0.23&0.88\\
\hline\hline
\multirow{3}{*}{\begin{sideways}\texttt{base}\end{sideways}} & translate &0.00&0.01&0.01&0.08&0.67&0.01&0.02&0.13&1.10\\
& compile &0.03&0.01&0.01&0.02&0.02&0.01&0.01&0.02&0.01\\
& \textbf{total} &\textbf{0.07}&\textbf{0.08}&\textbf{0.15}&\textbf{0.33}&\textbf{1.47}&\textbf{0.14}&\textbf{0.24}&\textbf{0.38}&\textbf{1.99}\\
\hline
\multirow{3}{*}{\begin{sideways}\texttt{g(n)}\end{sideways}} & translate &0.01&0.00&0.01&0.08&0.69&0.00&0.04&0.18&1.04\\
& compile &0.00&0.02&0.02&0.01&0.01&0.02&0.02&0.02&0.02\\
& \textbf{total} &\textbf{0.05}&\textbf{0.08}&\textbf{0.16}&\textbf{0.32}&\textbf{1.48}&\textbf{0.14}&\textbf{0.27}&\textbf{0.43}&\textbf{1.94}\\
\hline
\multirow{3}{*}{\begin{sideways}\texttt{p(n)}\end{sideways}} & translate &0.01&0.01&0.02&0.08&0.71&0.01&0.02&0.15&1.29\\
& compile &0.01&0.01&0.01&0.02&0.01&0.01&0.02&0.01&0.02\\
& \textbf{total} &\textbf{0.06}&\textbf{0.08}&\textbf{0.16}&\textbf{0.33}&\textbf{1.50}&\textbf{0.14}&\textbf{0.25}&\textbf{0.39}&\textbf{2.19}\\
\hline
\multirow{3}{*}{\begin{sideways}\texttt{f(n)}\end{sideways}} & translate &0.01&0.01&0.01&0.07&0.72&0.01&0.02&0.16&1.05\\
& compile &0.01&0.01&0.01&0.01&0.02&0.01&0.02&0.01&0.03\\
& \textbf{total} &\textbf{0.06}&\textbf{0.08}&\textbf{0.15}&\textbf{0.31}&\textbf{1.52}&\textbf{0.14}&\textbf{0.25}&\textbf{0.40}&\textbf{1.96}\\
\hline
\multirow{3}{*}{\begin{sideways}\texttt{i(n)}\end{sideways}} & translate &0.00&0.01&0.01&0.02&0.30&0.01&0.04&0.06&0.52\\
& compile &0.01&0.01&0.01&0.02&0.02&0.01&0.01&0.00&0.01\\
& \textbf{total} &\textbf{0.05}&\textbf{0.08}&\textbf{0.15}&\textbf{0.27}&\textbf{1.10}&\textbf{0.14}&\textbf{0.26}&\textbf{0.29}&\textbf{1.41}\\
\hline
\multirow{3}{*}{\begin{sideways}\texttt{o(n)}\end{sideways}} & translate &0.00&0.00&0.01&0.08&0.70&0.01&0.04&0.11&1.10\\
& compile &0.01&0.01&0.01&0.00&0.03&0.01&0.01&0.02&0.01\\
& \textbf{total} &\textbf{0.05}&\textbf{0.07}&\textbf{0.15}&\textbf{0.31}&\textbf{1.51}&\textbf{0.14}&\textbf{0.26}&\textbf{0.36}&\textbf{1.99}\\
\hline
\multirow{3}{*}{\begin{sideways}\texttt{d(n)}\end{sideways}} & translate &0.00&0.00&0.01&0.08&0.71&0.01&0.02&0.15&1.05\\
& compile &0.00&0.01&0.01&0.01&0.01&0.01&0.02&0.01&0.02\\
& \textbf{total} &\textbf{0.04}&\textbf{0.07}&\textbf{0.15}&\textbf{0.32}&\textbf{1.50}&\textbf{0.14}&\textbf{0.25}&\textbf{0.39}&\textbf{1.95}\\
\hline
\multirow{3}{*}{\begin{sideways}\texttt{pd(n)}\end{sideways}} & translate &0.01&0.02&0.02&0.07&0.71&0.01&0.02&0.12&1.06\\
& compile &0.01&0.01&0.02&0.02&0.02&0.01&0.01&0.02&0.01\\
& \textbf{total} &\textbf{0.06}&\textbf{0.09}&\textbf{0.17}&\textbf{0.32}&\textbf{1.51}&\textbf{0.14}&\textbf{0.24}&\textbf{0.37}&\textbf{1.95}\\
\hline
\multirow{3}{*}{\begin{sideways}\texttt{od(n)}\end{sideways}} & translate &0.01&0.01&0.01&0.08&0.73&0.02&0.03&0.12&1.04\\
& compile &0.01&0.01&0.01&0.02&0.02&0.02&0.01&0.01&0.04\\
& \textbf{total} &\textbf{0.06}&\textbf{0.08}&\textbf{0.15}&\textbf{0.33}&\textbf{1.53}&\textbf{0.16}&\textbf{0.25}&\textbf{0.36}&\textbf{1.96}\\
\hline
\end{tabular}
\end{center}
\label{table:testfilesprop}
\end{table}
}

{
\renewcommand{\arraystretch}{0.95}
\begin{table}[htb]
\caption{Dlog results for the Iocaste ontologies (times in seconds)}
\begin{center}\begin{tabular}{llrrrrrrrrr}
\hline
& Testfile &c10&c20&c100&c1000&c10000&n1&n2&n3&n4\\
\hline\hline
\multirow{4}{*}{\begin{sideways}\texttt{base}\end{sideways}} & loop &0&0&0&0&0&0&2&6&0\\
& ancres &0&0&0&0&0&0&0&0&0\\
& orphanc &18&38&198&1998&19998&81&130&448&3197\\
& \textbf{runtime} &\textbf{0.00}&\textbf{0.00}&\textbf{0.00}&\textbf{0.01}&\textbf{0.11}&\textbf{0.00}&\textbf{0.00}&\textbf{0.00}&\textbf{0.02}\\
\hline
\multirow{4}{*}{\begin{sideways}\texttt{g(n)}\end{sideways}} & loop &0&0&0&0&0&0&2&6&0\\
& ancres &0&0&0&0&0&0&0&0&0\\
& orphanc &18&38&198&1998&19998&81&130&448&3197\\
& \textbf{runtime} &\textbf{0.00}&\textbf{0.00}&\textbf{0.00}&\textbf{0.01}&\textbf{0.12}&\textbf{0.00}&\textbf{0.00}&\textbf{0.00}&\textbf{0.02}\\
\hline
\multirow{4}{*}{\begin{sideways}\texttt{p(n)}\end{sideways}} & loop &0&0&0&0&0&0&2&6&0\\
& ancres &0&0&0&0&0&0&0&0&0\\
& orphanc &99&399&9999&10$^6$&10$^8$&81&130&448&3197\\
& \textbf{runtime} &\textbf{0.00}&\textbf{0.00}&\textbf{0.04}&\textbf{4.14}&\textbf{500.42}&\textbf{0.00}&\textbf{0.00}&\textbf{0.00}&\textbf{0.01}\\
\hline
\multirow{4}{*}{\begin{sideways}\texttt{f(n)}\end{sideways}} & loop &0&0&0&0&0&0&2&6&0\\
& ancres &0&0&0&0&0&0&0&0&0\\
& orphanc &18&38&198&1998&19998&81&130&448&3197\\
& \textbf{runtime} &\textbf{0.00}&\textbf{0.00}&\textbf{0.00}&\textbf{0.01}&\textbf{0.11}&\textbf{0.00}&\textbf{0.00}&\textbf{0.00}&\textbf{0.02}\\
\hline
\multirow{4}{*}{\begin{sideways}\texttt{i(n)}\end{sideways}} & loop &0&0&0&0&0&0&2&6&0\\
& ancres &0&0&0&0&0&0&0&0&0\\
& orphanc &18&38&198&1998&19998&81&130&448&3197\\
& \textbf{runtime} &\textbf{0.00}&\textbf{0.00}&\textbf{0.00}&\textbf{0.10}&\textbf{9.58}&\textbf{0.00}&\textbf{0.00}&\textbf{0.00}&\textbf{0.02}\\
\hline
\multirow{4}{*}{\begin{sideways}\texttt{o(n)}\end{sideways}} & loop &0&0&0&0&0&0&2&6&0\\
& ancres &0&0&0&0&0&0&0&0&0\\
& orphanc &9&19&99&999&9999&45&47&53&46\\
& \textbf{runtime} &\textbf{0.00}&\textbf{0.00}&\textbf{0.00}&\textbf{0.01}&\textbf{0.12}&\textbf{0.00}&\textbf{0.00}&\textbf{0.00}&\textbf{0.02}\\
\hline
\multirow{4}{*}{\begin{sideways}\texttt{d(n)}\end{sideways}} & loop &0&0&0&0&0&0&2&9&0\\
& ancres &0&0&0&0&0&0&0&0&0\\
& orphanc &18&38&198&1998&19998&81&130&445&3197\\
& \textbf{runtime} &\textbf{0.00}&\textbf{0.00}&\textbf{0.00}&\textbf{0.01}&\textbf{0.13}&\textbf{0.00}&\textbf{0.00}&\textbf{0.00}&\textbf{0.02}\\
\hline
\multirow{4}{*}{\begin{sideways}\texttt{pd(n)}\end{sideways}} & loop &0&0&0&0&0&0&2&9&0\\
& ancres &0&0&0&0&0&0&0&0&0\\
& orphanc &99&399&9999&10$^6$&10$^8$&81&130&445&3197\\
& \textbf{runtime} &\textbf{0.00}&\textbf{0.00}&\textbf{0.04}&\textbf{4.15}&\textbf{502.98}&\textbf{0.00}&\textbf{0.00}&\textbf{0.00}&\textbf{0.02}\\
\hline
\multirow{4}{*}{\begin{sideways}\texttt{od(n)}\end{sideways}} & loop &0&0&-&-&-&0&2&43&0\\
& ancres &0&0&-&-&-&0&0&0&0\\
& orphanc &256&2302&-&-&-&0&0&0&0\\
& \textbf{runtime} &\textbf{0.00}&\textbf{3.56}&-&-&-&\textbf{0.01}&\textbf{0.03}&\textbf{0.18}&\textbf{4.11}\\
\hline
\end{tabular}
\end{center}
\label{table:result}
\end{table}
}
\clearpage

\end{document}